\setlist[description]{font=\normalfont\itshape\textbullet\space}
\renewcommand{\paragraph}[1]{\vspace{6pt} \noindent \textbf{#1}\xspace}
\numberwithin{equation}{section}
\newtheorem{theorem}{Theorem}[section]
\newtheorem{corollary}[theorem]{Corollary}
\newtheorem{lemma}[theorem]{Lemma}
\newtheorem{proposition}[theorem]{Proposition}
\theoremstyle{definition}
\newtheorem{remark}[theorem]{Remark}
\newtheorem{definition}[theorem]{Definition}
\newtheorem{question}[theorem]{Open Question}
\newcommand{\SL}{\mathrm{SL}}
\newcommand{\GL}{\mathrm{GL}}
\newcommand{\K}{\mathbb{K}}
\newcommand{\F}{\mathbb{F}}
\newcommand{\C}{\mathbb{C}}
\newcommand{\R}{\mathbb{R}}
\newcommand{\N}{\mathbb{N}}
\newcommand{\poly}{\mathrm{poly}}
\newcommand{\M}{\mathrm{M}}
\newcommand{\T}{\mathrm{T}}
\renewcommand{\O}{\mathrm{O}}
\newcommand{\U}{\mathrm{U}}
\newcommand{\Sp}{\mathrm{Sp}}
\renewcommand{\S}{\mathrm{S}}
\newcommand{\E}{\mathrm{E}}
\newcommand{\tS}{\tens{S}}
\newcommand{\tuple}[1]{\mathbf{#1}}
\newcommand{\tens}[1]{\mathtt{#1}}
\newcommand{\spa}[1]{\mathcal{#1}}
\newcommand{\cE}{\spa{E}}
\newcommand{\cF}{\spa{F}}
\newcommand{\cG}{\spa{G}}
\newcommand{\cH}{\spa{H}}
\newcommand{\tA}{\tens{A}}
\newcommand{\tB}{\tens{B}}
\newcommand{\tE}{\tens{E}}
\newcommand{\vC}{\tuple{C}}
\newcommand{\vD}{\tuple{D}}
\newcommand{\vg}{\tuple{g}}
\newcommand{\vU}{\tuple{U}}
\newcommand{\aut}{\mathrm{Aut}}
\newcommand{\algprobm}[1]{{\sc #1}\xspace}
\newcommand{\GI}{\algprobm{GI}}
\newcommand{\DGI}{\algprobm{DGI}}
\newcommand{\GIlong}{\algprobm{Graph Isomorphism}}
\newcommand{\ThreeTIlong}{\algprobm{3-Tensor Isomorphism}}
\newcommand{\TIlong}{\algprobm{Tensor Isomorphism}}
\newcommand{\DeeTIlong}{\algprobm{$d$-Tensor Isomorphism}}
\newcommand{\CodeEqlong}{\algprobm{Code Equivalence}}
\newcommand{\Path}{\mathrm{Path}}
\newcommand{\cc}[1]{\mathsf{#1}}
\DeclareMathOperator{\rank}{rank}
\DeclareMathOperator{\diag}{diag}
\newcommand{\too}%
{\xrightarrow{\text{\raisebox{-3pt}{$\sim$}}\,}}
\title{
	On the complexity of isomorphism problems for tensors, groups, and 
	polynomials III: actions by classical groups\footnote{A preliminary version of this paper appeared at the 15th Innovations in Theoretical Computer Science Conference (ITCS 2024) \cite{ITCS24}.}
	 %some results over classical groups
}
\author{
	Zhili Chen
	\footnote{Since August 2024, Centre for Quantum Technologies, National University of Singapore, Singapore ({\tt{chen.zhili@u.nus.edu}}). Research supported in part by the National Research Foundation, Singapore under its CQT Bridging Grant. }
	\footnote{Centre for Quantum Software and Information, University of Technology Sydney, Australia ({\tt{zhili.chen@student.uts.edu.au}}).}
	\and
	Joshua A. Grochow
	\footnote{Departments of Computer Science and Mathematics, University of Colorado, Boulder ({\tt{jgrochow@colorado.edu}}). Research supported by NSF CAREER grant CISE-2047756. } 
	\and
	Youming Qiao
	\footnote{Centre for Quantum Software and Information, University of Technology Sydney, Australia ({\tt{youming.qiao@uts.edu.au}}). Research supported in part supported by the Australian Research Council DP200100950 and LP220100332. }
	\and 
	Gang Tang
	\footnote{Centre for Quantum Software and Information, University of Technology Sydney, Australia ({\tt{gang.tang-1@student.uts.edu.au}}). Research supported in part by the Australian Research Council LP220100332 and the Sydney Quantum Academy, Sydney, NSW, Australia. }
	\and 
	Chuanqi Zhang
	\footnote{Centre for Quantum Software and Information, University of Technology Sydney, Australia ({\tt{chuanqi.zhang@student.uts.edu.au}}). Research supported in part by the Australian Research Council DP200100950 and the Sydney Quantum Academy, Sydney, NSW, Australia. }
}
\date{ \today
	%\vspace{0.2in}
}
\begin{document}
\maketitle
%TODO mandatory: add short abstract of the document
\begin{abstract}
We study the complexity of isomorphism problems for $d$-way arrays, or tensors, under natural actions by classical groups such as orthogonal, unitary, and symplectic groups. These problems arise naturally in statistical data analysis and quantum information. We study two types of complexity-theoretic questions. First, for a fixed action type (isomorphism, conjugacy, etc.), we relate the complexity of the isomorphism problem over a classical group to that over the general linear group. Second, for a fixed group type (orthogonal, unitary, or symplectic), we compare the complexity of the isomorphism problems for different actions. 

Our main results are as follows. First, for orthogonal and symplectic groups acting on $3$-way arrays, the isomorphism problems reduce to the corresponding problems over the general linear group. Second, for orthogonal and unitary groups, the isomorphism problems of five natural actions on $3$-way arrays are polynomial-time equivalent, and the $d$-tensor isomorphism problem reduces to the $3$-tensor isomorphism problem for any fixed $d>3$. For unitary groups, the preceding result implies that LOCC classification of tripartite quantum states is at least as difficult as LOCC classification of $d$-partite quantum states for any $d$. Lastly, we also show that the graph isomorphism problem reduces to the tensor isomorphism problem over orthogonal and unitary groups. 
\end{abstract}

\section{Introduction}

Previously in \cite{FGS19,GQ21,TI2,GQT22,RST}, isomorphism problems of tensors, groups, and polynomials \emph{over direct products of general linear
	groups} were studied from the complexity-theoretic viewpoint. In particular, a complexity class $\cc{TI}$ (short for the Tensor Isomorphism class) was defined in \cite{GQ21}, and several isomorphism problems, including those for tensors, groups, and polynomials, were shown to be $\cc{TI}$-complete. The equivalence between polynomials and 3-tensors was shown subsequently but independently in \cite{RST}; some problems over %\jnote{Edited:} 
products of general linear groups with monomial groups %\ynote{I checked \cite{DAlconzo} and the problem is a direct product of general linear groups and a monomial group. So some care is needed here.} 
were also shown to be $\cc{TI}$-complete \cite{DAlconzo}.

In this paper, we study isomorphism problems of tensors, groups, and polynomials over
some classical groups, such as orthogonal, unitary, and symplectic groups, from the computational complexity viewpoint. There are several motivations to study tensor isomorphism over classical groups from statistical data analysis and quantum information. This introduction section is organised as follows. We will first review $d$-way arrays and some natural group actions on them in Section~\ref{subsec:five-action}, and describe motivations to study these actions over classical groups in Section~\ref{subsec:motivation}. We will then present our main results in Section~\ref{subsec:results}, and give an overview of the proofs in Section~\ref{sec:overview}. We conclude this introduction with a brief overview of the series of works this paper belongs to, a discussion on the results, and some open problems in Section~\ref{subsec:outlook}.

\subsection{Review of $d$-way arrays and some group actions on them}\label{subsec:five-action}
Let $\F$ be a field, and let $n_1, \dots, n_d\in \N$. For $n\in \N$, $[n]:=\{1, 2, \dots, n\}$. We use $\T(n_1\times\dots\times n_d, \F)$ to denote the linear space of $d$-way arrays with $[n_j]$ being the range of the $j$th index. That is, an element in $\T(n_1\times\dots\times n_d, \F)$ is of the form $\tA=(a_{i_1, \dots, i_d})$ where $\forall j\in[d]$, $i_j\in[n_j]$, and $a_{i_1, \dots, i_d}\in \F$. Note that $2$-way arrays are just matrices. Let $\M(n\times m, \F):=\T(n\times m, \F)$, and $\M(n, \F):=\M(n\times n, \F)$.

\begin{definition}\label{def:tensor}
	Let $\GL(n, \F)$ be the general linear group of degree $n$ over $\F$. We define an action of $\GL(n_1, \F)\times\dots\times \GL(n_d, \F)$ on $\T(n_1\times \dots\times n_d, \F)$, denoted as $\circ$, as follows. Let $\vg=(g_1, \dots, g_d)$, where $g_k\in\GL(n_k,\F)$ over $k\in[d]$. The action of $\vg$ sends $\tA=(a_{i_1, \dots, i_d})$ to $\vg\circ \tA=(b_{i_1, \dots, i_d})$, where $b_{i_1, \dots, i_d}=\sum_{j_1,\dotsc,j_d} a_{j_1, \dotsc, j_d} (g_1)_{i_1, j_1} (g_2)_{i_2, j_2} \dotsb (g_d)_{i_d,j_d}$.
\end{definition}

There are several group actions of direct products of general linear groups on $d$-way arrays, based on interpretations of $d$-way arrays as different multilinear algebraic objects. For example, there are three well-known natural actions on matrices: for $A \in \M(n,\F)$, (1) $(P,Q) \in \GL(n,\F) \times
\GL(n,\F)$ sends $A$ to $P^t A Q$, (2) $P \in \GL(n,\F)$ sends $A$ to $P^{-1} A
P$, and (3) $P \in \GL(n,\F)$ sends $A$ to $P^t A P$.
These three actions endow $A$ with different algebraic or geometric interpretations:
(1) a linear
map
from a vector space $V$ to another vector space $W$, (2) a linear map from $V$ to
itself, and (3) a bilinear map from $V\times V$ to $\F$. 

Analogously, there are five natural actions on $3$-way arrays, which we collect in the following definition (see \cite[Sec.~2.2]{GQ21} for more discussion of why these five capture all possibilities within a certain natural class).
\begin{definition}\label{def:five-action}
	We define five actions of (direct products of) general linear groups on $3$-way arrays. Note that in the following, $\circ$ is from Definition~\ref{def:tensor}.
	\begin{enumerate}
		\item Given $\tA\in\T(l\times m\times n,\F)$, $(P,Q,R)\in\GL(l,\F)\times\GL(m,\F)\times\GL(n,\F)$ sends $\tA$ to $(P,Q,R)\circ\tA$;
		\item Given $\tA\in\T(l\times l\times m,\F)$, $(P,Q)\in\GL(l,\F)\times\GL(m,\F)$ sends $\tA$ to $(P,P,Q)\circ\tA$;
		\item Given $\tA\in\T(l\times l\times m,\F)$, $(P,Q)\in\GL(l,\F)\times\GL(m,\F)$ sends $\tA$ to $(P,P^{-t},Q)\circ\tA$;
		\item Given $\tA\in\T(l\times l\times l,\F)$, $P\in\GL(l,\F)$ sends $\tA$ to $(P,P,P^{-t})\circ\tA$;
		\item Given $\tA\in\T(l\times l\times l,\F)$, $P\in\GL(l,\F)$ sends $\tA$ to $(P,P,P)\circ\tA$. 
	\end{enumerate}
\end{definition}
These five actions arise naturally by viewing $3$-way arrays as encoding, respectively: (1) tensors or matrix spaces (up to equivalence), (2) $p$-groups of class $2$ and exponent $p$, quadratic polynomial maps, or bilinear maps, (3) matrix spaces up to conjugacy, (4) algebras, and (5) trilinear forms or (noncommutative) cubic forms. For details on these interpretations, we refer the reader to \cite[Sec.~2.2]{GQ21}.

For a group $\cG$ acting on a set $S$, the isomorphism problem for this action asks to decide, given $s, t\in S$, whether $s$ and $t$ are in the same $\cG$-orbit. For example, \GIlong is the isomorphism problem for the action of the symmetric group $\S_n$ on $2^{\binom{[n]}{2}}$, the power set of the set of size-$2$ subsets of $[n]$. 

To help specify which of the five actions we are talking about, we use the following shorthand notation from multilinear algebra\footnote{See \cite{Lim21} for a nice survey of various viewpoints of tensors. For us, we have to start with the $d$-way array viewpoint, because we wish to study the relations between different actions, and the constructions are more intuitively described by examining the arrays.}. Let $U\cong \F^l$, $V\cong \F^m$ and $W\cong \F^n$. The dual space of a vector space $U$ is denoted as $U^*$. Then action (1) is referred to as $U\otimes V\otimes W$, (2) is $U\otimes U\otimes V$, (3) is $U\otimes U^*\otimes V$, (4) is $U\otimes U\otimes U^*$, and (5) is $U\otimes U\otimes U$. Note that from this shorthand notation, one can directly read off the action as in Definition~\ref{def:five-action} and vice versa.

\subsection{Motivations for isomorphism problems of $d$-way arrays over classical groups}\label{subsec:motivation}

The term ``classical groups'' appeared in Weyl's classic \cite{Wey97}, though there are multiple competing possibilities for what this term should mean formally \cite{Humphreys}. In this paper, we will be mostly concerned with \emph{groups consisting of elements that preserve a bilinear or sesquilinear form}, which include orthogonal groups $\O$, symplectic groups $\Sp$, and unitary groups $\U$, among others. As subgroups of $\GL$, they act naturally on $d$-way arrays. Note that for the orthogonal group $\O(n, \R)$, there are essentially three actions instead of five (because $P^{-t} = P$ for $P \in \O(n,\R)$). 

Actions of classical groups on $d$-way arrays have appeared in several areas of computational and applied mathematics \cite{Lim21}. In this subsection we examine some of these applications from statistical data analysis and quantum information.

\paragraph{Warm up: singular value decompositions.} Consider the action of $(A, B)\in \U(n, \C)\times \U(m, \C)$ on $C\in \M(n\times m, \C)$ by sending $C$ to $A^*CB$, where $A^*$ denotes the conjugate transpose of $A$. The orbits of this action are determined by the Singular Value Theorem, which states that every $C\in \M(n\times m, \C)$ can be written as $A^*DB$ where $A\in \U(n, \C)$, $B\in \U(m, \C)$, and $D\in \M(n\times m, \C)$ is a rectangular diagonal matrix. Furthermore, the diagonal entries of $D$ are non-negative real numbers, called the singular values of $C$. Similar results hold for $\O(n, \R)\times\O(m, \R)$ acting on $\R^n\otimes \R^m$.

This example illustrates that the orbit structure of $\U(n, \C)\times \U(m, \C)$ on $\M(n\times m, \C)$ is different from the action of $\GL(n, \C)\times\GL(m, \C)$ on $\M(n\times m, \C)$. Indeed, the former is determined by singular values (of which there are continuum many choices) and the latter is determined by rank (of which there are only finitely many choices).

\paragraph{Orthogonal isomorphism of tensors from data analysis.} The singular value decomposition is the basis for the Eckart--Young Theorem \cite{EY36}, which states that the best rank-$r$ approximation of a real matrix $C$ is the one obtained by summing up the rank-$1$ components corresponding to the largest $r$ singular values. To obtain a generalisation of such a result to $d$-way arrays, $d>2$, is a central problem in statistical analysis of multiway data \cite{DL08}.

Due to the close relation between singular value decompositions and orthogonal groups acting on matrices, it may not be surprising that the orthogonal equivalence of real $d$-way arrays is studied in this context \cite{DDV00,DL08,HU17,Sei18}. For example, one question is to study the relation between ``higher-order singular values'' and orbits under orthogonal group actions. From the perspective of the orthogonal equivalence of $d$-way arrays, such higher-order singular values are natural isomorphism invariants, though they do not characterise orbits as in the matrix case. In the literature, $d$-way arrays under orthogonal group actions are sometimes called Cartesian tensors \cite{CartesianTensor}.

\paragraph{Unitary isomorphism of tensors from quantum information.} We now turn to $\F=\C$ and consider the action of a product of unitary groups; such actions arise in at least two distinct ways in quantum information, which we highlight here: as LU or LOCC equivalence of quantum states, and as unitary equivalence of quantum channels.

In quantum information, unit vectors in $\T(n_1\times \dots\times n_d, \C)\cong \C^{n_1}\otimes\dots\otimes\C^{n_d}$ are called pure states, and two pure states are called locally-unitary (LU) equivalent, if they are in the same orbit under the natural action of $\vU:=\U(n_1, \C)\times\dots\times \U(n_d, \C)$ (where the $i$-th factor of the group acts on the $i$-th tensor factor). By Bennett \emph{et al.} \cite{LOCC}, the LU equivalence of pure states also captures their equivalence under local operations and classical communication (LOCC), which means that LU-equivalent states are inter-convertible by reasonable physical operations.

A completely positive map is a function $f:\M(n, \C)\to\M(n, \C)$ of the form $f(A)=\sum_{i\in[m]}B_iAB_i^*$ for some complex matrices $B_i \in \M(n, \C)$; quantum channels are given precisely by the completely positive maps that are also ``trace-preserving'', in the sense that $\sum_{i\in[m]}B_i^*B_i=I_n$. Two tuples of matrices $(B_1, \dots, B_m)$ and $(B_1', \dots, B_m')$ define the same completely positive map if and only if there exists $S=(s_{i,j})\in\U(m, \C)$ such that $\forall i\in[m]$, $B_i=\sum_{j\in[m]}s_{i,j}B_j'$ \cite[Theorem 8.2]{NC00}. And two quantum channels $f, g:\M(n, \C)\to\M(n, \C)$ are called unitarily equivalent if there exists $T\in \U(n, \C)$ such that for any $A\in \M(n, \C)$, $T^*f(A)T=g(T^*AT)$. Thus, two matrix tuples $(B_1, \dots, B_m)$ and $(B_1', \dots, B_m')$ define the unitarily equivalent quantum channels if and only if their corresponding $3$-way arrays in $\T(n\times n\times m, \C)$ are in the same orbit under a natural action of $\U(n, \C)\times \U(m, \C)$.

\paragraph{Classical groups arising from \CodeEqlong.} Classical groups may appear even when we start with general linear or symmetric groups. Here is an example from code equivalence. Recall that the (permutation linear) code equivalence problem asks the following: given two matrices $A, B\in \M(d\times n, q)$, decide if there exist $C\in\GL(d, q)$ and $P\in \S_n$, such that $A=CBP$. One algorithm for this problem, under some conditions on $A$ and $B$, from \cite{BOS19} goes as follows. Suppose it is the case that $A=CBP$. Then $AA^t=CBPP^tB^tC^t=CBB^tC^t$. This means that $AA^t$ and $BB^t$ are congruent. Assuming that $AA^t$ and $BB^t$ are full-rank, then up to a change of basis, we can set that $AA^t=BB^t=:F$, so any such $C$ must lie in a classical group preserving the form $F$. We are then reduced to the problem of asking whether $A$ and $B$ are equivalent up to some $C$ from a classical group and some $P$ from a permutation group. This problem, as shown in \cite{BOS19}, reduces to \GIlong. 

\paragraph{Some preliminary remarks on the algorithms for \TIlong over classical groups.} 
Although we show that \algprobm{Orthogonal TI} and \algprobm{Unitary TI}  are still \GI-hard (Theorem \ref{thm: gi to ti}), from the current literature it seems that orthogonal and unitary isomorphism of tensors are easier than general-linear isomorphism. There are currently two reasons for this: the first is mathematical, and the second is based on practical algorithmic experience, which we now discuss.

One mathematical reason why these problems may be easier is that there are easily computable isomorphism invariants for such actions, while such invariants are not known for general-linear group actions. Here is one construction of an effective invariant in the unitary case. From $\tA=(a_{i,j,k})\in\T(n\times n\times n, \C)$, construct its matrix flattening $B=(b_{i,j})\in \M(n\times n^2, \C)$, where $b_{i,j\cdot n+k}=a_{i,j,k}$. Then it can be verified easily that $|\det(BB^*)|$ is a polynomial-time computable isomorphism invariant for the unitary group action $\U(n, \C)\times\U(n, \C)\times\U(n, \C)$. However, it is not known whether such isomorphism invariants for the general linear group action exist---if they did, they would break the pseudo-random assumption for this action proposed in \cite{JQSY19}.

Practically speaking, current techniques seem much more effective at solving tensor isomorphism-style problems over the orthogonal group than over the general linear group. It is not hard to formulate \TIlong and related problems over general linear and some classical groups as solving systems of polynomial equations. Motivated by cryptographic applications \cite{TangDJPQS22}, we chose a $\cc{TI}$-complete problem \algprobm{Alternating Trilinear Form Isomorphism} \cite{GQT22}, and carried out experiments using the Gr\"obner basis method for this problem, implemented in Magma \cite{Magma}. For some details of these experiments see Appendix~\ref{app:groebner}. We fixed the underlying field order as $32771$ (a large prime that is close to a power of 2). Over the general linear group for $n=7$, the solver ran for about 3 weeks on a server, eating 219.7GB memory, yet still did not complete with a solution. Over the orthogonal group for odd $n$, the data are shown in Table~\ref{tb:GB}. In particular, the solver returns a solution for $n=21$ in about 3.6 hours, a sharp contrast to the difficulty met when solving the problem under the general linear group action. 
\begin{table}[h]
	\centering
	\begin{tabular}{|c|c|c|c|c|c|c|c|c|}
		\hline
		$n$ & 7 & 9 & 11 & 13 & 15 & 17 & 19 & 21 \\
		\hline
		Time (in s) & 0.396 & 5.039 & 37.120 & 140.479 & 524.520 & 1764.179 & 4720.129 & 12959.799 \\
		\hline
	\end{tabular}
	\caption{The experiment results of the Gr\"obner basis method to solve the problem of isomorphism of alternating trilinear forms under the action of the orthogonal group. }
	\label{tb:GB}
\end{table}

\subsection{Our results}\label{subsec:results}

In this paper we study the complexity-theoretic aspects of \TIlong under classical groups. We focus on the following two types of questions:
\begin{enumerate}
	\item Consider two classical groups $\cG$ and $\cH$, and fix the way they act on $d$-way arrays. What are the relations between the isomorphism problems defined by these groups?
	\item Fix a classical group $\cG$, and consider its different actions on $d$-way arrays. What are the relations between the isomorphism problems defined by these actions?
\end{enumerate}

Questions of the first type were implicitly studied in \cite{HQ20,GQ21,TI-IV} for some classes of $d$-way arrays, with the groups being either general linear or symmetric groups. For example, starting from a graph $G$, one can construct a $3$-way array $\tA_G$ encoding this graph following Edmonds, Tutte and Lov\'asz \cite{Tut47,Edm65,Lov79}, and it is shown in \cite{HQ20} that $G$ and $H$ are isomorphic (a notion based on the symmetric groups $S_n$) if and only if $\tA_G$ and $\tA_H$ are isomorphic (under a product of general linear groups). 

Questions of the second type were studied in \cite{FGS19,GQ21} for $\GL$. For example, one main result in \cite{FGS19,GQ21} is to show the polynomial-time equivalence of the five isomorphism problems for $3$-way arrays under (direct products of) general linear groups (cf. Section~\ref{subsec:five-action}).

Still, to the best of our knowledge, these types of questions have not been studied for orthogonal, unitary, and symplectic groups, which are the focus on this paper. 

\paragraph{Results on relations between different groups.} Our first group of results shows that isomorphism problems of tensors under classical groups are sandwiched between the celebrated \GIlong problem and the more familiar \TIlong problem under $\GL$. We use $\S_n$ to denote the symmetric group of degree $n$, and view $\S_n$ as a subgroup of $\GL(n, \F)$ naturally via permutation matrices. We use $\leq$ to denote the subgroup relation. When we say ``reduces'', briefly, we mean: polynomial-time computable kernel reductions \cite{FortnowGrochowPEq} (there is a polynomial-time function $r$ sending $(A,B)$ to $(r(A), r(B))$, such that the map $(A,B) \mapsto (r(A), r(B))$ is a many-one reduction of isomorphism problems), that are typically polynomial-size projections (``$p$-projections'') in the sense of Valiant \cite{Val79}, functorial (on isomorphisms), and containments in the sense of the literature on wildness. Some reductions that use a non-degeneracy condition may not be $p$-projections. See \cite[Sec.~2.3]{GQ21} for details on these notions. 

\begin{theorem}\label{thm:graph-iso}
	Suppose a group family $\cG=\{\cG_n\}$ satisfies that $\S_n\leq \cG_n\leq \GL(n, \F)$, where here $S_n$ denotes the group of $n \times n$ permutation matrices. Then \GIlong reduces to \algprobm{Bilinear Form $\cG$-Pseudo-isometry}, that is, the isomorphism problem for the action of $\cG(U)\times\cG(V)$ on  $U\otimes U\otimes V$. 
\end{theorem}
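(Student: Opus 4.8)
The plan is to show that a single, $\cG$-independent reduction works: the Tutte--Edmonds--Lov\'asz encoding of graphs as $3$-way arrays, which already realizes the reduction for $\cG=\GL$, in fact realizes it for every family with $\S_n\le\cG_n\le\GL(n,\F)$. Recall from the introduction that one sends a graph $G=([n],E)$ to the array $\tA_G\in U\otimes U\otimes V$ with $U=\F^n$ (one coordinate per vertex), $V=\F^m$, $m:=\binom{n}{2}$ (one coordinate per potential edge), whose slice at $\{i,j\}$, $i<j$, is the alternating matrix $e_ie_j^t-e_je_i^t$ and whose slices at non-edges are zero, and that by \cite{Tut47,Edm65,Lov79,HQ20}, $G\cong H$ if and only if $\tA_G$ and $\tA_H$ are pseudo-isometric under $\GL(U)\times\GL(V)$ acting via $(P,Q)\colon\tA\mapsto(P,P,Q)\circ\tA$. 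The key point is a monotonicity observation: soundness of this reduction passes \emph{downward} along $\cG(U)\times\cG(V)\le\GL(U)\times\GL(V)$, while completeness passes \emph{upward} from $\S_n$ precisely because the ``yes'' witnesses the encoding produces are permutation matrices.

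To carry this out I would check both directions. Completeness: an isomorphism $\pi\colon G\to H$ induces the vertex permutation matrix $\Pi_\pi\in\S_n$ and the edge permutation matrix $\Sigma_\pi\in\S_m$, $\{i,j\}\mapsto\{\pi(i),\pi(j)\}$, and one computes $(\Pi_\pi,\Pi_\pi,\Sigma_\pi)\circ\tA_G=\tA_H$; by hypothesis $\S_n\le\cG_n$ and $\S_m\le\cG_m$, so $(\Pi_\pi,\Sigma_\pi)\in\cG(U)\times\cG(V)$, and since $\pi\mapsto(\Pi_\pi,\Sigma_\pi)$ is a homomorphism one also gets the functoriality and ``containment'' refinements. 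Soundness: if $(P,Q)\in\cG(U)\times\cG(V)$ witnesses a $\cG$-pseudo-isometry $\tA_G\to\tA_H$, then, as $\cG(U)\times\cG(V)\le\GL(U)\times\GL(V)$, it also witnesses a $\GL$-pseudo-isometry, so \cite{HQ20} forces $G\cong H$. Thus $G\cong H$ iff $\tA_G$ and $\tA_H$ are in the same $\cG(U)\times\cG(V)$-orbit; the map $G\mapsto\tA_G$ is a polynomial-time p-projection (each entry is $0$ or $\pm1$, read off a single entry of the adjacency matrix of $G$) and is independent of $\cG$, which is what the theorem asks for.

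So the real work is not a new construction but two properties of the old one, and checking them is the main --- though light --- point of care: (a) that the completeness side of the Tutte--Edmonds--Lov\'asz reduction is witnessed by \emph{permutation} matrices, hence by elements of every overgroup of $\S_n$; and (b) that its soundness side invokes nothing about $P,Q$ beyond $P\in\GL(U)$, $Q\in\GL(V)$. Both hold for this encoding, and it is exactly these that promote a ``hardness for $\GL$'' statement to the uniform statement over all $\cG$ with $\S_n\le\cG_n\le\GL(n,\F)$; in particular it applies to $\O(n,\R)$ and $\U(n,\C)$, since permutation matrices are real orthogonal and hence also unitary. Were one to want a proof not appealing to \cite{HQ20}, the burden would move to Lov\'asz's recovery step --- reading the vertices and edges of $G$ off the alternating matrix space spanned by the slices of $\tA_G$, e.g.\ by isolating its rank-$2$ locus after a harmless preprocessing to a convenient class of graphs --- and that step is field-independent, so it still covers the orthogonal ($\R$) and unitary ($\C$) cases.
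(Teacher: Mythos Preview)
Your approach is correct and shares the paper's core observation: completeness is witnessed by permutation matrices (hence lies in any $\cG_n\supseteq\S_n$), while soundness uses only invertibility (hence survives restriction to any $\cG_n\le\GL(n,\F)$). The paper carries this out with a different but closely related encoding: it reduces from \emph{Directed} Graph Isomorphism via the elementary-matrix construction of \cite{LQWWZ22} (frontal slices $\E_{i,j}$ for arcs $(i,j)$, with $m=|E|$ rather than $\binom{n}{2}$), and gives a short self-contained soundness argument---the ``permanent trick'' that any invertible $T$ admits a permutation $\sigma$ with $t_{\sigma(i),i}\neq 0$ for all $i$, from which one reads off a graph isomorphism directly---rather than invoking \cite{HQ20} as a black box. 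Your alternating-matrix encoding works equally well and is in fact closer to what the paper's introduction advertises; the paper's choice buys a proof that is fully self-contained and sidesteps the rank-$2$-locus recovery you allude to at the end.
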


Let $\cG_n\leq \GL(n, \F)$. We say that $\cG_n$ \emph{preserves a bilinear form}, if there exists some $A\in \M(n, \F)$, such that $\cG_n=\{T\in \GL(n, \F) \mid T^t AT=A\}$. For example, orthogonal and symplectic groups are defined as preserving full-rank symmetric and skew-symmetric forms. 
\begin{theorem}\label{thm:toGL}
	Let $\cG=\{\cG_n\mid \cG_n\leq\GL(n, \F)\}$ be a group family preserving a polynomial-time-constructible family of bilinear forms,\footnote{That is, the function $\Phi \colon \mathbb{N} \to \M(n,\F)$ giving a matrix for the form preserved by $\cG_n$ is computable in polynomial time. We note that no such restriction was needed in Theorem~\ref{thm:graph-iso}.} and consider one of the five actions of $\GL$ on $3$-way arrays in Definition~\ref{def:five-action}. The restricted $\cG$-isomorphism problem for this action reduces to the $\GL$-isomorphism problem for this action. 
\end{theorem}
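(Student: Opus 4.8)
The plan is to realise each $\cG_n$ as the stabiliser of the $2$-tensor $F_n$ and then ``hard-code'' $F_n$ into the input array, so that a $\GL$-isomorphism of the enlarged arrays is forced to respect $F_n$ and hence restricts to a $\cG$-isomorphism of the originals. Write $F_n \in \M(n,\F)$ for the polynomial-time-constructible form with $\cG_n = \{T \in \GL(n,\F) : \tr{T} F_n T = F_n\}$. For a fixed action from Definition~\ref{def:five-action}, the restricted $\cG$-isomorphism problem is exactly: decide whether there is a $\GL$-isomorphism of $\tA$ and $\tB$ that simultaneously fixes, under congruence, the relevant (and \emph{fixed}, i.e. input-independent) forms $F_{n_i}$. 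Since the forms are the same for $\tA$ and $\tB$, I will attach them to the array via a rigid gadget.

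The clean base case is action~(5): there a single group element $P \in \GL(l)$ occurs in all three slots, so there is no ``independent factor'' to worry about and $P$ appears often enough to express a congruence constraint. Given $\tA \in \T(l \times l \times l,\F)$, enlarge $\F^l$ to $\F^l \oplus \F^e$ and let $\hat\tA \in \T((l+e)\times(l+e)\times(l+e), \F)$ be the block array whose $(\F^l)^{\otimes 3}$ part is $\tA$, whose $\F^l \otimes \F^l \otimes \F^e$ part is $F_l^{-1} \otimes w_0$ for a fixed nonzero $w_0 \in \F^e$ (replacing $F_l$ by $F_l^{-1}$ absorbs the transpose convention, which is in any case immaterial for $\O$ and $\Sp$), whose $(\F^e)^{\otimes 3}$ part is a fixed generic $3$-tensor $\tC$ with trivial automorphism group under action~(5), and which vanishes in all remaining blocks; define $\hat\tB$ from $\tB$ with the same gadget data. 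I claim $\hat\tA \cong_{\GL} \hat\tB$ under action~(5) iff $\tA$ and $\tB$ are $\cG_l$-isomorphic. Completeness is routine: a $\cG_l$-isomorphism $P$ of $\tA, \tB$, extended by the identity on $\F^e$, is block-diagonal, fixes $\tC$, and sends $F_l^{-1} \otimes w_0$ to $(P F_l^{-1} \tr{P}) \otimes w_0 = F_l^{-1} \otimes w_0$ because $\tr{P} F_l P = F_l$; hence it maps $\hat\tA$ to $\hat\tB$. The content is soundness: any $\GL$-isomorphism of $\hat\tA, \hat\tB$ must (i) not mix $\F^l$ with $\F^e$ (otherwise, after a harmless preliminary padding of $\tA$, its image would be nonzero in a block where $\hat\tB$ vanishes), and (ii) therefore restrict on the $\F^e$ block to an automorphism of $\tC$, which by choice of $\tC$ is the identity; with the isomorphism thus block-diagonal and trivial on $\F^e$, reading off the $\F^l \otimes \F^l \otimes \F^e$ block gives $P F_l^{-1} \tr{P} = F_l^{-1}$, i.e. $P \in \cG_l$, while the $(\F^l)^{\otimes 3}$ block gives $(P,P,P) \circ \tA = \tB$. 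To impose a whole tuple of forms (needed below) one simply includes one summand $F^{(i)-1} \otimes w_i$ per form, with distinct $w_i$.

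All other actions reduce to the action~(5) base case $\cG$-equivariantly, using the reductions among the five $\GL$-actions of \cite{FGS19,GQ21}. Those reductions place the input array in a block of a larger array and force the ambient isomorphism to have a prescribed block shape: $\diag(P,Q)$ for action~(1)$\to$action~(2)/(5), and $\diag(P, P^{-t})$ for the conjugation-type actions~(3) and~(4). Restricting each original factor to $\cG$ then amounts to requiring the ambient element to preserve an explicitly constructible tuple of bilinear forms on the enlarged space: block sums such as $F_l \oplus F_m$ (encoding $P \in \cG_l,\ Q \in \cG_m$), together with the hyperbolic form $\left(\begin{smallmatrix} 0 & I \\ I & 0 \end{smallmatrix}\right)$ and $\diag(F_l, F_l^{-1})$, which simultaneously witness the $P \mapsto P^{-t}$ pattern and pin $P$ to $\cG_l$ in the conjugation-type actions. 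One thus obtains a restricted instance of action~(2), then (by the same mechanism) of action~(5), over an explicit form tuple; applying the base case yields an unrestricted action~(5) $\GL$-isomorphism instance, and composing with the inverse $\GL$-reduction back to the action one started from (the five $\GL$-actions being polynomial-time equivalent by \cite{FGS19,GQ21}) completes the argument for all five actions.

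The main obstacle is the rigidity analysis in the base case --- showing that no $\GL$-isomorphism of the gadgeted arrays can mix the original coordinates with the gadget coordinates, so that it genuinely restricts to a $\cG$-isomorphism of the originals. I would carry this out by the standard support/rank bookkeeping for block arrays, an explicit choice of the rigid tensor $\tC$, and a mild preliminary padding of $\tA$ so that its slices are visibly incompatible with the gadget blocks, drawing on the gadget toolkit of \cite{FGS19,GQ21}. Polynomial-time constructibility of the family $\{F_n\}$ is exactly what is needed in order to write down the form summands $F^{(i)-1} \otimes w_i$ of the gadget in polynomial time (explaining why this hypothesis appears here but not in Theorem~\ref{thm:graph-iso}); degenerate forms are accommodated by routine extra bookkeeping, or by first reducing to the nondegenerate case.
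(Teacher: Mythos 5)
Your core idea matches the paper's: attach the defining forms $F_n$ to the input so that a $\GL$-isomorphism of the enlarged object is forced to preserve them and hence lies in $\cG$. The difference is in the implementation. The paper simply packages $(\tA, \Phi_l, \Phi_m, \Phi_n)$ as a \emph{tensor system} (with $\Phi_l \in U' \otimes U'$, etc.), notes that $\tA \cong_\cG \tB$ iff the systems are $\GL$-isomorphic, and then invokes Theorem~\ref{thm:FGS_parameter} (the rephrased \cite[Theorem 1.1]{FGS19}), which already carries out all the gadgeting and rigidity analysis in a uniform way and directly handles all five actions. You instead build a bespoke block gadget (the $F_l^{-1}\otimes w_0$ summand and the rigid $\tC$), route everything through action (5), and then appeal to the $\GL$-equivalence of the five actions to come back; and you explicitly defer the rigidity argument (``the main obstacle'') to standard support/rank bookkeeping and an unspecified choice of $\tC$.

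That deferral is a genuine gap, not a presentational one: the rigidity analysis --- ruling out mixing between the original coordinates and the gadget coordinates, pinning the $\F^e$ block to the identity, and verifying that the slices of (possibly padded) $\tA$ cannot impersonate the gadget blocks --- is precisely the mathematical content of the statement, and it is exactly what the FGS theorem establishes once and for all. As written your base case is a plausible sketch of a special case of that theorem rather than a proof. In addition, the detour through action (5), the construction and verification of a rigid $\tC$, and the ad hoc treatment of degenerate forms are all extra moving parts that the tensor-system route avoids: in the paper's proof one never needs a rigid tensor, never needs to invert $F_n$, and never leaves the action one started with. I would recommend replacing the hand-rolled gadget with the tensor-system reduction, which turns the whole argument into the one-line observation that ``$T$ preserves $\Phi$'' is the same as ``$T$ sends the $2$-tensor $\Phi$ to itself.''
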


\begin{remark}
	Recall from Section~\ref{subsec:motivation} that the orthogonal equivalence of matrices (determined by singular values) is more involved than the general-linear equivalence of matrices (determined by ranks) over $\R$. By a counting argument, 
	there is unconditionally no polynomial-size kernel reduction \cite{FortnowGrochowPEq} (mapping matrices to matrices) from \algprobm{Orthogonal Equivalence of Matrices} to \algprobm{General Linear Equivalence of Matrices}.
	In contrast, Theorem~\ref{thm:toGL} shows that for $3$-way arrays, orthogonal isomorphism does reduce to general-linear isomorphism. 
\end{remark}

\paragraph{Results on relations between different actions.} Our second group of results is concerned with different actions of the same group on $d$-way arrays. Our main results are for the real orthogonal groups and complex unitary groups; we discuss some difficulties encountered with symplectic groups in Section~\ref{subsec:outlook}, and leave open the questions for more general bilinear-form-preserving groups.

We begin with the five actions in Definition~\ref{def:five-action}. 
\begin{theorem}\label{thm:5actions}
	Let $\cG$ be either the unitary over $\C$ or orthogonal over $\R$ group family. Then the five isomorphism problems corresponding to the five actions of $\cG$ on $3$-way arrays in Definition~\ref{def:five-action} are polynomial-time equivalent to one another. 
\end{theorem}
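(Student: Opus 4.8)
The plan is to transplant the cycle of polynomial-time reductions that proves the $\GL$-version of this theorem in \cite{FGS19,GQ21}, and to audit each gadget in it for survival under restriction to $\O(\cdot,\R)$ and $\U(\cdot,\C)$. Those reductions come in two flavours: \emph{shape reductions}, which move a tensor between the ambient shapes $l\times m\times n$, $l\times l\times m$, and $l\times l\times l$ by padding it into a larger block tensor; and \emph{collapse reductions}, which force a product of three (or two) a priori independent group factors to act in a linked pattern such as $(P,P,Q)$ or $(P,P,P)$ by inserting ``identity'' slices that pair two tensor factors. A first reduction of scope: for $\cG=\O$ over $\R$ one has $P^{-t}=P$, so actions~(2) and~(3) literally coincide, as do~(4) and~(5); in the orthogonal case it therefore suffices to link the three shape classes $\{(1)\}$, $\{(2)\!=\!(3)\}$, and $\{(4)\!=\!(5)\}$. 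For $\cG=\U$ over $\C$ one instead has $P^{-t}=\overline P$, so~(3) and~(4) are the \emph{conjugated} counterparts of~(2) and~(5); the unitary case thus additionally requires a gadget that trades a tensor factor for its complex conjugate.

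For the shape reductions one can reuse the block constructions from the $\GL$ case essentially verbatim: to reduce action~(1) on $\T(l\times m\times n)$ to a square action, place the original tensor into a tensor on $U'\otimes U'\otimes V'$ with $U'=U\oplus V$ (suitably padded), along with rigid ``identity'' and ``rank'' gadget slices arranged so that any isometry is forced to be block-diagonal with respect to $U\oplus V$, recovering an independent pair of group elements acting on the original tensor as in~(1); a further round of the same idea folds $V'$ into $U'$ and reaches the cubic actions. The one place the behaviour genuinely differs from $\GL$ is the effect of an identity pairing slice: reading $I$ as a bilinear form on two factors acted on by $X$ and $Y$ forces $X^{t}Y=I$, i.e.\ $Y=X$ over $\O$ but $Y=\overline X$ over $\U$ --- the tensor action is $\C$-bilinear rather than sesquilinear, so conjugation enters here and only here. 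Consequently over $\U$ the collapse reductions naturally link~(2) with~(3) and~(4) with~(5), and one genuinely new ingredient is needed to merge these into one class: a \emph{conjugation gadget} that, given $\tA$, builds a tensor over a doubled factor $U_1\oplus U_2$ carrying $\tA$ against its entrywise conjugate together with a pairing slice between $U_1$ and $U_2$, so that an isometry is forced to act as $R$ on $U_1$ and $\overline R$ on $U_2$; this converts a $U\otimes V\otimes W$-type equivalence into a $\overline U\otimes V\otimes W$-type one, using only that complex conjugation is a real-algebraic automorphism of $\U(n,\C)$ commuting with the tensor action once all data is conjugated.

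I expect the bulk of the work --- and the only real obstacle --- to be the \emph{rigidity} analysis of these gadgets over the restricted groups. Over $\GL$ one argues, via ranks of slices and their kernels and images together with the presence of identity slices, that an arbitrary isomorphism must be block-triangular and then block-diagonal; these statements are basis-independent and carry over, but two points must be rechecked. First, once the block structure is forced, the residual equivalence must be realizable \emph{within} $\O$ or $\U$, not merely within $\GL$; this holds because the gadgets are arranged so that the residual freedom is a direct product of smaller $\O$- or $\U$-factors and because $\O$ and $\U$ are closed under block-diagonal sums. Second, several $\GL$-gadgets exploit scaling a block by an arbitrary nonzero scalar, which is unavailable in $\O$ (only $\pm1$) and $\U$ (only unit scalars); these must be replaced by gadgets that avoid scalars, use even-dimensional blocks so that $-I$ is available, or carry auxiliary coordinates absorbing the ambiguity. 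Putting the audited shape reductions, collapse reductions, and the conjugation gadget together then yields a single cycle of reductions through all five actions, for both $\cG=\O$ over $\R$ and $\cG=\U$ over $\C$; the symplectic case genuinely resists this approach, for reasons discussed elsewhere in the paper.
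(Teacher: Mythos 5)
Your high-level plan — reuse the $\GL$ reductions, verify rigidity, and exploit the fact that a block-triangular orthogonal or unitary matrix is automatically block-diagonal with orthogonal/unitary blocks — matches the paper's strategy, and the scope reduction for $\O$ (actions (2)=(3) and (4)=(5)) is a valid observation.

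However, the proposed conjugation gadget is a misconception, and there is a crucial ingredient you omit. The paper never needs to conjugate a tensor factor. The relation $P^{-t}=\overline P$ over $\U$ is exactly what actions (3) and (4) of Definition~\ref{def:five-action} already demand: those actions are \emph{defined} with $P^{-t}$ on the dual factor, so an identity pairing that forces $Y=X^{-t}$ is realizing the intended action, not an error to be undone. Concretely: the paper's reduction from action (1) to action (3) embeds $A_i$ into $\begin{bmatrix}0 & A_i\\ 0 & 0\end{bmatrix}$ and looks at unitary conjugacy $P^{-1}\hat A_i P$; no conjugate bar appears because on frontal slices $(P,P^{-t},Q)\circ$ is simply $A_k\mapsto PA_kP^{-1}$ — the $^{-t}$ is cancelled by the transpose implicit in the slice action. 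The reduction from (1) to (2) embeds into skew-symmetric slices $\begin{bmatrix}0 & A_i\\ -A_i^t & 0\end{bmatrix}$ plus FGS gadget blocks, where the \emph{same} $P$ acts on both copies of $U'$; there is never a constraint $X^tY=I$ between two independently-acted copies, so conjugation does not enter. Your claim that the collapse reduction "naturally links (2) with (3)" comes from modeling the collapse via a single identity pairing between independent copies of $U$; the paper's reductions are not of that form, and none of the five actions needs a conjugation gadget.

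What you do miss is the replacement for the Krull--Schmidt theorem for quiver representations, which underlies the $\GL$-version of the tensor-system framework used to reduce actions (2)–(5) back to (1). Over $\O$ and $\U$ the needed uniqueness statement is Sergeichuk's theorem \cite{Ser98} on orthogonal/unitary decompositions of matrix tuples into indecomposables, whose corollary (cancellation of a common unitary direct summand) is what lets the linked-block rigidity argument close. Without this the framework does not transfer, and your proposal contains no substitute. You also omit the Singular Value Decomposition step used to reduce to the non-degenerate case (linear independence of each set of slices), which is needed before the embeddings into $V'\otimes V'^*\otimes W$ and the cubic actions and is notably one of the few reductions in the paper that is not a p-projection. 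Finally, the scalar-absorption worry does not materialize: the gadgets as constructed never need to absorb an arbitrary nonzero scalar into a single $\O$- or $\U$-factor.
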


Our second result in this group is a reduction from $d$-way arrays to $3$-way arrays. 
\begin{theorem}\label{thm:dto3}
	Let $\cG$ be the unitary  over $\C$ or orthogonal over $\R$ group family. For any fixed $d \geq 1$, \algprobm{$d$-Tensor $\cG$-Isomorphism} reduces to \algprobm{3-Tensor $\cG$-Isomorphism}.
\end{theorem}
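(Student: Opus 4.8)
The plan is to adapt the reduction from $d$-tensor to $3$-tensor isomorphism over $\GL$ of \cite{FGS19,GQ21} to the form-preserving setting. For $d\le 3$ the statement is obtained by padding with index ranges of size $1$; this does not change the $\cG$-isomorphism class, since a one-dimensional factor contributes only a rescaling of the whole array by a unit (for $\U(1,\C)$) or a sign (for $\O(1,\R)$), and such a rescaling is already realizable inside any other factor when $d\ge 1$. So assume $d\ge 4$.

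View the input $d$-way arrays as elements of $U_1\otimes U_2\otimes(U_3\otimes\dots\otimes U_d)$, where $U_i\cong\K^{n_i}$ ($\K=\C$ in the unitary case, $\K=\R$ in the orthogonal case) carries the standard positive-definite form $\phi_i$ preserved by $\cG(U_i)$. Grouping the last $d-2$ indices reshapes each array into a $3$-way array in $U_1\otimes U_2\otimes R$ with $R:=U_3\otimes\dots\otimes U_d$; since the Kronecker product $\phi_3\otimes\dots\otimes\phi_d$ is again the standard form, $\cG(R)$ is the orthogonal/unitary group of $R$, and $\cG(U_3)\times\dots\times\cG(U_d)$ maps into $\cG(R)$ via Kronecker products, with the action of $\vg$ on the $d$-way array agreeing with the action of $g_3\otimes\dots\otimes g_d$ on the third index of the reshaped array. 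Exactly as over $\GL$, the obstruction is that $\cG(R)$ is strictly larger than the image of $\cG(U_3)\times\dots\times\cG(U_d)$, so we must cut it down.

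Following \cite{FGS19,GQ21}, adjoin to the $3$-way array auxiliary $3$-way arrays on auxiliary index spaces -- a ``structure gadget'' -- so that any $3$-tensor isomorphism of the combined object restricts on the $R$-factor to a full Kronecker product $g_3\otimes\dots\otimes g_d$ of invertible maps. Two new ingredients are then needed. First, \emph{form-compatibility}: equip each auxiliary space with a standard form (so its restricted group is again orthogonal/unitary) and arrange the gadget arrays so that (i) whenever two inputs are $d$-tensor $\cG$-isomorphic the gadgeted objects are $3$-tensor $\cG$-isomorphic via the induced transformations, and (ii) the gadget still forces the Kronecker structure; concretely one shows the gadget of the $\GL$-construction, built from identity and structure tensors, can be symmetrized (made Hermitian) without losing rigidity. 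Second, \emph{rebalancing}: if $g_3\otimes\dots\otimes g_d\in\cG(R)$, then $g_i^{\ast}\phi_i g_i=\lambda_i\phi_i$ with $\prod_i\lambda_i=1$ and each $\lambda_i$ a positive real (the $\phi_i$ being positive definite), so replacing $g_i$ by $\lambda_i^{-1/2}g_i$, which leaves the Kronecker product unchanged, puts every factor in $\cG(U_i)$. With these in hand, the forward direction of the reduction is the induced-transformation argument, and the backward direction reads: a $\cG$-isomorphism of the gadgeted $3$-tensors restricts on $R$ to a Kronecker product lying in $\cG(R)$, hence by rebalancing comes from $\cG(U_3)\times\dots\times\cG(U_d)$, which together with the transformations on $U_1,U_2$ yields a $d$-tensor $\cG$-isomorphism of the original arrays. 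Finally, if the combined object is a tuple of $3$-tensors, concatenate them along a fresh index carrying the direct-sum form; and if the result is not in the standard $3$-action, convert via Theorem~\ref{thm:5actions}.

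The main obstacle is the form-compatibility of the gadget. The $\GL$-construction may use arbitrary linear-algebraic rigidity, whereas here every auxiliary space carries a form and every gadget array must be fixed (up to $\cG$) by the induced transformations in the forward direction while still pinning down the Kronecker structure in the backward direction. Shrinking the ambient group from $\GL$ to $\cG$ only strengthens the rigidity available for the backward direction, so the real work lies in the forward direction -- exhibiting an explicit symmetric/Hermitian gadget whose rigidity survives symmetrization -- and that, together with the routine bookkeeping of the assembly step, is where the bulk of the proof will go.
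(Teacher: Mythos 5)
Your proposal takes a genuinely different route from the paper, but it has a gap at its center. The entire weight of your reduction rests on a ``structure gadget'' that, attached to the reshaped array in $U_1\otimes U_2\otimes R$ with $R=U_3\otimes\dots\otimes U_d$, forces any isomorphism on the $R$-factor to be a Kronecker product $g_3\otimes\dots\otimes g_d$. You attribute this gadget to \cite{FGS19,GQ21}, but no such gadget exists there: the $\GL$-reduction from \DeeTIlong to \ThreeTIlong in \cite{GQ21} does not reshape and then pin down Kronecker structure on a single factor; it reduces \DeeTIlong to \algprobm{Algebra Isomorphism} via path algebras of a quiver, and its backward direction relies on Mal'cev's theorem on conjugacy of Wedderburn complements --- a result that, as it happens, is also the reason the $\GL$-argument does not transfer to the orthogonal/unitary setting. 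So the object you propose to ``symmetrize'' is not available to cite, and constructing a gadget that forces Kronecker-product structure on $U_3\otimes\dots\otimes U_d$ is not a routine adaptation; it is the whole problem. (Your peripheral steps are fine: the $d\le 3$ padding, and the rebalancing observation that a Kronecker product lying in $\U(R)$ or $\O(R)$ has each factor a positive scalar multiple of an isometry with the scalars multiplying to $1$, are both correct.)

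What the paper actually does is keep the path-algebra route but replace the quiver and the backward-direction argument. One encodes a $d$-tensor $f$ as an element $\hat f$ of the path algebra of a linear quiver $v_1\to v_2\to\dots\to v_{d+1}$ with $n_i$ parallel arrows $x_{i,j}$ from $v_i$ to $v_{i+1}$, and forms $R_f=\Path_\F(G)/(\hat f)$. The backward direction is then purely combinatorial: any algebra isomorphism $R_f\to R_g$ must be block-lower-triangular with respect to the path-length grading, and within the degree-one part block-lower-triangular with respect to the source vertex of the arrows; a unitary (or orthogonal) block-triangular matrix is block-diagonal with unitary (orthogonal) blocks, and the diagonal blocks on $\langle x_{i,1},\dots,x_{i,n_i}\rangle$ directly yield the tuple $(P_1,\dots,P_d)$. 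This sidesteps both Mal'cev's theorem and any need to decompose an isometry of a tensor-product space into a Kronecker product. If you want to salvage your outline, you would need to either supply an explicit Kronecker-forcing gadget compatible with the forms (and prove its rigidity), or switch to the algebra route.
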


\paragraph{An application in quantum information.} As introduced in Section~\ref{subsec:motivation}, LU equivalence, characterises the equivalence of quantum states under local operations and classical communication (LOCC). We refer the interested reader to the nice paper \cite{LOCC_survey} for the LOCC notion, as well as the classification of three-qubit states based on LOCC \cite{ABLS01}. 

By the work of Bennett \emph{et al.} \cite{LOCC}, LOCC equivalence of pure quantum states is the same as the equivalence of unit vectors in $V_1\otimes V_2\otimes \dots \otimes V_d$ where $V_i$ are vector spaces over $\C$. Our Theorem~\ref{thm:dto3} can then be interpreted as saying that classifying tripartite quantum states under LOCC equivalence is as difficult as classifying $d$-partite quantum states. This may be compared with the result in \cite{ZLQ18}, which states that classifying $d$-partite states reduces to classifying tensor networks of tripartite or bipartite tensors. (We note that the analogous result for SLOCC, via the general linear group action, was shown in \cite{GQ21}; in the next section we discuss how our proof here differs from the one there.)

\subsection{Overview of the proofs of main results}\label{sec:overview}

In the following, we present proof outlines for Theorems~\ref{thm:graph-iso}, \ref{thm:toGL}, \ref{thm:5actions}, and \ref{thm:dto3}. While their proofs are inspired the strategies of previous results \cite{FGS19,GQ21,LQWWZ22}, new technical ingredients are indeed needed, such as the Singular Value Theorem, and a certain Krull--Schmidt type result for matrix tuples under unitary group actions. We also wish to highlight that, Theorem~\ref{thm:dto3}  requires not only using a quiver different from that in the proof of \cite[Theorem 1.2]{GQ21}, but also a completely new %\jnote{Added:} 
and much simpler argument. 

\paragraph{About Theorem~\ref{thm:graph-iso}.} For Theorem~\ref{thm:graph-iso}, we start with \algprobm{Directed Graph Isomorphism} (\DGI), which is $\cc{GI}$-complete. We then use a natural construction of $3$-way arrays from directed graphs as recently studied in \cite{LQWWZ22}, which takes an arc $(i, j)$ and constructs an elementary matrix $\E_{i, j}$. By \cite[Observation 6.1, Proposition 6.2]{LQWWZ22}, \DGI reduces to the isomorphism problem of $U\otimes U\otimes W$ under $\GL(U)\times \GL(W)$. Theorem~\ref{thm:graph-iso} is shown by observing that the proofs of \cite[Observation 6.1, Proposition 6.2]{LQWWZ22} carry over to all subgroups of $\GL(U)$ and $\GL(W)$ that contain the corresponding symmetric groups. 

\paragraph{About Theorem~\ref{thm:toGL}.} For Theorem~\ref{thm:toGL}, let us consider the isomorphism problem of $U\otimes V\otimes W$ under $\O(U)\times\O(V)\times\O(W)$. Let $a=\dim(U)$, $b=\dim(V)$, and $c=\dim(W)$. That is, given $\tA, \tB\in \T(a\times b\times c, \F)$, we want to decide if there exists $(R, S, T)\in\O(a, \F)\times\O(b, \F)\times\O(c, \F)$, such that $(R, S, T)\circ \tA=\tB$. Our goal is to reduce this problem to an isomorphism problem of $U'\otimes V'\otimes W'$ under $\GL(U')\times\GL(V')\times\GL(W')$. The idea is to encode the requirements of $R, S, T$ being orthogonal by adding identity matrices. We then construct tensor systems $(\tA, I_1, I_2, I_3)$ and $(\tB, I_1, I_2, I_3)$ where $I_1\in \M(a, \F)$, $I_2\in \M(b, \F)$, and $I_3\in \M(c, \F)$ are the identity matrices, and the goal is to decide if there exists $(R, S, T)\in \GL(a, \F)\times\GL(b, \F)\times\GL(c, \F)$ such that $(R, S, T)\circ \tA=\tB$, $R^t R=I_1$, $S^t S=I_2$, and $T^t T=I_3$. Such a problem falls into the tensor system framework in \cite{FGS19}; a main result of \cite[Theorem 1.1]{FGS19} can be rephrased as a reduction from \algprobm{Tensor System Isomorphism} to \ThreeTIlong.

\paragraph{About Theorem~\ref{thm:5actions}.} For Theorem~\ref{thm:5actions}, polynomial-time reductions for the five actions under $\GL$ were devised in \cite{FGS19,GQ21}. The main proof technique is a gadget construction, first proposed in \cite{FGS19}, which we call the Furtony--Grochow--Sergeichuk gadget, or FGS gadget for short. Roughly speaking, this gadget has the effect of reducing isomorphism over block-upper-triangular invertible matrices to that over general invertible matrices. We will explain why this is useful for our purpose, and the structure of this gadget, in the following. 

First, let us examine a setting when we wish to restrict to consider only block-upper-triangular matrices. Suppose we wish to reduce isomorphism of $U\otimes V\otimes W$ to that of $U'\otimes U'\otimes W'$. One naive idea is to set $U'=U\oplus V$ and $W'=W$, and perform the following construction. Let $\tA\in \T(\ell\times m\times n, \F)$, and take the frontal slices of $\tA$ as $(A_1, \dots, A_n)\in\M(\ell\times m, \F)$. Then construct $\tA' = (A_1', \dots, A_n')\in\M(\ell+m, \F)$, where $A_i'=\begin{bmatrix}
0 & A_i \\
-A_i^t & 0
\end{bmatrix}$, and let the corresponding 3-way array be $\tA'\in\T((\ell+m)\times(\ell+m)\times n, \F)$. Similarly, starting from $\tB\in\T(\ell\times m\times n, \F)$, we can construct $\tB'$ in the same way. The wish here is that $\tA$ and $\tB$ are unitarily isomorphic in $U\otimes V\otimes W$ if and only if $\tA'$ and $\tB'$ are unitarily isomorphic in $U'\otimes U'\otimes W'$. It can be verified that the only if direction holds easily, but the if direction is tricky. This is because, if we start with some isomorphism $(R, S)\in \U(U')\times \U(W')$ from $\tA'$ to $\tB'$, $R$ may mix the $U$ and $V$ parts of $U'$.

This problem---more generally, the problem of two parts of the vector space potentially mixing in undesired ways---is solved by the FGS gadget, which attaches identity matrices of appropriate ranks to prevent such mixing. Figure~\ref{fig:3-tensor_isometry} is an illustration from \cite{GQ21}.
\begin{figure}[!htbp]
	\[
	\xymatrix@R=4pt@C=3pt{
		&&&&& *{} \ar@{-}'[rrrrr] 
		\ar@{-}'[dddd]|{\ell}'[dddddr]|{n}'[dddddrrrrrr]'[ddddrrrrr]'[rrrrr]'[rrrrrdr]'[rrrrrdrdddd]
		\ar@{}'[ddddrrrrr]|{A} &&&&& *{} & 
		\\ % begin row 2
		&&&&&&&&&&& *{} \ar@{-}'[rrrr]'[rrrrdr]'[rrrrdrd]'[rrrrd]'[d]'[drd]'[drdrrrr] &&&& 
		% right face of A, top-right vertex; also top-left vertex of first identity and 
		%most of box
		*{} \ar@{-}'[d] &&& % first identity, front face, top-right vertex
		\\ % begin row 3
		&&&&&&&&&&& *{} \ar@{.}'[rdrrrr]|{I_{m+1}} &&&& *{} & % first identity, front 
		%face, bottom edge
		*{} % first identity, right face, top-right vertex
		\\ % begin row 4
		&&&&&&&&&&& *{} \ar@{--}'[rd]  % 11 dashed guide to bottom of second identity
		& *{} \ar@{-}'[d]'[drd]'[drdrrrr]'[drrrrr]'[rrrr]'[rrrrd]'[rrrrddr] &&&& *{}  % 
		%first identity, bottom face, bottom two vertices; most of the second identity
		\\ % begin row 5
		*{} \ar@{-}'[rrrrr]|{\ell} \ar@{-}'[dddd]|{m} \ar@{}'[ddddrrrrr]|{-A^t}% front 
		%face of -A^t top-left vertex
		&&&&& *{} \ar@{-}'[rrrrr]|{m} \ar@{-}'[dddd] &&&&& *{}  % front face of A, bottom 
		%edge; also right edge of -A^t
		&*{} \ar@{--}'[rdrd]% 11 dashed guide to top of third identity
		& *{} \ar@{-}'[rrrr] \ar@{.}'[rrrrrd]|{I_{m+1}} &&&& *{}  % second identity, 
		%front face, bottom edge; 
		& *{} % second identity, right face, top-right vertex
		\\ % begin row 6
		&&&&&& *{} \ar@{-}'[dddd] % 6 bottom face of A, bottom-left vertex; also right 
		%face of -A^t, top-right vertex & right edge
		\ar@{--}'[rdrdrd]'[rdrdrdrrrrr]'[rdrdrdrdrrrrrrd][rdrdrdrdrrrrrrdddddddddd] % 
		%dashed guide to bottom of third identity / top of fourth identity, then to right 
		%edge of fifth vertical identity
		& \ar@{--}'[rdrdrd]'[rdrdrdddddddd] % dashed guide to fourth vertical identity, 
		%right edge of front face
		&&&& *{} \ar@{--}'[rdrd] % bottom face of A, bottom-right vertex; dashed guide to 
		%top of third identity
		&& *{} \ar@{}'[drrrr]|{\ddots} &&&& *{} % second identity, bottom face, bottom 
		%vertices
		\\ % begin row 7
		&&&&&& *{} % 6 middle of rightmost edge of right face of -A^t
		\ar@{--}'[rdrdrd]'[rrrdddrrrrrrrrr] % dashed guide to bottom of fourth identity
		&&&&&&& *{} &&&& *{} % third identity, front face, top vertices
		\\ % begin row 8
		&&&&&&&&&&&&& *{} \ar@{-}'[rrrr] % 11 third identity, front face, bottom edge
		\ar@{-}'[u]'[urrrr]'[rrrr]'[rrrrrd]'[rrrrr]'[urrrr] % third identity, most of the 
		%box
		\ar@{-}'[rd]'[rdrrrr] % third identity, bottom face, left and bottom edges
		\ar@{.}'[rdrrrr]|{I_{m+1}} % third identity, dashed diagonal
		& & & & *{} & % third identity, front face, bottom-right vertex
		*{} % third identity, right face, top-right vertex
		\\ % begin row 9
		*{} \ar@{-}'[rrrrr] \ar@{-}'[rd]'[rdrrrrr] % front face of -A^t, bottom-right 
		%vertex
		& & & 
		& &  *{} \ar@{-}'[rd] % 6 front face of -A^t, bottom-left vertex
		& & & & *{} \ar@{--}'[dddd] % 9 dashed guide to top of fourth identity, corner, 
		%down to edge between third vertical and fourth vertical identities
		& & && &  *{} & & & &  *{} % third identity, bottom face, bottom vertices | fourth 
		%identity, top face, top-left vertex
		\ar@{-}'[rrrrr]'[rrrrrd]'[d]'[ddr]'[ddrrrrrr]'[rrrrrd]'[rrrrr]'[rrrrrdr]'[rrrrrdrd]
		% 16 fourth identity (first f), most of box
		\ar@{-}'[d] % fourth identity, front face, left edge
		& & & & & *{} % fourth identity, front face, top-right vertex
		\\ % begin row 10
		& *{} \ar@{-}'[ddd]'[dddrd]'[dddrdr]'[dddr]'[r]'[rrd]'[rrdddd] % 2 bottom face of 
		%-A^t, bottom-left vertex | most of first vertical identity
		& *{} \ar@{.}'[rdddd] % 3 first vertical identity, front face, top-right vertex 
		%and dashed diagonal
		& & & *{}  \ar@{--}'[rdrd] % -A^t, bottom face, bottom edge, 3/4 way across; 
		%diagonal guide to third vertical identity
		& *{} % -A^t, bottom face, bottom-right vertex
		\ar@{--}'[rdrd] % dashed guide to bottom of third vertical identity
		&&& *{} % 10 dashed guide to bottom of fourth identity, corner
		&&&&&&&&& *{} \ar@{.}'[rdrrrrr]|{I_{3m+2}} &&&&& *{}  % fourth identity, front 
		%face, bottom edge
		& *{} % fourth identity, right face, top-right vertex
		\\ % begin row 11
		&&& *{} \ar@{}'[rrrrddd]|{\ddots} % 3 first vertical identity, right face, 
		%top-right vertex
		&&&&&&&&&&&&&&&& *{}  \ar@{-}'[d]'[drrrrr]'[rrrrr]'[rrrrrdr]'[rrrrrdrd]'[drd]'[d] 
		% 19 fourth identity, bottom face, bottom-left vertex | fifth identity front face, 
		%top-left vertex, most of box
		&&&&&*{} % 24 fourth identity, bottom face, bottom-right vertex 
		\\ % begin row 12
		&&&&&&&*{} \ar@{-}'[ddd]'[dddrd]'[dddrdr]'[dddr]'[r]'[rrd]'[rrdddd] \ar@{-}'[r] % 
		%7 third vertical identity, front face, top-left vertex, most of box
		& *{} \ar@{.}'[rdddd] % 8 third vertical identity, front face, top-right vertex
		&&&&&&&&&&& *{} \ar@{.}'[rrrrrdr]|{I_{3m+2}} % 19 fifth identity, front face, 
		%bottom-left vertex
		&&&&&*{} \ar@{-}'[rd] % fifth identity, bottom face, right edge
		&*{} % fifth identity, right face, top-right vertex
		\\ % begin row 13
		&*{} \ar@{-}'[r] &*{} % 1 & 2 first vertical identity, front face, bottom edge
		&&&&&&&*{} % 9 third vertical identity, right face, top-right vertex
		\ar@{--}'[rdrdrd]'[rdrdrdrrrrrrrrrr] % dashed guide to bottom of sixth identity
		&&&&&&&&&&&*{} \ar@{}'[drrrrrr]|{\ddots} % 20 fifth identity, bottom face, 
		%bottom-left vertex
		&&&&&*{} % fifth identity, bottom face, bottom-right vertex
		\\ % begin row 14
		&&*{}&*{} % 3 first vertical identity, bottom face, bottom two vertices
		&&&&&&&&&&&&&&&&&&*{} 
		\ar@{-}'[rrrrr]'[rrrrrd]'[d]'[ddr]'[ddrrrrrr]'[rrrrrd]'[rrrrr]'[rrrrrdr]'[rrrrrdrd]
		\ar@{-}'[d] % 21 most of sixth identity
		&&&&&*{} % sixth identity, front face, top vertices
		\\ % begin row 15
		&&&&&&&*{} \ar@{-}'[r] & *{} % 6 and 7, third vertical identity, front face, 
		%bottom edge
		&&&&&&&&&&&&&*{} \ar@{.}'[rdrrrrr]|{I_{3m+2}}&&&&&*{}% 21 & 26 sixth identity, 
		%front face, bottom edge
		& *{} % 27 sixth identity, right face, top-right vertex
		\\ % begin row 16
		&&&&&&&&*{} % 8 third vertical identity, bottom face, bottom-left vertex
		&*{}  \ar@{-}'[dddd]'[ddddrd]'[ddddrdr]'[ddddr]'[r]'[rrd]'[rrddddd] \ar@{-}'[r]  % 
		%9 third vertical identity, bottom face, bottom-right vertex | fourth vertical 
		%identity, top-left vertex, most of box
		&*{} \ar@{.}'[rddddd] % 10 fourth vertical identity, front face, top-right vertex
		&&&&&&&&&&&&*{} &&&&&*{} % 18 sixth identity, bottom face, bottom edge
		\\ % begin row 17
		&&&&&&&&&&&*{} % 11 fourth vertical identity, right face, top-right vertex
		&&&&&&&&&&&&&&&&&&&&&&&&&&&&&&&&&&&
		\\ % begin row 18
		&&&&&&&&&&& \ar@{}'[rdrdrd]|{\ddots} 
		&&&&&&&&&&&&&&&&&&&&&&&&&&&&&&&&&&&&&& 
		\\ % begin row 19
		&&&&&&&&&&&&&&*{} \ar@{-}'[dddd]'[ddddrd]'[ddddrdr]'[ddddr]'[r]'[rrd]'[rrddddd] 
		\ar@{-}'[r]  & *{} \ar@{.}'[rddddd] % 14 fifth vertical identity, front face, top 
		%edge, most of box
		&&&&&&&&&&&&&&&&&&&&&&&&&&&&&&
		\\ % begin row 20
		&&&&&&&&&*{} \ar@{-}'[r] & *{} % fourth vertical identity, front face, bottom edge
		&&&&&&*{} % 13 fifth vertical identity, right face, top-right vertex
		&&&&&&&&&&&&&&&&&&&&&&&&&&&&& 
		\\ % begin row 21
		&&&&&&&&&&*{} & *{} % fourth vertical identity, bottom face, bottom edge
		&&&&&&&&&&&&&&&&&&&&&&&&&&&&&&&&&&&&&&
		\\ % begin row 22
		&&&&&&&&&&&&&*{} &*{} % 14 and 15 fifth vertical identity, bottom face, bottom two 
		%vertices
		&&&&&&&&&&&&&&&&&&&&&&&&&&&&&&&
		\\ % begin row 23
		&&&&&&&&&&&& &&*{} \ar@{-}'[r] &*{}  % 12 & 13 fifth vertical identity, front 
		%face, bottom edge
		&&&&&&&&&&&&&&&&&&&&&&&&&&&&&&&
		\\ % begin row 24
		&&&&&&&&&&&&&&&*{} &*{} % 14 and 15 fifth vertical identity, bottom face, bottom 
		%two vertices
		&&&&&&&&&&&&&&&&&&&&&&&&&&&&&&&
	}
	\]
	\caption{ \label{fig:3-tensor_isometry} Pictorial representation of the reduction 
		for Theorem~\ref{thm:5actions}; credit for the figure goes to the authors of \cite{GQ21}, reproduced here with their permission.}
\end{figure}
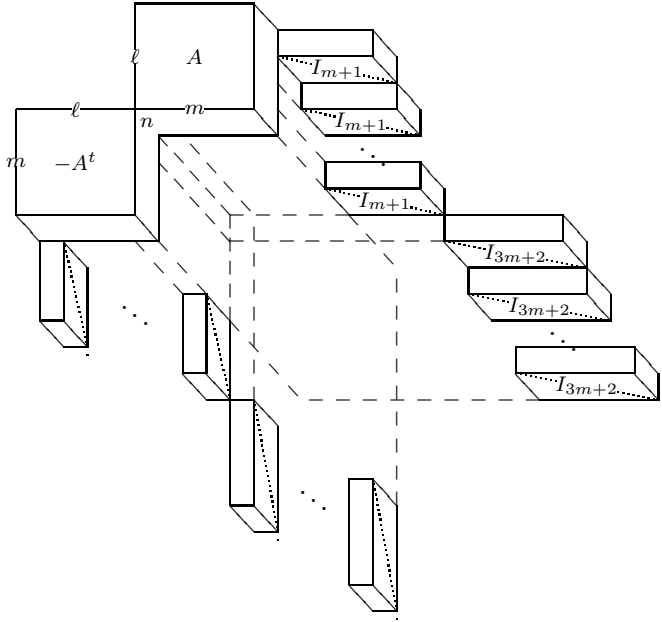
It can be verified that, because of the identity matrices $I_{m+1}$ and $I_{3m+2}$, an isomorphism $R$ in the $U'$ part has to be block-upper-triangular, and the blocks would yield the desired isomorphism for the $U$ and $W$ parts. 

This was done for the general linear group case in \cite{GQ21}. For the unitary group case, this almost goes through, because if a unitary matrix is block-upper-triangular, then it is actually block-diagonal, and the blocks are unitary too. Still, some technical difficulties remain. For example, now the gadgets cause some problem for the only if direction (which was easy in the $\GL$ case), so we must verify carefully that the added gadgets allow for extending the original orthogonal or unitary transformations to bigger ones. As another example, the proof in \cite{FGS19} relies on the Krull--Schmidt theorem for quiver representations (under general linear group actions). Fortunately, in our context we can replace that with a result of Sergeichuk \cite[Theorem 3.1]{Ser98} so that the proof can go through. Finally, we also require the use of the Singular Value Theorem to handle certain degenerate cases.

\paragraph{About Theorem~\ref{thm:dto3}.} For Theorem~\ref{thm:dto3}, at a high level we follow the strategy of reduction from \DeeTIlong to \ThreeTIlong from \cite{GQ21}, but we find that the construction there does not quite work in the setting of orthogonal or unitary group actions. As in \cite{GQ21}, we shall reduce \DeeTIlong to \algprobm{Algebra Isomorphism}, which reduces to \ThreeTIlong by Theorem~\ref{thm:5actions}. As in \cite{GQ21}, we also use path algebras. However, they use Mal'cev's result on the conjugacy of the Wedderburn complements of the Jacobson radical, and this result seems not to hold if we require the conjugating matrix to be orthogonal or unitary. To get around this, our main technical contribution is to develop a related but in fact \emph{simpler} path algebra construction, that avoids the use of the aforementioned deep algebraic results, and works not only in the $\GL$ setting, but extends to the orthogonal and unitary settings as well. This then gives us the reduction from \algprobm{$d$-Tensor Orthogonal Isomorphism} to \algprobm{Orthogonal Algebra Isomorphism}, and similarly in the unitary case.

\subsection{Summary and future directions}\label{subsec:outlook}

\paragraph{Context within recent developments on the complexity of \algprobm{Tensor Isomorphism}.} %\ynote{Modified here for the sake of anonymity.} 
Following \cite{GQ21,TI2}, this paper contributes to building up the complexity theory around \algprobm{Tensor Isomorphism} and closely related problems. That is, \cite{GQ21} introduced $\cc{TI}$-completeness and showed that many isomorphism problems, under the action of a product of general linear groups, were $\cc{TI}$-complete. Then \cite{TI2} focused on applications of tensor techniques for reductions around \algprobm{$p$-Group Isomorphism}. Several recent works further enrich this theory, such as \cite{GQT22,DAlconzo} showing more problems to be $\cc{TI}$-complete. 

In \cite{TI-IV}, more efficient reductions between the five actions by general linear groups are designed. More specifically, several key reductions in \cite{GQ21} incur quadratic increases in the side lengths of the resulting tensors (or bilinear maps). In \cite{TI-IV}, new reductions are devised that incur only linear increases, which leads to several applications and further results. However, it seems difficult to adapt these linear-length reductions to the classical group actions as studied in this paper, due to the ``cancellation'' step in \cite{TI-IV} in which matrices corresponding to elementary row and column operations (therefore not unitary or orthogonal) are used. It is an interesting open problem to devise linear-length reductions between the five actions for some classical groups.

%The fourth will focus on improving the size of the reductions and its consequences.

\paragraph{Some remarks on our results and techniques for more matrix groups.} In this paper, we examine isomorphism problems of $d$-way arrays under various actions of different subgroups of the general linear group from a complexity-theoretic viewpoint. We show that for $3$-way arrays, the isomorphism problems over orthogonal and symplectic groups reduce to that over the general linear group. We also show that for orthogonal and unitary groups, the five isomorphism problems corresponding to the five natural actions  are polynomial-time equivalent, and \DeeTIlong reduces to \ThreeTIlong.

As seen in Section~\ref{sec:overview}, the proof strategies of our results are adapted from previous works \cite{FGS19,GQ21,LQWWZ22}, although certain non-trivial adaptations were necessary, especially for the proofs of Theorem~\ref{thm:5actions} and~\ref{thm:dto3}, beyond careful examinations of previous proofs. Interestingly, in extending the proof strategies from these previous works to our main results, we also encountered some obstacles that would seem are more generally obstacles to reaching a uniform result for all classical groups. For example, the reduction from orthogonal and symplectic to general linear seems not work for unitary---the standard linear-algebraic gadgets have no way to force complex conjugation---and the reductions between the five actions on $3$-way arrays seem not work for symplectic. One stumbling block (pun intended) in the symplectic case is that even a symplectic block-\emph{diagonal} matrix (let alone a symplectic block-triangular matrix) need not have its individual blocks be symplectic. For example, the matrix $A \oplus B$, with $A,B$ both $n \times n$, is symplectic iff $AB^t = I$. 

\paragraph{Complexity classes $\cc{TI}_\cG$.} To put some of these remaining questions in a larger framework, we introduce a notation that highlights the role of the group doing the acting. Previously in computational complexity, the most studied isomorphism problems are over symmetric groups (such as \GIlong) and over general linear groups (such as tensor, group, and polynomial isomorphism problems). The former leads to the complexity class $\cc{GI}$ \cite{KST93}, and the latter leads to the complexity class $\cc{TI}$ \cite{GQ21}. Based on Theorems~\ref{thm:5actions} and~\ref{thm:dto3}, it may be interesting to define $\cc{TI}_\cG$, where $\cG$ is a family of matrix groups, consisting of all problems polynomial-time reducible to the 3-tensor isomorphism problem over $\cG$. Let $\S$, $\GL$, $\O$, $\U$, $\Sp$ be the symmetric, general linear, orthogonal (over $\R$), unitary (over $\C$), and symplectic group families. Then $\cc{TI}_{\GL} = \cc{TI}$ by definition, and $\cc{TI}_{\S}=\cc{GI}$, as asking if two $3$-tensors are the same up to permuting the coordinates is just the colored $3$-partite 3-uniform hypergraph isomorphism problem, a $\cc{GI}$-complete problem (by the methods of \cite{ZKT}). Then a special case of Theorem~\ref{thm:graph-iso} can be reformulated as $\cc{TI}_\S\subseteq \cc{TI}_\O\cap\cc{TI}_\U$, and special cases of Theorem~\ref{thm:toGL} can be reformulated as $\cc{TI}_\O, \cc{TI}_\Sp\subseteq \cc{TI}_\GL$. It may be interesting to investigate $\cc{TI}_\cG$ with $\cG$ being other subgroups of $\GL$, such as special linear, affine, and Borel or parabolic subgroups.

\paragraph{Open questions.} With this notation in hand, we highlight the following questions left open by our work:

\begin{question}
	Which, if any, of $\cc{TI}_{\O}, \cc{TI}_{\U}, \cc{TI}_{\Sp}$ are equal to $\cc{TI}$? 
\end{question}

As a warm-up in this direction, one may ask which of these classes is not only $\cc{GI}$-hard, but contains \CodeEqlong (permutational or monomial).

We suspect that $\cc{GI} \subseteq \cc{TI}_{\Sp} \cap \cc{TI}_{\SL}$ as well, for the following reason. Although the symplectic groups $\Sp_n$ and the special linear groups $\SL_n$ do not contain the symmetric group $S_n$ given by $n \times n$ permutation matrices, they do contain isomorphic copies of $S_{n'}$ for $n' \geq \Omega(n)$. In particular, $\Sp_{2n}$ contains $S_n$ as the subgroup $ \{ A \oplus A^T : A \in S_n\}$, and $\SL_n \cap S_n = A_n$ (and contains an isomorphic copy of $S_{n-2}$, where even $\pi \in S_{n-2}$ get embedded as $P_{\pi} \oplus I_2$ and odd $\pi$ get embedded as $P_{\pi} \oplus \tau$, where $\tau = \begin{bmatrix} 0 & 1 \\ 1 & 0 \end{bmatrix}$).  

\begin{question}
	Is $\cc{TI}_{\SL}$ contained in $\cc{TI}$? Are they equal?
\end{question}

\begin{question}
	Is $\cc{TI}_\U \subseteq \cc{TI}$? And the same question for unitary versus general linear group actions over finite fields.
\end{question}

\begin{question}
	What is the complexity of various problems in $\cc{TI}$ when restricted from $\GL$ to other form-preserving groups? A notable family of such groups is the mixed orthogonal groups $\O(p,q)$, defined over $\R$ by preserving a real symmetric form of signature $(p,q)$. But more generally, what about form-preserving groups for forms that are neither symmetric nor skew-symmetric?
\end{question}

\paragraph{Paper organisation.} After presenting some preliminaries in Section~\ref{sec:prel}, we prove the main results: Theorem~\ref{thm:graph-iso} in Section~\ref{sec:graph-iso}, Theorem~\ref{thm:toGL} in Section~\ref{sec:toGL}, Theorem~\ref{thm:5actions} in Section~\ref{sec:5actions}, and Theorem~\ref{thm:dto3} in Section~\ref{sec:dto3}.

\section{Preliminaries}\label{sec:prel}
\paragraph{Fields.} All our reductions are constant-free $p$-projections (that is, the only constants they use other than copying the ones already present in the input are $\{0,1,-1\}$). When the fields are representable on a Turing machine, our reductions are logspace computable. For arbitrary fields, the reductions are in logspace in the Blum--Shub--Smale model over the corresponding field.

\paragraph{Linear algebra.} All vector spaces in this
article are finite dimensional. Let $V$ be a vector space over a field $\F$. The
dual of $V$, $V^*$, consists of all linear or anti-linear forms over $\F$. In
this case when anti-linear is considered, $\F$ is a
quadratic extension of a subfield $\K$, there is thus an automorphism $\alpha \in \aut_\K(\F)$ of order two, and anti-linear means $f(\lambda v) = \alpha(\lambda) f(v)$. An example is $\F=\C$ and $\K=\R$, and $\alpha$=complex conjugation. Whether $V^*$ denotes linear or antilinear maps should be evident from context.

\paragraph{Some subgroups of general linear groups.}
Let $V$ be
a vector space over a field $\F$. Let $\GL(V)$ be
the general linear group over $V$, which consists of all invertible linear maps on
$V$. Let $\phi:V\times V\to \F$ be a bilinear or
sesquilinear form on $V$. In the case when $\phi$ is sesquilinear, $\F$ is a
quadratic extension of a subfield $\K$; sesquilinear means that it is linear in one argument and anti-linear in the other. Then $\GL(V)$ acts on $\phi$ naturally, by
$M\in\GL(V)$
sends $\phi$ to $\phi\circ M$, defined as $(\phi\circ M)(v, v')=\phi(M(v), M(v'))$.
The subgroup of $\GL(V)$ that preserves $\phi$ is
denoted as $\cG(V, \phi):=\{M\in\GL(V) \mid \phi\circ M=\phi\}$.

It is well-known that some classical groups arise as $\cG(V, \phi)$.
\begin{enumerate}
	\item Let $\F=\mathbb{C}$. Let $\phi$ be the sesquilinear form on $V=\mathbb{C}^n$
	defined as
	$\phi(u, v)=\sum_{i\in[n]}u_i^*v_i$, where $u_i^*$ is the complex conjugate of
	$u_i$. Then $\cG(V, \phi)$ is the unitary group $\mathrm{U}(n, \mathbb{C})$.
	\item Let $\F=\mathbb{R}$. Let $\phi$ be the symmetric bilinear form on
	$V=\mathbb{R}^n$ defined as
	$\phi(u, v)=\sum_{i\in[n]}u_i v_i$. Then $\cG(V, \phi)$ is the orthogonal group $\mathrm{O}(n, \mathbb{R})$.
	\item Let $\phi$ be the skew-symmetric bilinear form on $V=\mathbb{F}^{2n}$, defined as
	$\phi(u, v)=\sum_{i\in[n]}(u_i v_{2n-i+1}-u_{n+i} v_{n-i+1})$. Then $\cG(V, \phi)$ is the symplectic group $\mathrm{Sp}(2n, \mathbb{F})$.
\end{enumerate}

Depending on the underlying fields, orthogonal groups may indicate some families of groups preserving different (non-congruent) symmetric forms. In this paper we always use orthogonal groups and unitary groups w.r.t. the standard bilinear or sesquilinear form as defined above. 

\paragraph{Matrices.} Let $\M(l\times m, \F)$ be the linear space of $l\times m$ matrices over $\F$, and $\M(n,\F):=\M(n\times n,\F)$. Given $A\in\M(l\times m, \F)$, denote by $A^t$ the transpose of $A$. Given $A\in\GL(n, \F)$, denote by $A^{-1}$ the inverse of $A$ and by $A^{-t}$ the inverse transpose of $A$. 

We use $I_n$ to denote the $n\times n$ \emph{identity matrix}, and if it is clear from the context, we may drop the subscript $n$. For $(i, j)\in[n]\times [n]$, let $\E_{i,j}\in \M(n, \F)$ be the \emph{elementary matrix} where the $(i, j)$th entry is $1$, and the remaining entries are $0$. For $i \neq j$, the matrix $\E_{i,j}-\E_{j,i}$ is called an \emph{elementary alternating matrix}.

\paragraph{3-way arrays and some group actions on them.} 
Let $\T(\ell\times m\times n, \F)$ be the linear space of $\ell\times m\times n$ 3-way arrays over $\F$. 
Given $\tA\in \T(\ell\times m\times n, \F)$, the $(i,j,k)$th entry of $\tA$ is 
denoted as $\tA(i,j,k)\in \F$. We can slice $\tA$ along one direction and obtain several matrices, which are called slices. For example, slicing along the third coordinate, we obtain the \emph{frontal} slices, namely $n$ matrices $A_1, \dots, A_n\in \M(l\times m, \F)$, where $A_k(i,j)=\tA(i,j,k)$. Similarly, we also obtain the \emph{horizontal} slices by slicing along the first coordinate, and the \emph{lateral} slices by slicing along the second coordinate. 

A $3$-way array allows for group actions in three directions. 
Given $P\in \M(\ell, \F)$ and $Q\in \M(m, \F)$, let $P\tA Q$ be the $\ell \times m\times n$ $3$-way array whose $k$th frontal slice is $P A_k Q$. For $R=(r_{i,j})\in \M(n, \F)$, let $\tA^R$ be the $\ell\times m\times n$ $3$-way array whose $k$th frontal slice is $\sum_{k'\in[n]}r_{k',k}A_{k'}$. 

\paragraph{Tensors.} Let $V_1, \dots, V_c$ be vector spaces over $\F$. Let $a_i,
b_i, i\in[c]$ be non-negative integers, such that for each $i$, $a_i + b_i > 0$. A tensor $T$ of type $(a_1, b_1; a_2, b_2;
\dots;
a_c, b_c)$ supported by $(V_1, \dots, V_c)$ is an element in $V_1^{\otimes
	a_1}\otimes V_1^{*\otimes b_1} \otimes
V_2^{\otimes a_2}\otimes V_2^{*\otimes b_2}\otimes\dots\otimes V_c^{\otimes
	a_c}\otimes V_c^{*\otimes b_c}$. We say that $V_i$'s are the supporting vector
spaces of $T$, and $a_i$ (resp. $b_i$) is the multiplicity of $T$ at $V_i$ (resp.
$V_i^*$). (By convention $V^{\otimes 0} := \F$; note that $U \otimes \F \cong U$, since our tensor products are over $\F$.)

The order of $T$ is $\sum_{i\in[c]}(a_i+b_i)$. We say that $T$ is \emph{plain},
if $a_1=\dots=a_c=1$ and $b_1=\dots=b_c=0$.
The group $\GL(V_1)\times\dots\times\GL(V_c)$ acts naturally on the space $V_1^{\otimes a_1}\otimes V_1^{*\otimes b_1} \otimes
V_2^{\otimes a_2}\otimes V_2^{*\otimes b_2}\otimes\dots\otimes V_c^{\otimes a_c}\otimes V_c^{*\otimes b_c}$. Two tensors in this space are isomorphic if they are in the same orbit under this group action.

\paragraph{From tensors to multiway arrays.} For $i\in[c]$, let $V_i$ be a dimension-$d_i$ vector
space over $\F$. Let $T$ be a tensor in $V_1^{\otimes
	a_1}\otimes V_1^{*\otimes b_1} \otimes
V_2^{\otimes a_2}\otimes V_2^{*\otimes b_2}\otimes\dots\otimes V_c^{\otimes
	a_c}\otimes V_c^{*\otimes b_c}$. After fixing the basis of each $V_i$,
$T$ can be represented as a multiway array 
$R_T\in\T(d_1^{\times (a_1+b_1)} \times \dots \times d_c^{\times (a_c+b_c)})$ and the elements in $\GL(V_i)\cong \GL(d_i, \F)$ can be represented as invertible $d_i \times d_i$ matrices. The action of $(A_1, \dots, A_c)$ on $R_T$ can be explicitly written following Definition~\ref{def:tensor}, using $A_i$ for $a_i$ directions and $A_i^{-t}$ for $b_i$ directions. 

\section{Proof of Theorem~\ref{thm:graph-iso}}\label{sec:graph-iso}

Our goal is to reduce the directed graph isomorphism problem \DGI, which is $\cc{GI}$-complete \cite{KST93}, to the isomorphism problem of $U\otimes U\otimes V$ under $\cG(U)\times\cG(V)$ where $\cG=\{\cG_n\}$ is a family of subgroups of the general linear groups that contains the symmetric groups. Note that $\S_n$ is naturally embedded in $\GL(n, \F)$ by taking the matrix representation of permutations. In this section, we also view $\cG_n$ as a matrix group.

Recall that two directed graphs $G=([n], E)$ and $H=([n], F)$ are isomorphic, if there exists a bijective map $f:[n]\to [n]$, such that $(i, j)\in E$ if and only if $(f(i), f(j))\in F$.

We rephrase~\cite[Proposition 6.1, Proposition 6.2]{LQWWZ22} to adapt them to our context in the following proposition, which would conclude our proof of Theorem~\ref{thm:graph-iso}.

\begin{proposition}\label{thm: gi to ti}
	Given two directed graphs $G=([n], E)$ and $H=([n], F)$, and a group family $\cG=\{\cG_n\}$ satisfying that $\S_n\leq\cG_n\leq\GL(n,\F)$. We can efficiently construct two $3$-tensors $\tS_G$ and $\tS_H$ associated with $G$ and $H$, respectively, such that $\tS_G$ is isomorphic to $\tS_H$ in $U\otimes U\otimes W$ under the action of $\cG(U)\times \cG(W)$ if and only if $G$ is isomorphic to $H$.
\end{proposition}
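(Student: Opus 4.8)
The plan is to follow the construction of Li--Qiao--Wang--Wang--Zhang \cite{LQWWZ22} essentially verbatim, and then check that every step of the argument only ever uses permutation matrices, so that it survives replacing $\GL$ by any intermediate group family $\cG$. Concretely, from a directed graph $G = ([n], E)$ one forms the $3$-way array $\tS_G \in \T(n' \times n' \times n'')$ whose frontal slices encode the arcs: for each arc $(i,j) \in E$ one uses the elementary matrix $\E_{i,j}$ (possibly together with some bookkeeping gadgets, e.g.\ distinguishing the ``vertex'' block from an ``arc'' block and tagging each arc-slice so that arc-slices cannot be mixed with each other under the relevant group), and similarly for $H$. The ``only if'' direction is the easy one: an isomorphism $f : [n] \to [n]$ of directed graphs is a permutation, its permutation matrix $P_f$ lies in $\S_n \le \cG_{n'}$, and one checks directly that $(P_f, \dots)$ together with the induced permutation on the arc-indexing space (again a permutation matrix, hence in $\S_{n''} \le \cG_{n''}$) sends $\tS_G$ to $\tS_H$ by the definition of $\circ$. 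Since both transformations are permutation matrices, they lie in $\cG(U) \times \cG(W)$ for \emph{any} admissible family $\cG$.

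For the ``if'' direction I would argue as in \cite[Obs.~6.1, Prop.~6.2]{LQWWZ22}: suppose $(R,T) \in \cG(U) \times \cG(W)$ with $(R,R,T)\circ \tS_G = \tS_H$. The point is that in \cite{LQWWZ22} this is deduced for $(R,T) \in \GL(U) \times \GL(W)$, and since $\cG(U) \le \GL(U)$ and $\cG(W) \le \GL(W)$, the hypothesis of that lemma is \emph{a fortiori} satisfied — no extra argument is needed, because the conclusion (existence of a graph isomorphism) does not depend on which subgroup $(R,T)$ came from. In other words: $\tS_G \cong_{\cG} \tS_H \implies \tS_G \cong_{\GL} \tS_H \implies G \cong H$ by \cite{LQWWZ22}, and $G \cong H \implies \tS_G \cong_{\S} \tS_H \implies \tS_G \cong_\cG \tS_H$ by the previous paragraph together with $\S_n \le \cG_n$. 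Chaining these gives the biconditional for every $\cG$ with $\S_n \le \cG_n \le \GL(n,\F)$, which is exactly what is claimed. Efficiency of the construction is immediate since it only writes down elementary matrices and a bounded number of gadget blocks whose sizes are polynomial in $n$.

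The one place that deserves genuine care — and what I expect to be the main obstacle — is making sure the chosen encoding $\tS_G$ really has the property that \emph{every} $\GL$-isomorphism, not merely every permutational one, forces a graph isomorphism; this is the content of the gadget design in \cite{LQWWZ22} and I would want to re-examine it rather than take it on faith, because the whole reduction hinges on the rigidity built into those gadgets (the blocks tagging vertices vs.\ arcs, and preventing arc-slices from being mixed). If that rigidity holds over $\GL$, it holds over every subgroup for free; the subtlety is purely in re-verifying the source result's gadget argument and confirming that nothing in it was specific to, say, a field of characteristic zero or to the full general linear group. Assuming \cite[Prop.~6.1, Prop.~6.2]{LQWWZ22} as stated, the proof of Proposition~\ref{thm: gi to ti} is then just the two-line sandwich above, and Theorem~\ref{thm:graph-iso} follows by instantiating $\cG$ as the orthogonal or unitary family (both of which contain the permutation matrices) and recalling that the action on $U \otimes U \otimes V$ is precisely the \algprobm{Bilinear Form $\cG$-Pseudo-isometry} action.
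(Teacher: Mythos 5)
Your proposal is correct and follows essentially the same route as the paper: the paper likewise takes the gadget-free \cite{LQWWZ22} encoding (frontal slices $\E_{i,j}$ for $(i,j)\in E$), gets the easy direction from permutation matrices lying in $\S_n\le\cG_n$, and gets the converse by noting that a $\cG$-isomorphism is in particular a $\GL$-isomorphism. The only difference is that the paper spells out the $\GL$-rigidity you flag as needing re-verification --- $T^t\E_{i,j}T$ is the rank-one matrix $[t_{k,i}t_{\ell,j}]$, whose support must lie in $F$ since the slices of $\tS_H$ are linearly independent elementary matrices, and invertibility of $T$ supplies a permutation $\sigma$ with $t_{\sigma(i),i}\neq 0$ for all $i$ --- an argument valid over any field, so your sandwich goes through as stated.
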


\begin{proof}
	\paragraph{The construction. }
	For directed graph $G=([n],E)$, we construct the associated 3-way array $\tS_G\in \T(n\times n\times |E|, \F)$ by setting its frontal slices as $(\E_{i, j} \mid (i, j)\in E)$, where edges in $E$ are ordered lexicographically. We also construct $\tS_H\in \T(n\times n\times |F|, \F)$ associated with $H=([n],F)$ in the same way. Let $m = |E| = |F|$. We will show that $G \cong H$ as graphs iff $\tS_G$ and $\tS_H$ are in the same orbit of $\cG(U) \times \cG(V)$ acting on $U \otimes U \otimes V$, where $U = \F^n, V= \F^{m}$.
	
	\paragraph{The if direction.}
	Let $\sigma\in\S_n$ be an isomorphism from $G$ to $H$, and let $\tau\in\S_m$ be the induced permutation of edges. Then the permutation matrices corresponding to $\sigma$ and $\tau$ yield an isomorphism from $\tS_G$ to $\tS_H$. Note that here we need to use the condition that $\cG$ contains symmetric groups.
	
	\paragraph{The only if direction.}
	Let $T\in\cG_n\leq\GL(n,\F)$ and $R\in\cG_m\leq\GL(m,\F)$ such that $T^t\tS_GT= \tS_H^R$. Denote by $A_r$ the $r$th frontal slice of $\tS_H^R$. Let $t_{i,j}\in\F$ be the $(i,j)$th entry of $T$. Then for each $(i,j)\in E$, let $r\in[m]$ be the index corresponding to the edge $(i,j) \in E$. Then we have:
	\[
	T^t\E_{i,j}T
	=\begin{bmatrix}t_{1,i}\\\vdots\\t_{n,i}\end{bmatrix}\begin{bmatrix}t_{1,j}&\cdots&t_{n,j}\end{bmatrix}=[t_{k,i}t_{\ell,j}]_{k,\ell\in[n]}= A_r.
	\]
	Note that for each $r\in[m]$, $A_r$ is a linear combination of the frontal slices of $\tS_H$. Since the slices of $\tS_H$ are of the form $E_{i,j}$, and these are linearly independent, it follows that if the $(k,\ell)$th entry $t_{k,i}t_{\ell,j}$ of $A_r$ is non-zero for some $r\in[m]$, then $\E_{k,\ell}$ must be present with nonzero coefficient in this linear combination, and therefore we must have $(k,\ell) \in F$. 
	
	Next, as $T$ is invertible, there exists a permutation $\sigma\in\S_n$ such that $t_{\sigma(i),i}\neq 0$ for all $i\in[n]$ (for otherwise, considering the expression of $\det T$ as a sum over permutations, we would get $\det T =0$). 
	In particular, $t_{\sigma(i),i}t_{\sigma(j),j}\neq 0$ for 
	any $i,j\in[n]$. Combining with the previous paragraph, we get that for $(i,j)\in E$, 
	$(\sigma(i),\sigma(j))\in F$. In other words, $\sigma$ is an injective map from 
	vertices of $G$ to vertices of $H$ which preserves arcs. Finally, the invertibility of $T$ and $R$ ensures that the frontal slices of $\tS_G$ and $\tS_H$ have the same number, which means $G$ and $H$ have the same edge size, and hence that $\sigma$ in fact induces a bijection on arcs. This shows that $G$ is isomorphic to $H$, as claimed.
\end{proof}

\section{Proof of Theorem~\ref{thm:toGL}}\label{sec:toGL}

Let $\{\phi_n:\F^n\times \F^n\to \F\}$ be a family of bilinear forms.  Suppose $\cG_n\leq\GL(n, \F)$ preserves $\phi_n$. For convenience, we only consider one action in Definition~\ref{def:five-action}, and the reader will see that the other four actions follow essentially the same line of arguments. Our goal is to decide whether $\tA, \tB\in \T(l\times m\times n, \F)$ are in the same orbit under the action of $\cG_l\times\cG_m\times\cG_n$. We would like to reduce to the isomorphism problem of $U\otimes V\otimes W$ under $\GL(U)\times\GL(V)\otimes\GL(W)$ where the dimensions of $U, V, W$ are polynomial in $l, m, n$. 

The key to the reduction is \cite[Theorem 1.1]{FGS19}. For this, we need the tensor system notion in \cite{FGS19}.
This notion is also related to tensor networks, and we refer the reader to
\cite{FGS19} for further references.

\paragraph{Tensor systems and \cite[Theorem 1.1]{FGS19}.} Let $V=\{V_1, \dots, V_c\}$ be a set of vector spaces over a field $\F$. Let
$T=\{T_1, \dots, T_n\}$ be a set of tensors, such that $T_i$ is supported by a
subset of $V_j$'s in $V$.

The types of $T_i$'s can be recorded by a bipartite
graph (with directed, possibly parallel arcs) as follows. Let $B_T=(T\cup V, E)$ be a
bipartite graph, where $E$ is a multiset whose elements are from $V\times T$ and
$T\times V$. The arcs in $E$ are as follows. Suppose the multiplicity of $T_i$ at
$V_j$ (resp. $V_j^*$) is $a_{i,j}$ (resp. $b_{i,j}$). Then the multiplicity of
$(V_j, T_i)$ (resp. $(T_i, V_j)$) in $E$ is $a_{i, j}$ (resp. $b_{i,
	j}$).
For an example, see Figure~\ref{fig:system}. 
\begin{figure}[H]
	\centering
	\begin{tikzpicture}[thick,scale=0.8, every node/.style={scale=0.9}]
	\node (T3) at (0,0) {$T_3$};
	\node (V2) at (4,1) {$V_2$};
	\node (T2) at (0,2) {$T_2$};
	\node (V1) at (4,3) {$V_1$};
	\node (T1) at (0,4) {$T_1$};
	\draw[-latex] (T1) to (V1);
	\draw[-latex] (V1) to (T2);
	\draw[-latex] (V2) to (T1);
	\draw[-latex,bend right] (T2) to (V2);
	\draw[-latex] (T2) to (V2);
	\draw[-latex] (V1) to (T3);
	\draw[-latex,bend left] (V1) to (T3);
	\draw[-latex,bend right] (T3) to (V2);
	%	\draw (T1) to[out=90,in=180] (V1);
	\end{tikzpicture}
	\caption{ \label{fig:system} The bipartite graph encoding a system of three tensors over two $\F$-vector spaces $V_1, V_2$: $T_{1} \in V_{1}^{*}\otimes V_{2}$, $T_{2} \in V_{1} \otimes V_{2}^{*} \otimes V_{2}^{*}$, and $T_{3} \in V_{1} \otimes V_{1} \otimes V_{2}^{*}$.}
\end{figure}
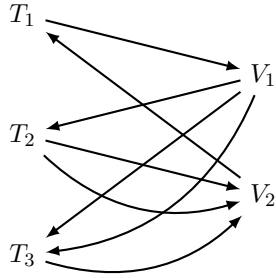

Let $S=(S_1, \dots, S_c)$ and $T=(T_1, \dots, T_c)$ be two tensor systems of the
same type. That is, their underlying bipartite graphs are the same up to renaming
$S_i$ with $T_i$. We say that $S$ and $T$ are isomorphic if and only if there
exists $A=(A_1, \dots, A_c)\in\GL(V_1)\times\dots\times\GL(V_c)$ such that (the
relevant components of) $A$ sends $S_i$ to $T_i$ for every $i\in[c]$.

\begin{theorem}[{Rephrase of \cite[Theorem 1.1]{FGS19}}]\label{thm:FGS_parameter}
	Let $S=\{S_1, \dots, S_c\}$ and $T=\{T_1, \dots, T_c\}$ be two tensor systems
	supported by $\{V_1, \dots, V_m\}$, where each $S_i$ and $T_i$ is of order $\leq 3$. 
	Then there exists an algorithm $A$ that takes
	$S$
	and $T$ and outputs two plain 3-tensors $A(S)$ and $A(T)$ supported by vector
	spaces $\{U, V,
	W\}$, such that $S$ and $T$ are isomorphic as tensor systems if and
	only if $A(S)$ and $A(T)$ are isomorphic. The algorithm runs in time polynomial in
	the dimension of $U,V,W$, and this maximum dimension is at most {$\poly(\sum_{i\in[m]}\dim(V_i),2^{\poly(c)})$}.
\end{theorem}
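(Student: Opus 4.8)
The plan is a chain of reductions turning an arbitrary tensor system of order-$\le 3$ tensors into a single plain $3$-tensor, with careful bookkeeping of dimensions. \emph{Normalisation:} first I would bring the system into a uniform shape. I replace every occurrence of a dual space $V_j^*$ throughout the system by a fresh independent space $\widetilde V_j$ and adjoin a single ``pairing'' tensor $\iota_j \in V_j \otimes \widetilde V_j$ whose coordinate array is the identity matrix $I_{\dim V_j}$; since preserving $\iota_j$ forces the transformation on $\widetilde V_j$ to equal the inverse transpose of the one on $V_j$, this faithfully encodes the contragredient $\GL(V_j)$-action. Next, whenever a tensor uses the same space $V_j$ in two of its slots, I split the second slot off onto a fresh copy $V_j'$ and adjoin a ``linking'' identity gadget forcing the transformation on $V_j'$ to equal the one on $V_j$. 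Finally, I promote tensors of order $1$ or $2$ to order $3$ by tensoring with a fixed auxiliary $1$-dimensional space (whose scalar can itself be pinned down by a tiny pairing gadget). After this every tensor is plain of order exactly $3$; the number of spaces and tensors (call the new tensor count $c'$) and the quantity $\sum_i\dim(\cdot)$ have grown only by a factor polynomial in $m$ and $c$; and everything so far is polynomial-time.

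\emph{Assembling a single $3$-tensor.} Next I would assemble the plain system $\{S_1,\dots,S_{c'}\}$ into one plain $3$-tensor $A(S)\in U\otimes V\otimes W$: for each of the three slot positions, $U$, $V$, $W$ are direct sums of independent copies of the supporting spaces (one copy per incidence with a tensor), each $S_i$ is placed as a block in the appropriate triple of copies, and the remaining blocks are zero. By itself this loses information---a $\GL(U)\times\GL(V)\times\GL(W)$-isomorphism need neither keep distinct summands from mixing nor keep the various copies of a space ``synchronised'' into one well-defined transformation on each original $V_j$---so I repair it by adjoining further gadget slices: linking identities between copies that must carry the same transformation, and ``marker'' gadgets built from identity matrices of carefully chosen, mutually incompatible sizes together with short nilpotent ``staircase'' pencils, which rigidly pin each summand in place. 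The forward implication is then immediate: an isomorphism of the tensor system $S$ is realised by a block-diagonal synchronised transformation, which fixes every gadget slice by construction and hence is an isomorphism $A(S)\to A(T)$. The reverse implication---that \emph{every} $\GL$-isomorphism of $A(S)$ and $A(T)$ has this block-diagonal synchronised form, and therefore restricts to a tensor-system isomorphism---is the crux; one would analyse the kernels, images, ranks and invariant subspaces of the gadget slices and invoke the Krull--Schmidt theorem for the associated quiver/algebra representations to deduce rigidity, handling the degenerate cases (some $S_i$ not of full rank) by additional padding.

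\emph{Dimensions and the main obstacle.} The normalisation costs only a polynomial factor. In the assembly, each additional tensor (equivalently, each step of a pairwise-combining recursion on $c$) enlarges the ambient spaces, and making its marker rigid against everything placed before it costs a constant-factor increase in dimension; compounded over $O(c)$ steps this yields ambient dimension $\poly\!\big(\sum_{i}\dim V_i,\,2^{\poly(c)}\big)$, and once the dimensions are fixed the construction is clearly polynomial in them, matching the claimed bound. The hard part is the rigidity analysis of the marker gadgets: showing, unconditionally over every field and in every degenerate configuration, that the full general linear group acting on $A(S)$ is effectively confined to the block subgroup realising tensor-system isomorphisms. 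This is the technical heart of \cite{FGS19}, and it is exactly the ingredient that later has to be revisited---replacing Krull--Schmidt by Sergeichuk's theorem \cite[Theorem 3.1]{Ser98}---in order to extend the reduction to orthogonal and unitary groups, as sketched in the overview of Theorem~\ref{thm:5actions}.
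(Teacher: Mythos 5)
Your proposal follows essentially the same route as the paper, which itself only outlines this result (Appendix~\ref{app:proof}) and defers the details to \cite{FGS19}: encode duals and repeated slots via identity ``pairing''/``linking'' gadgets (the linked-block-to-block reduction), assemble the system into one plain $3$-tensor with identity markers of incompatible sizes to prevent mixing (the block-to-plain reduction), and establish rigidity via Krull--Schmidt for the associated quiver representations, with the same $\poly(\sum_i\dim V_i,\,2^{\poly(c)})$ dimension accounting. The sketch is consistent with the paper's two-step outline and correctly identifies the rigidity analysis as the technical heart that is carried out in \cite{FGS19}.
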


Given Theorem~\ref{thm:FGS_parameter}, we now proceed to the proof of Theorem~\ref{thm:toGL}. 

\begin{proof}[Proof of Theorem~\ref{thm:toGL}] Let $\Phi_l\in\M(l, \F)$ be the matrix representation of the bilinear form $\phi_\ell$. Then construct $S=(\tA, \Phi_l, \Phi_m, \Phi_n)$ and $T=(\tB, \Phi_l, \Phi_m, \Phi_n)$, viewed as tensor systems as follows. Let $U'=\F^l$, $V'=\F^m$, and $W'=\F^n$. Then $\tA\in U'\otimes V'\otimes W'$, $\Phi_l\in U'\otimes U'$, $\Phi_m\in V'\otimes V'$, and $\Phi_n\in W'\otimes W'$. It is clear that $\tA$ and $\tB$ are $\cG$-isomorphic if and only if the two tensor systems $S$ and $T$ are $\GL$-isomorphic. 
	
	Every tensor in the above tensor systems is of order $\leq 3$, and each has only $c=4=O(1)$ components, so we can apply Theorem~\ref{thm:toGL} to obtain two plain tensors $r(S)$ and $r(T)$ in $U\otimes V\otimes W$, where $\dim(U)$, $\dim(V)$ and $\dim(W)$ are at most by $\poly(l+n+m)$. Furthermore, $S$ and $T$ are isomorphic as tensor systems if and only if $r(S)$ and $r(T)$ are isomorphic as plain tensors. This concludes the proof. 
\end{proof}

\section{Proof of Theorem~\ref{thm:5actions}}\label{sec:5actions}

Recall that we need to show the polynomial-time equivalence between the isomorphism problems of $U\otimes V\otimes W$, $U\otimes U\otimes V$, $U\otimes U^*\otimes V$, $U\otimes U\otimes U$, and $U\otimes U\otimes U^*$ under orthogonal and unitary groups. We present the proofs for unitary groups, and the proofs for orthogonal groups follow the same line. 

The equivalences for $\GL$ were proved in \cite{FGS19,GQ21}. We follow their proof strategies, but as mentioned in Section~\ref{sec:overview}, certain technical difficulties need to be dealt with. 

In Section~\ref{subsec:FGS}, we reduce $U\otimes U\otimes V$, $U\otimes U^*\otimes V$, $U\otimes U\otimes U$, and $U\otimes U\otimes U^*$ to $U\otimes V\otimes W$. This is done through the tensor system framework with the adaptation to unitary isomorphism.

In Section~\ref{subsec:UVWtoUUW}, we reduce $U\otimes V\otimes W$ to $U\otimes U\otimes W$. This requires a careful check due to the introduction of the gadget. 

In Section~\ref{subsec:UVWtoUUstarW} we reduce 
$U\otimes V\otimes W$ to $U\otimes U^*\otimes W$. This requires the Singular Value Theorem as a new ingredient.

In Section~\ref{subsec:UUWtoUUU}, we reduce $U\otimes U\otimes W$ to $U\otimes U\otimes U^*$ and $U\otimes U\otimes U$.

\subsection{Reduction to plain \algprobm{Unitary $3$-Tensor Isomorphism}}\label{subsec:FGS}

In this section, we will reduce unitary isomorphism problems of $U\otimes U\otimes V$, $U\otimes U^*\otimes V$, $U\otimes U\otimes U$, and $U\otimes U\otimes U^*$ to $U\otimes V\otimes W$ with a polynomial dimension blow-up. This requires the following unitary version of Theorem~\ref{thm:FGS_parameter}, so we begin with its proof.

\begin{theorem}[{Unitary version of \cite[Theorem 1.1]{FGS19}}]\label{thm:FGS_unitary}
	Let $S=\{S_1, \dots, S_c\}$ and $T=\{T_1, \dots, T_c\}$ be two tensor systems
	supported by $\{V_1, \dots, V_m\}$, where every $S_i$ and $T_i$ is of order $\leq 3$. 
	Then there exists an algorithm $r$ that takes
	$S$
	and $T$ and outputs two 3-tensors $r(S)$ and $r(T)$ supported by vector
	spaces $\{U, V,
	W\}$, such that $S$ and $T$ are isomorphic as tensor systems under $\U(V_1)\times\dots\times\U(V_m)$ if and
	only if $r(S)$ and $r(T)$ are isomorphic under $\U(U)\times\U(V)\times\U(W)$. The algorithm $r$ runs in time polynomial in
	the maximum dimension over $U, V, W$, and this maximum dimension is upper bounded
	by {$\poly(\sum_{i\in[m]}\dim(V_i),2^{\poly(c)})$}.
\end{theorem}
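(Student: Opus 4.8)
The plan is to follow the proof of \cite[Theorem 1.1]{FGS19} --- as rephrased in Theorem~\ref{thm:FGS_parameter} --- essentially verbatim, checking at each step that the construction behaves well under unitary, rather than merely invertible, transformations. Recall that the $\GL$ construction proceeds in two stages. First, every order-$\leq 3$ tensor of the system that is not already a plain order-$3$ tensor (a form on $V_i \otimes V_i$, or a tensor involving a dual $V_i^*$) is replaced by a plain order-$3$ tensor over a slightly enlarged family of spaces, in a way that preserves isomorphism. Second, the whole system $\{S_1, \dots, S_c\}$ over $\{V_1, \dots, V_m\}$ is assembled into a single plain $3$-tensor $r(S)$ over three ``super-spaces'' $U, V, W$, each a direct sum of copies of the $V_i$ together with auxiliary gadget spaces carrying identity matrices of carefully chosen ranks (the FGS gadget). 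The gadget rigidifies $r(S)$: any isomorphism is forced to be block-upper-triangular with respect to the filtration induced by the gadget, with diagonal blocks that are exactly the data of a tensor-system isomorphism. We keep the same $U, V, W$, so the dimension bound $\poly(\sum_{i\in[m]}\dim(V_i), 2^{\poly(c)})$ is inherited unchanged from Theorem~\ref{thm:FGS_parameter}.

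For the ``if'' direction, we must upgrade the easy $\GL$ argument. Given a system isomorphism $(A_1, \dots, A_m) \in \U(V_1) \times \dots \times \U(V_m)$, we build an isomorphism of $r(S)$ and $r(T)$ by acting by $A_i$ on each copy of $V_i$ and by the identity on each gadget space. Two points need attention: (a) the padded map must fix every gadget tensor, which holds because the gadget tensors are identity matrices and the direct sum of a unitary with an identity is again unitary; and (b) in the order-$3$-ification stage, dualizing a space $V_i$ replaces $A_i$ by $(A_i^*)^{-1}$, which equals $A_i$ precisely because $A_i$ is unitary --- so it is here, in the ``if'' direction, that unitarity is genuinely used rather than mere invertibility. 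The orthogonal case is identical with $(A_i^t)^{-1} = A_i$.

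For the ``only if'' direction we run the rigidity argument in two steps. Step (i): the identity-matrix gadgets of pairwise distinct ranks force any unitary isomorphism of $r(S)$ to preserve the gadget filtration, hence to be block-upper-triangular; but a block-upper-triangular \emph{unitary} matrix is automatically block-diagonal with each diagonal block unitary (the first diagonal block has orthonormal columns, so it is unitary; the off-diagonal blocks in the first block-row are then orthogonal to an invertible matrix and hence vanish; iterate on the remaining blocks). This is in fact cleaner than the $\GL$ case. Step (ii): one must conclude that the block-diagonal unitary so obtained restricts to a genuine tensor-system isomorphism, and that this conclusion is forced. In \cite{FGS19} this used the Krull--Schmidt theorem for quiver representations to pin down the decomposition of $r(S)$ into indecomposables up to the $\GL$-action; this tool is unavailable over the unitary group, and we replace it by Sergeichuk's canonical-form theorem \cite[Theorem 3.1]{Ser98}, which supplies exactly the uniqueness, up to the relevant classical-group action, of the decomposition of a system of (sesqui)linear maps into indecomposable summands. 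Certain degenerate configurations --- where a slice or block is rank-deficient and the indecomposability bookkeeping would otherwise stall --- are handled by first putting the offending matrices into a normal form under the unitary action via the Singular Value Theorem.

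I expect step (ii) of the ``only if'' direction to be the main obstacle: precisely matching the hypotheses of Sergeichuk's theorem \cite[Theorem 3.1]{Ser98} to the systems that arise from our construction, and isolating exactly which degenerate cases need the Singular Value Theorem while checking that the normal forms it produces remain compatible with the gadget structure. Everything else --- the gadget construction, the order-$3$-ification, and the size analysis --- transfers from Theorem~\ref{thm:FGS_parameter} with only cosmetic changes.
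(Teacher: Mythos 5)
Your proposal takes essentially the same route as the paper: follow the FGS machinery verbatim, use the observation that a block-upper-triangular unitary matrix is automatically block-diagonal with unitary blocks, and replace the Krull--Schmidt theorem for quiver representations by Sergeichuk's uniqueness-of-decomposition result \cite[Theorem 3.1]{Ser98} (with \cite[Theorem 4.1]{Ser98} for the orthogonal case). Your reading of where unitarity is genuinely used---in the ``if'' direction, because dualizing acts by $A^{-*}=A$ on a unitary, so the linked-block constraints collapse---is a nice observation that the paper leaves implicit.

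One small calibration: you anticipate needing the Singular Value Theorem inside this proof to handle degenerate configurations before Sergeichuk's theorem can be invoked, and you flag this as the likely sticking point. In fact, the paper's proof of Theorem~\ref{thm:FGS_unitary} does not use the SVD at all; the SVD enters the paper only later, in Lemma~\ref{lem: svd} (and Lemma~\ref{obs: non-deg}), to enforce a non-degeneracy normalization needed for the reductions of Section~\ref{subsec:UVWtoUUstarW}--\ref{subsec:UUWtoUUU}. For the present theorem, the paper extracts a single clean cancellation statement (Corollary~\ref{cor:unitary}) from Sergeichuk's theorem---if two tuples of block-diagonal matrices sharing a common first block are unitarily equivalent, then the remaining blocks are unitarily equivalent---and substitutes it directly for the Krull--Schmidt cancellation argument in \cite{FGS19}. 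So the obstacle you worried about is real for other parts of Theorem~\ref{thm:5actions} but is avoided here; the step-(ii) bookkeeping reduces to that one corollary.
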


This follows the same proof as \cite[Theorem 1.1]{FGS19}, outlined in Appendix~\ref{app:proof}, with one change, based on the following result. 

We say that  two matrix tuples $(C_1, \dots, C_m)\in \M(l\times n, \F)^m$ and $(D_1, \dots, D_m)\in\M(l\times n, \F)^m$ are unitarily equivalent, if there exist unitary matrices $L\in\U(l, \F)$ and $R\in\U(n, \F)$, such that for any $i\in[m]$, $LC_iR=D_i$. 
\begin{theorem}[{Sergeichuk \cite[Theorem 3.1]{Ser98}}]\label{thm:unitary}
	Let $\vC=(C_1, \dots, C_m)\in\M(l\times n, \F)$. Suppose $\vC$ is unitarily equivalent to $\vD=(D_1, \dots, D_m)$, such that each $D_i$ is block-diagonal with $k$ blocks, with the $j$th block of size $d_j\times d_j$. Furthermore, let $\vD_j=(D_{1, j}, \dots, D_{m, j})$ be the $m$-tuple of $d_j \times d_j$ matrices consisting of the $j$th block from each $D_i$, and suppose $\vD_j$ is not unitarily equivalent to a block-diagonal tuple. Then the isomorphism types of $\vD_i$'s and the multiplicities of each isomorphism type are uniquely determined by $\vC$, that is, they are the same regardless of the choice of decomposition. 
\end{theorem}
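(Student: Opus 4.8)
}
This is a Krull--Schmidt-type uniqueness statement, and here is how I would prove it (taking $\F=\C$, or more generally $\F$ with a positive involution). Regard $\vC=(C_1,\dots,C_m)$, $C_i\colon\F^{n}\to\F^{l}$, as a representation of the two-vertex, $m$-arrow quiver carried on the pair of Hilbert spaces $(\F^{l},\F^{n})$: unitary equivalence is isomorphism by a pair of unitaries, a block-diagonal tuple is an orthogonal direct sum of two such subrepresentations, and ``unitarily indecomposable'' means no proper reducing pair of subspaces. \emph{Existence} of a decomposition into unitarily indecomposable summands is immediate by induction on $l+n$, splitting along a reducing pair whenever one exists. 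The content is \emph{uniqueness up to reordering and unitary equivalence}, and the obstacle is that the classical Azumaya argument---uniqueness from locality of the endomorphism rings of indecomposables---breaks down here: a unitarily indecomposable tuple can be \emph{linearly} decomposable (take a linear direct sum $\tuple{P}\oplus\tuple{Q}$ of two non-isomorphic indecomposables with no nonzero maps between them, presented so that the two summands are non-orthogonal in the ambient inner product), so $\mathrm{End}(\vC)$ need not be local, and moreover $\mathrm{End}(\vC)$ is not closed under $\phi\mapsto\phi^{*}$, so it is not a $*$-algebra.

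The plan is to replace $\mathrm{End}(\vC)$ by a genuine $*$-algebra. On $\F^{l}\oplus\F^{n}$ let $P_l,P_n$ be the two coordinate projections, and let $X_i$ send $\F^{n}$ into $\F^{l}$ via $C_i$ and kill $\F^{l}$; let $\cA_{\vC}\subseteq\M(l+n,\F)$ be the unital $*$-algebra generated by $P_l,P_n,X_1,\dots,X_m$ (so $X_i^{*}\in\cA_{\vC}$ too). The passage $\vC\mapsto(\cA_{\vC};P_l,P_n,X_1,\dots,X_m;\F^{l+n})$ is, up to unitary equivalence, a faithful repackaging of $\vC$: a unitary equivalence $LC_iR^{*}=D_i$ is exactly conjugation of this data by the unitary $u=L\oplus R$, and conversely any unitary carrying the generators to the matching generators must commute with $P_l,P_n$ and hence be of this block form. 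Now the structural input: over $\C$ the involution $\phi\mapsto\phi^{*}$ on $\M(l+n,\C)$ is \emph{positive} ($\phi^{*}\phi=0\Rightarrow\phi=0$, since $\langle\phi^{*}\phi v,v\rangle=\lVert\phi v\rVert^{2}$), hence so is its restriction to $\cA_{\vC}$; a finite-dimensional $*$-algebra with positive involution has trivial Jacobson radical (a nonzero self-adjoint element of a nilpotent ideal would satisfy $b^{2^{k}}\ne0$ for all $k$), so $\cA_{\vC}\cong\bigoplus_{s}\M_{r_s}(\C)$ is semisimple. With this I would run the dictionary: (1) since $P_l,P_n\in\cA_{\vC}$, every $\cA_{\vC}$-submodule of $\F^{l+n}$ splits as $U_1\oplus U_2$ with $U_j$ in the $j$-th coordinate space, being such a submodule is exactly the condition that $(U_1,U_2)$ is a reducing pair for $\vC$, and since $\cA_{\vC}$ is $*$-closed, orthogonal complements of submodules are submodules---so orthogonal tuple-decompositions of $\vC$ correspond to orthogonal $\cA_{\vC}$-submodule decompositions of $\F^{l+n}$; (2) a summand is unitarily indecomposable iff the corresponding submodule is irreducible; (3) two irreducible submodules are unitarily equivalent \emph{as tuples} iff isomorphic \emph{as $\cA_{\vC}$-modules}---given an invertible $\cA_{\vC}$-module map $\psi$ between them, $\psi^{*}$ is also a module map (as $\cA_{\vC}=\cA_{\vC}^{*}$) and the polar part $u=\psi(\psi^{*}\psi)^{-1/2}$ remains a module map and is a unitary commuting with $P_l,P_n$, i.e.\ a unitary equivalence of the underlying tuples. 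Combining (1)--(3) with uniqueness of isotypic decompositions of a semisimple module, the multiset of unitary-equivalence classes of indecomposable summands of $\vC$, with multiplicities, is the list of isotypic components of the $\cA_{\vC}$-module $\F^{l+n}$ with their multiplicities---an invariant of the repackaged data, hence of the unitary-equivalence class of $\vC$, as desired.

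The step I expect to be the main obstacle is establishing semisimplicity of $\cA_{\vC}$ in the generality one might want, i.e.\ the positivity of the involution: over $\C$ (the case needed for Theorem~\ref{thm:5actions} and the quantum-information application) it is immediate, but over a field $\F$ with an order-two automorphism whose Hermitian form is isotropic---most notably $\F=\F_{q^{2}}$---the involution on $\M(l+n,\F)$ is not positive, $\cA_{\vC}$ may carry nilpotent $*$-ideals, and both the argument and the statement would need reconsideration. A secondary point needing care is the bookkeeping in step~(1)---that restriction to a reducing pair commutes with taking adjoints, so the reduction to irreducible $\cA_{\vC}$-modules is faithful. Over $\C$ this programme should go through; alternatively, as we do, one invokes Sergeichuk's \cite[Theorem~3.1]{Ser98}, whose proof treats this and more general matrix problems with a scalar product via an inductive ``reduction of a matrix problem'' method, checking at each step that the splitting of the identity into orthogonal primitive self-adjoint idempotents is canonical.
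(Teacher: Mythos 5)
Your argument is sound over $\C$ (and, with the transpose involution, over $\R$), but be aware that the paper does not prove Theorem~\ref{thm:unitary} at all: it is imported verbatim as Sergeichuk's \cite[Theorem 3.1]{Ser98} and used as a black box, with only Corollary~\ref{cor:unitary} extracted from it. So you are supplying a proof where the paper supplies a citation, and the two routes are genuinely different. Your route---packaging $\vC$ as the unital $*$-subalgebra $\cA_{\vC}\subseteq\M(l+n,\C)$ generated by $P_l,P_n,X_1,\dots,X_m$, deducing semisimplicity from positivity of the involution (a nonzero $a$ in the radical gives a nonzero self-adjoint $a^*a$ with all powers $(a^*a)^{2^k}$ nonzero, contradicting nilpotency), and then reading off uniqueness from the isotypic decomposition of a semisimple module---is the standard finite-dimensional $C^*$-algebra argument, and the dictionary you set up checks out: since $P_l,P_n\in\cA_{\vC}$ every submodule splits as $U_1\oplus U_2$, reducing pairs are exactly submodules, restriction to a reducing subspace commutes with adjoints (so unitarily indecomposable summands correspond to irreducible submodules and the restricted algebra is the algebra of the restricted tuple), and $\psi^*$ and $(\psi^*\psi)^{-1/2}$ remain module maps because $\cA_{\vC}^*=\cA_{\vC}$, so the polar part upgrades a Schur isomorphism to a unitary equivalence of tuples. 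What the citation buys instead is generality: Sergeichuk's proof proceeds by an entirely different inductive reduction of matrix problems with involution and covers broader systems and fields; what your proof buys is a short self-contained argument over exactly the two fields the paper needs. Your closing caveat is also correct and worth keeping---over $\F_{q^2}$ the Hermitian involution is not positive, $\cA_{\vC}$ can have nilpotent $*$-ideals, and the semisimplicity step fails, which is consistent with the paper restricting Theorems~\ref{thm:5actions} and~\ref{thm:dto3} to $\C$ and $\R$. One cosmetic remark: the paper's statement makes the diagonal blocks square ($d_j\times d_j$) although the $C_i$ are $l\times n$; your module-theoretic setup handles the intended rectangular blocks without modification, so nothing needs to change there.
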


From the above theorem, the following corollary is immediate:

\begin{corollary} \label{cor:unitary}
	If $\left(\begin{bmatrix} A_1 & 0 \\0 & B_1\end{bmatrix}, \dots, \begin{bmatrix} A_m & 0 \\0 & B_m\end{bmatrix}\right)$ and $\left(\begin{bmatrix} A_1 & 0 \\0 & C_1\end{bmatrix}, \dots, \begin{bmatrix} A_m & 0 \\0 & C_m\end{bmatrix}\right)$ are unitarily equivalent, then $(B_1, \dots, B_m)$ and $(C_1, \dots, C_m)$ are unitarily equivalent. 
\end{corollary}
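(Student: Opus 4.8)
The plan is to deduce the cancellation statement from the uniqueness of the decomposition of a matrix tuple into unitarily indecomposable blocks, which is exactly what Theorem~\ref{thm:unitary} provides. Call a tuple of matrices \emph{unitarily indecomposable} if it is not unitarily equivalent to a block-diagonal tuple. First I would record two facts. (i) Every tuple in $\M(l\times n,\F)^{\times m}$ is unitarily equivalent to a direct sum of unitarily indecomposable tuples: induct on $l+n$, since either the tuple is already indecomposable or it is unitarily equivalent to a block-diagonal tuple with strictly smaller blocks, to which the inductive hypothesis applies. (ii) Two tuples are unitarily equivalent if and only if they have the same multiset of unitary-isomorphism types of indecomposable blocks; the ``if'' direction is immediate by taking the block-diagonal direct sum of the individual unitary equivalences, and the ``only if'' direction is the content of Theorem~\ref{thm:unitary}.

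Now write $\vA=(A_1,\dots,A_m)$, $\vB=(B_1,\dots,B_m)$, $\vC=(C_1,\dots,C_m)$, fix decompositions of each into unitarily indecomposable tuples, and let $\mathcal{M}(\vA),\mathcal{M}(\vB),\mathcal{M}(\vC)$ be the associated multisets of isomorphism types. Writing $\vA\oplus\vB$ for the tuple $\bigl(\begin{bmatrix} A_i & 0\\ 0 & B_i\end{bmatrix}\bigr)_{i\in[m]}$, concatenating the chosen decompositions of $\vA$ and of $\vB$ exhibits $\vA\oplus\vB$ as unitarily equivalent to a direct sum of indecomposables with multiset $\mathcal{M}(\vA)\uplus\mathcal{M}(\vB)$, and similarly for $\vA\oplus\vC$. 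By hypothesis $\vA\oplus\vB$ and $\vA\oplus\vC$ are unitarily equivalent, so by fact (ii) we get $\mathcal{M}(\vA)\uplus\mathcal{M}(\vB)=\mathcal{M}(\vA)\uplus\mathcal{M}(\vC)$. Cancelling the finite multiset $\mathcal{M}(\vA)$ from both sides yields $\mathcal{M}(\vB)=\mathcal{M}(\vC)$, and hence by fact (ii) again $\vB$ and $\vC$ are unitarily equivalent.

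There is no real obstacle here; the only point that needs a word of justification is that Theorem~\ref{thm:unitary} gives a genuine \emph{complete} invariant (fact (ii), ``only if''), and after that the argument is just cancellation of finite multisets. If one wants to stay closer to the literal phrasing of Theorem~\ref{thm:unitary}, one can instead argue directly: apply the theorem to $\vC$-in-its-notation equal to $\vA\oplus\vB$ here, decomposed in two ways --- once as (indecomposables of $\vA$) $\oplus$ (indecomposables of $\vB$) and once as (indecomposables of $\vA$) $\oplus$ (indecomposables of $\vC$) using the given unitary equivalence --- so that the theorem forces these two lists of indecomposable types to coincide as multisets, and then cancel $\mathcal{M}(\vA)$ as before.
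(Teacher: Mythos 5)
Your proof is correct and is precisely the standard Krull--Schmidt cancellation argument that the paper leaves implicit when it declares the corollary ``immediate'' from Theorem~\ref{thm:unitary}: decompose into unitarily indecomposable summands, invoke the theorem's uniqueness of the multiset of block types, and cancel the common multiset $\mathcal{M}(\vA)$. The paper gives no further detail, so this is the intended reasoning spelled out.
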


\begin{proof}[Proof of Theorem~\ref{thm:FGS_unitary}]
	With Corollary~\ref{cor:unitary}, the proof of \cite[Theorem 1.1]{FGS19} goes through for this unitary setting, by replacing the use of the Krull--Schmidt theorem for quiver representations (\cite[pp. 20]{FGS19}) with Theorem~\ref{thm:unitary}.
	
	The case of orthogonal groups follows similarly by using \cite[Theorem 4.1]{Ser98} instead.
\end{proof}

We utilize the tensor system to construct reductions to plain 3-tensor unitary isomorphism, and then prove their correctness by Theorem \ref{thm:FGS_unitary}.
\begin{proposition}
	\label{cor: down-up}
	The unitary isomorphism problems on $V\otimes V\otimes W,V\otimes V^{*}\otimes W,V\otimes V\otimes V$ and $V\otimes V\otimes V^{*}$ are polynomial-time reducible to \algprobm{Unitary 3-Tensor Isomorphism} on $U'\otimes V'\otimes W'$ where $\dim(U'),\dim(V')$ and $\dim(W')$ are at most polynomial in $\dim(V)$ and $\dim(W)$.
\end{proposition}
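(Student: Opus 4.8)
The plan is to recognise each of the four constrained actions as an instance of the tensor system isomorphism problem and then invoke Theorem~\ref{thm:FGS_unitary}. Concretely, given $\tA\in\T(\ell\times\ell\times n,\F)$ to be considered up to the action of $\U(V)\times\U(W)$ on $V\otimes V\otimes W$ (with $V\cong\F^\ell$, $W\cong\F^n$), we view $\tA$ as a single tensor $S_1$ of type $(2,0)$ at $V$ and $(1,0)$ at $W$, supported by the two vector spaces $\{V,W\}$; thus $S=\{S_1\}$ is a tensor system with $c=1$ component. Unwinding the definition of the tensor-system action (which applies $A_i$ to covariant and $A_i^{-t}$ to contravariant slots), one checks that two such systems $S,T$ are isomorphic under $\U(V)\times\U(W)$ exactly when $\tA,\tB$ lie in the same $\U(V)\times\U(W)$-orbit of $V\otimes V\otimes W$. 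The same recipe applies verbatim to the other three spaces: $V\otimes V^{*}\otimes W$ becomes a type-$(1,1)$-at-$V$, $(1,0)$-at-$W$ tensor over $\{V,W\}$; $V\otimes V\otimes V$ a type-$(3,0)$ tensor over $\{V\}$; and $V\otimes V\otimes V^{*}$ a type-$(2,1)$ tensor over $\{V\}$. This mirrors how the $\GL$ case was handled in \cite{FGS19,GQ21}.

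In every case the resulting tensor system consists of a single tensor of order $3$, so it satisfies the hypotheses of Theorem~\ref{thm:FGS_unitary} with $c=1$ and $\sum_i\dim(V_i)\le\dim(V)+\dim(W)$. Applying that theorem produces 3-tensors $r(S),r(T)$ in $U'\otimes V'\otimes W'$ that are unitarily isomorphic iff $S,T$ are, and whose dimensions are bounded by $\poly(\dim(V)+\dim(W),2^{\poly(1)})=\poly(\dim(V),\dim(W))$. Composing the two steps gives the claimed reduction, which is computable in polynomial time since the algorithm $r$ runs in time polynomial in the output dimensions.

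The only points requiring care — and the closest thing to an obstacle — are bookkeeping matters rather than genuinely new difficulties, since the substantive content (the Krull--Schmidt-style uniqueness, replaced here by Sergeichuk's Theorem~\ref{thm:unitary}) is already packaged inside Theorem~\ref{thm:FGS_unitary}. First, one must match conventions: for a unitary $A$ one has $A^{-t}=\overline{A}$, so the tensor-system action on a $V^{*}$-slot is by complex conjugation, which is precisely the natural $\U(V)$-action on the (anti-linear) dual; this is what makes the $V\otimes V^{*}\otimes W$ and $V\otimes V\otimes V^{*}$ cases line up with their definitions in Section~\ref{subsec:five-action}. Second, the repeated-factor structure must be faithfully encoded in the bipartite type graph: the multiplicities $a_{1,V}=2$ (resp. $3$) for $V\otimes V\otimes W$ (resp. $V\otimes V\otimes V$) are exactly what forces a single group element to act on all copies of $V$, so one should check that the correctness of $r$ is unaffected by such parallel arcs. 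Finally, the orthogonal case is identical, using the orthogonal version of Theorem~\ref{thm:FGS_unitary} (via \cite[Theorem 4.1]{Ser98}) in place of the unitary one.
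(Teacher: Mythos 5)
Your proposal is correct and follows essentially the same route as the paper: encode each of the four constrained actions as a tensor system consisting of a single order-$3$ tensor (with the repeated factors recorded as parallel arcs in the type graph) and then invoke Theorem~\ref{thm:FGS_unitary} with $c=1$ to land in plain \algprobm{Unitary 3-Tensor Isomorphism} with polynomial dimension blow-up. The extra bookkeeping you note about the $V^{*}$ slots and the orthogonal analogue is consistent with the paper's treatment.
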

\begin{proof}
	The reduction is based on the observation that tensor systems can encode these isomorphism problems. For example, for $\tA \in V\otimes V\otimes W$, we can construct a tensor system consisting of one tensor $\tA$ and two vector spaces $\{V, W\}$, with two arcs from $V$ to $\tA$, and one arc from $W$ to $\tA$. Starting from two tensors $\tA_1, \tA_2\in V\otimes V\otimes W$, we consider the corresponding tensor systems, and ask for unitary isomorphism of these tensor systems. 
	Then by Theorem \ref{thm:FGS_unitary}, they can be reduced to the plain 3-tensor unitary isomorphism in time $\poly(\dim(V), \dim(W))$, as these are tensor systems with only 1 tensor each.
	It can be seen that this works for $V\otimes V^*\otimes W$, $V\otimes V\otimes V$, and $V\otimes V\otimes V^*$. This concludes the proof. 
\end{proof}

\subsection{Reduction from \algprobm{Unitary $3$-TI} to \algprobm{Bilinear Form Unitary Psuedoisometry} ($V\otimes V\otimes W$)}\label{subsec:UVWtoUUW}

We mainly follow the construction in \cite{GQ21} to show that there is a reduction from \algprobm{Unitary $3$-Tensor Isomorphism} ($U\otimes V\otimes W$) to \algprobm{Bilinear Form Unitary Pseudoisometry} ($V'\otimes V'\otimes W'$). In addition, we prove that the reduction from \cite{GQ21} preserves the unitary property in both directions.

\begin{proposition}
	\label{lem: act1 to act2}
	Given two 3-tensors $\tA,\tB\in U \otimes V \otimes W$, where $\dim(U)=l\leq \dim(V)=m$ and $\dim(W)=n$. There is a reduction $r: U \otimes V \otimes W \to V' \otimes V' \otimes W'$ with $\dim(V')=l+5m+3$ and $dim(W')=n+l(m+1)+m(3m+2)$ such that $\tA$ and $\tB$ are unitarily isomorphic if and only if $r(\tA)$ and $r(\tB)$ are unitarily isomorphic, where frontal slices of $r(\tA)$ and $r(\tB)$ are skew-symmetric matrices.
\end{proposition}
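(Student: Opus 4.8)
\textbf{Plan of proof for Proposition~\ref{lem: act1 to act2}.} The plan is to take the FGS-style gadget construction from \cite{GQ21} verbatim and then verify that every step respects the unitary (or orthogonal) structure, in both directions. Concretely, given $\tA \in U \otimes V \otimes W$ with frontal slices $A_1,\dotsc,A_n \in \M(l \times m, \F)$, I would first pass to the skew-symmetrized slices $\begin{bmatrix} 0 & A_k \\ -A_k^t & 0\end{bmatrix}$ living in $\M((l+m), \F)$, as in the naive construction sketched in Section~\ref{sec:overview}; this already forces the first $U\otimes V\otimes W$-tensor to become a $(l+m)\otimes(l+m)\otimes n$ skew-symmetric tensor, but an isomorphism $R$ on the $(l+m)$-space is still free to mix the $U$ and $V$ blocks. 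To kill that freedom, I would attach the FGS identity-matrix gadgets of ranks $m+1$ and $3m+2$ (the $I_{m+1}$ and $I_{3m+2}$ blocks in Figure~\ref{fig:3-tensor_isometry}), using the additional dimensions $\dim(V') = l + 5m + 3$ and $\dim(W') = n + l(m+1) + m(3m+2)$ recorded in the statement. All the gadget slices can themselves be made skew-symmetric by the same $\begin{bmatrix}0 & I \\ -I & 0\end{bmatrix}$ trick, so the entire output tensor $r(\tA)$ has skew-symmetric frontal slices, as required.

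\textbf{The two directions.} For the ``only if'' direction I would start with a unitary isomorphism $(P, Q, R) \in \U(U) \times \U(V) \times \U(W)$ sending $\tA$ to $\tB$, and show it extends to a unitary isomorphism of $r(\tA)$ and $r(\tB)$. The point is that on the big $V'$-space the natural candidate is roughly $P \oplus Q^{-t} \oplus (\text{block-permutation of gadget copies})$, and one must check this lands in $\U(V')$: here I would use that for unitary $Q$ we have $Q^{-t} = \bar Q$, which is again unitary, and that the gadget blocks are identities (hence any block-diagonal arrangement of unitaries preserving the attachment pattern is unitary). This is the direction that was ``easy'' over $\GL$ but genuinely needs care now, so I would write out the extended matrix explicitly and verify the defining identities slice by slice. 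For the ``if'' direction I would invoke the key structural fact: a block-upper-triangular unitary matrix is automatically block-diagonal, with unitary diagonal blocks. The identity-matrix gadgets force any unitary isomorphism $R'$ on $V'$ to be block-upper-triangular with respect to the $U$-/$V$-/gadget decomposition (this is the combinatorial heart of the FGS argument, adapted by noting that the relevant rank/incidence constraints are unchanged by restricting from $\GL$ to $\U$); by the block-triangular$\Rightarrow$block-diagonal fact it is then block-diagonal with unitary blocks $P \in \U(U)$ and $Q' \in \U(V)$, and reading off the original tensor shows $(P, \overline{Q'}, R)$ (or its appropriate inverse-transpose) is a unitary isomorphism of $\tA$ and $\tB$.

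\textbf{Where the difficulty lies.} The main obstacle is the ``if'' direction's block-triangularity argument. In \cite{GQ21} the combinatorial core of this step is powered by a Krull--Schmidt / uniqueness-of-decomposition statement for the associated matrix-tuple or quiver representation; over $\GL$ that is classical Krull--Schmidt, but over $\U$ it must be replaced by Sergeichuk's uniqueness theorem (Theorem~\ref{thm:unitary}) and its Corollary~\ref{cor:unitary}, exactly as we did in the proof of Theorem~\ref{thm:FGS_unitary}. So I would isolate the moment where decomposition-uniqueness is used, substitute Corollary~\ref{cor:unitary}, and confirm that the cancellation it provides is precisely what is needed to conclude the off-diagonal blocks of $R'$ vanish. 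A secondary technical point is a possible degenerate case when some $A_k$ is not of full rank or when $l < m$ strictly: here, as flagged in Section~\ref{sec:overview}, I would use the Singular Value Theorem to put the relevant blocks into a normal form before applying the gadget analysis. Finally I would double-check the dimension bookkeeping $\dim(V') = l+5m+3$, $\dim(W') = n + l(m+1) + m(3m+2)$ against the picture in Figure~\ref{fig:3-tensor_isometry}, since these exact constants are what make the rank constraints of the $I_{m+1}$ and $I_{3m+2}$ gadgets do their job.
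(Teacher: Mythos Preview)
Your high-level plan---use the FGS gadget from \cite{GQ21} verbatim, then exploit that a block-triangular unitary is automatically block-diagonal with unitary blocks---is exactly what the paper does. But you have misdiagnosed \emph{where} the block-triangularity comes from and imported tools that are not used in this proposition.

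In the paper's proof of this proposition, the block-triangular structure of the isomorphism on $V'$ is forced by a completely elementary \emph{rank argument on the lateral slices} of $r(\tA)$: one computes that the lateral slices lying over the four pieces of the decomposition $V'=U\oplus V\oplus(\text{gadget}_{m+1})\oplus(\text{gadget}_{3m+2})$ have ranks in four pairwise disjoint ranges, and that any linear combination touching the $U$-block (resp.\ $V$-block) must have rank at least $m+1$ (resp.\ $3m+2$) because of the attached identities. This directly pins down $P_{2,1}=P_{1,2}=P_{1,3}=P_{2,3}=0$. No Krull--Schmidt and no Sergeichuk are invoked here; those appear only in the proof of Theorem~\ref{thm:FGS_unitary}, which is a \emph{different} reduction (tensor systems $\to$ plain \ThreeTI). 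Likewise the Singular Value Theorem is not used in this proposition at all; it enters only later, in Section~\ref{subsec:UVWtoUUstarW}, to pass to the non-degenerate case for the $V\otimes V^*\otimes W$ reduction. If you try to run Corollary~\ref{cor:unitary} here you will find it does not match the shape of the problem, since there is no a priori block-diagonal tuple to cancel against.

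For the ``only if'' direction your sketch is too vague. The paper does not get away with ``block-permutation of gadget copies'': it explicitly constructs unitary $S_1,S_2,T_1,T_2$ so that $\hat P=\diag(P,Q,S_1,S_2)$ and $\hat Q=\diag(R,T_1,T_2)$ carry $r(\tA)$ to $r(\tB)$, by analysing how $P^t$ acts on the lateral slices of the gadget tensor $\tE$ and choosing $T_1$ as a conjugated block-diagonal of copies of $P^t$. You should expect to write this out concretely rather than assert it.
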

\begin{proof}
	\paragraph{The reduction.} We use the gadget in \cite{FGS19} and \cite{GQ21} to present this reduction. Here we use matrix format to illustrate our construction, and the picture of this construction is shown in Figure \ref{fig:3-tensor_isometry}. Denote the $i$th frontal slice of $\tA$ by $A_i\in \mathrm{M}(l\times m, \C)$, where $i\in\left[ n \right]$. Let the $i$th frontal slice of $r(\tA)$ be $\hat{A}_i \in \mathrm{M}(l+5m+3, \C)$, where $i\in\left[ n+l(m+1)+m(3m+2) \right]$. Then $\hat{A}_i$ is constructed as follows:
	\begin{itemize}
		\item For $i\in\left[ n \right]$, $\hat{A}_i$ is of the form $\begin{bmatrix}
		\mathbf{0} & A_i        & \mathbf{0} \\
		-A_i^t     & \mathbf{0} & \mathbf{0} \\
		\mathbf{0} & \mathbf{0} & \mathbf{0}
		\end{bmatrix}$.
		\item For $i\in\left[ n+1,n+l(m+1) \right]$, let $\hat{A}_i$ be the elementary alternating matrix $\E_{s,l+m+t}-\E_{l+m+t,s}$, where $s = \lceil (i-n)/(m+1)\rceil$ and $t = i-n-(s-1)(m+1)$.
		\item For $i\in\left[ n+l(m+1),n+l(m+1)+m(3m+2) \right]$, let $\hat{A}_i$ be the elementary alternating matrix $\E_{l+s,l+m+m+1+t}-\E_{l+m+m+1+t,l+s}$, where $s = \lceil (i-n-l(m+1))/(3m+2)\rceil$ and $t = i-n-l(m+1)-(s-1)(3m+2)$.
	\end{itemize}
	
	Denote lateral slices of $r(\tA)$ by $L_i$, where $i\in[l+5m+3]$. Then we check the ranks of these lateral slices:
	\begin{itemize}
		\item For the first $l$ slices, the lateral slice $L_i$ is a block matrix with two non-zero blocks. One block is $-I_{m+1}$, and another block of size $m\times n$ is the transpose of the $i$th horizontal slice of $-\tA$. Thus, $m+1\leq\rank(L_i)\leq 2m+1$.
		\item For the following $m$ slices, $L_i$ is a block matrix with two non-zero blocks. One block is $-I_{3m+2}$ and the other one is the $(i-n)$th lateral slice of $\tA$ with size $l\times n$. Therefore, $3m+2\leq\rank(L_i)\leq 3m+2+l\leq 4m+2$.
		\item For the next $m+1$ slices, $L_i$ has a block $I_{l}$ after rearranging the columns, so $\rank(L_i)= l\leq m$.
		\item For the last $3m+2$ slices, similarly, $L_i$ has a block $I_m$ after rearranging the columns, so $\rank(L_i)= m$.
	\end{itemize}
	
	Now we consider the ranks of linear combinations of the above slices. There are four observations that help prove the correctness of the reduction:
	\begin{itemize}
		\item If the combination contains $L_i$ for $1\leq i\leq l$, since the resulting matrix has at least one identity matrix $I_{m+1}$ in the $(l+m+1)$th row to $(l+2m+1)$th row, it has the rank at least $m+1$.
		\item If the combination doesn't contain $L_i$ for $l+1\leq i\leq l+m+1$, the resulting matrix has rank at most $3m+1$, because there are at most $l+5m+3-3m-2\leq 3m+1$ non-zero rows.
		\item If the combination involves $L_i$ for $l+1\leq i\leq l+m+1$, the resulting matrix has rank at least $3m+2$, because there is at least one identity matrix $I_{3m+2}$ in the last $3m+2$ rows.
		\item If the combination involves $L_i$ for $1\leq i\leq l$ and $L_i$ for $l+1\leq i\leq l+m+1$, the resulting matrix has rank at least $4m+3$, because there are at least one identity matrix $I_{3m+2}$ in the last $3m+2$ rows and one identity matrix $I_{m+1}$ in the $(l+m+1)$th row to $(l+2m+1)$th row.
	\end{itemize}
	
	\paragraph{The if direction.} Assume there are $P\in \mathrm{U}(l+5m+3,\C)$ and $Q\in\mathrm{U}(n+l(m+1)+m(3m+2),\C)$ such that $P^t r(\tA)P=r(\tB)^Q$. Then we write $P$ as $P=\begin{bmatrix}
	P_{1,1} & P_{1,2} & P_{1,3} \\
	P_{2,1} & P_{2,2} & P_{2,3} \\
	P_{3,1} & P_{3,2} & P_{3,3}
	\end{bmatrix}$, where $P_{1,1}\in \mathrm{M}(l, \C)$, $P_{2,2}\in \mathrm{M}(m, \C)$ and $P_{3,3}\in \mathrm{M}(4m+3, \C)$. By ranks of lateral slices of $r(\tB)$ and the above observations, it's easy to have that $P_{2,1}=\mathbf{0},P_{1,2}=\mathbf{0},P_{1,3}=\mathbf{0}$ and $P_{2,3}=\mathbf{0}$. Therefore, $P$ is of the form $\begin{bmatrix}
	P_{1,1}    & \mathbf{0} & \mathbf{0} \\
	\mathbf{0} & P_{2,2}    & \mathbf{0} \\
	P_{3,1}    & P_{3,2}    & P_{3,3}
	\end{bmatrix}$. As $P$ is a block-lower-trianglular unitary matrix, $P_{1,1},P_{2,2}$ and $P_{3,3}$ are unitary matrices. Since the aim is to check if $\tA$ and $\tB$ are isomorphic, we only consider the first $n$ frontal slices of $r(\tA)$ and $r(\tB)$, which contains $\tA$ and $\tB$ respectively. After applying $P$ on lateral slices and horizontal slices of $r(\tA)$, we have the first $n$ frontal slices as follows:
	\begin{align*}
	\begin{bmatrix}
	P_{1,1}^t  & \mathbf{0} & P_{3,1}^t \\
	\mathbf{0} & P_{2,2}^t  & P_{3,2}^t \\
	\mathbf{0} & \mathbf{0} & P_{3,3}^t
	\end{bmatrix}\begin{bmatrix}
	\mathbf{0} & A_i        & \mathbf{0} \\
	-A_i^t     & \mathbf{0} & \mathbf{0} \\
	\mathbf{0} & \mathbf{0} & \mathbf{0}
	\end{bmatrix}\begin{bmatrix}
	P_{1,1}    & \mathbf{0} & \mathbf{0} \\
	\mathbf{0} & P_{2,2}    & \mathbf{0} \\
	P_{3,1}    & P_{3,2}    & P_{3,3}
	\end{bmatrix}=\begin{bmatrix}
	\mathbf{0}               & P_{1,1}^t A_i P_{2,2} & \mathbf{0} \\
	-P_{2,2}^t A_i^t P_{1,1} & \mathbf{0}            & \mathbf{0} \\
	\mathbf{0}               & \mathbf{0}            & \mathbf{0}
	\end{bmatrix}.
	\end{align*}
	Then we apply the unitary matrix $Q$ on the frontal slices of $r(\tB)$, and have $P^t r(\tA)P=r(\tB)^Q$. Note that only the block $(1,2)$ and $(2,1)$ are non-zero blocks in the first $n$ slices of $r(\tB)$ and $P^t r(\tA)P$, so we have that only the first $n\times n$ submatrix $Q_{1,1}$ of $Q$ is non-zero in the first $n$ columns, which implies that $Q_{1,1}$ is unitary from the fact that $Q$ is unitary. Therefore, it is enough to give the isomorphism $P_{1,1}^t \tA P_{2,2}=\tB^{Q_{1,1}}$ where $P_{1,1}^t,P_{2,2}$ and $Q_{1,1}$ are unitary.
	
	\paragraph{The only if direction.} Assume $P\tA Q=\tB^R$ for some $P\in\mathrm{U}(l,\C), Q\in\mathrm{U}(m,\C)$ and $R\in\mathrm{U}(n,\C)$. We claim that there are two unitary matrices $\hat{P}=\diag(P,Q,S_1,S_2)\in\mathrm{U}(l+5m+3,\C)$ and $\hat{Q}=\diag(R,T_1,T_2)\in\mathrm{U}(n+l(m+1)+m(3m+2),\C)$
	%$S\in\mathrm{U}(4m+3,\C)$ and $T\in\mathrm{U}(l(m+1)+m(3m+2))$
	such that $\hat{P}^t r(\tA)\hat{P}=r(\tB)^{\hat{Q}}$, where $S_1\in\mathrm{U}(m+1,\C),S_2\in\mathrm{U}(3m+2,\C),T_1\in\mathrm{U}(l(m+1),\C)$ and $T_2\in\mathrm{U}(m(3m+2),\C)$.
	
	Due to the fact that $P\tA Q=\tB^R$, it's straightforward to check the first $n$ frontal slices of $\hat{P}^t r(\tA)\hat{P}$ and $r(\tB)^{\hat{Q}}$ are equal. Then we consider the remaining gadget slices. Let $\overline{r(\tA)}$ and $\overline{r(\tB)}$ be tensors constructed by the $(m+1)$th frontal slice to $(m+l(m+1))$th frontal slice of $r(\tA)$ and $r(\tB)$, respectively. Consider $\overline{r(\tA)}$ and $\overline{r(\tB)}$ from the frontal view:
	\begin{align*}
	\begin{bmatrix}
	\mathtt{0}  & \mathtt{0} & \tE & \mathtt{0} \\
	\mathtt{0}  & \mathtt{0} & \mathtt{0} & \mathtt{0} \\
	-\tE & \mathtt{0} & \mathtt{0} & \mathtt{0} \\
	\mathtt{0}  & \mathtt{0} & \mathtt{0} & \mathtt{0}
	\end{bmatrix},
	\end{align*}
	where $\tE\in \T(l\times(m+1)\times l(m+1),\C)$.  Then we apply $\hat{P}$ on the lateral and horizontal slices of $\overline{r(\tA)}$,
	\begin{align*}
	\begin{bmatrix}
	P^t &     &       &       \\
	& Q^t &       &       \\
	&     & S_1^t &       \\
	&     &       & S_2^t
	\end{bmatrix}
	\begin{bmatrix}
	\mathbf{0} & \mathbf{0} & E_i        & \mathbf{0} \\
	\mathbf{0} & \mathbf{0} & \mathbf{0} & \mathbf{0} \\
	-E_i       & \mathbf{0} & \mathbf{0} & \mathbf{0} \\
	\mathbf{0} & \mathbf{0} & \mathbf{0} & \mathbf{0}
	\end{bmatrix}
	\begin{bmatrix}
	P &   &     &     \\
	& Q &     &     \\
	&   & S_1 &     \\
	&   &     & S_2
	\end{bmatrix}=
	\begin{bmatrix}
	\mathbf{0}   & \mathbf{0} & P^tE_iS_1  & \mathbf{0} \\
	\mathbf{0}   & \mathbf{0} & \mathbf{0} & \mathbf{0} \\
	-S_1^tE_i P & \mathbf{0} & \mathbf{0} & \mathbf{0} \\
	\mathbf{0}   & \mathbf{0} & \mathbf{0} & \mathbf{0}
	\end{bmatrix},
	\end{align*}
	where $E_i\in\mathrm{M}(l\times(m+1),\C)$. Observe that $P^t$ acts on the horizontal direction of $E$, so it requires designing proper $S_1$ and $T_1$ to remove the effect of $P$. Let the lateral slice of $\tE$ to be $L_i\in\mathrm{M}(l\times l(m+1),\C)$ where $i\in[m+1]$. Apply a proper permutation $\pi$ on the columns of $L_i$ and have the matrix $L_i'=L_iT_{\pi}=\begin{bmatrix}
	\mathbf{0}\ldots I_l\ldots\mathbf{0}
	\end{bmatrix}$ where $T_{\pi}\in\mathrm{M}(l(m+1),\C)$ is the permutation matrix and the $i$th block of $L_i'$ is the identity matrix $I_l\in\mathrm{M}(l,\C)$. After left multiplying $L_i'$ by $P^t$, we have $P^tL_i'=\begin{bmatrix}
	\mathbf{0}\ldots P^t\ldots\mathbf{0}
	\end{bmatrix}$. Now we define a diagonal matrix $T_1'$ as $\diag(P^t,\ldots,P^t)$, which gives us $P^tL_i'=L_i'T_1'\iff P^tL_i=L_iT_{\pi}T_1'T_{\pi}^t$. Then we set $S_1$ to be the identity matrix and $T_1$ to be $T_{\pi}T_1'T_{\pi}^t$, and it yields $P^t\tE S_1=\tE^{T_1}$, where $S_1$ and $T_1$ are unitary.
	
	It remains to check the last $m(3m+2)$ frontal slices, which uses the similar method as above, and this produces unitary matrix $S_2$ and $T_2$. Now we have the unitary matrix $S$ and $T$ as desired.
\end{proof}

\subsection{Reduction from \algprobm{Unitary $3$-Tensor Isomorphism} to \algprobm{Unitary Matrix Space Conjugacy} ($V\otimes V^{*}\otimes W$)}
\label{subsec:UVWtoUUstarW}

A 3-way array $\tA\in \T(l\times m\times n,\F)$ is \textit{non-degenerate} if along each direction, the slices are linearly independent. 

\begin{lemma}
	\label{lem: svd}
	For any 3-way array $\tA\in \T(l\times m\times n,\C)$, there are unitary matrices $T_1\in\mathrm{U}(l,\C),T_2\in\mathrm{U}(m,\C)$ and $T_3\in\mathrm{U}(n,\C)$ such that
	\begin{align*}
	(T_1\tA T_2)^{T_3}=\begin{bmatrix}
	\tilde{\tA} & \mathtt{0} \\
	\mathtt{0}           & \mathtt{0}
	\end{bmatrix},
	\end{align*}
	where $\tilde{\tA}$ is a non-degenerate array of size $l'\times m'\times n'$.
\end{lemma}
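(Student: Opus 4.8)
The plan is to reduce $\tA$ to its \emph{core} by compressing each of the three modes with a suitable unitary, via the (rectangular, complex) Singular Value Theorem. For $i\in\{1,2,3\}$ let $\tA_{(i)}$ denote the mode-$i$ flattening of $\tA$ (the matrix whose rows are the vectorisations of the mode-$i$ slices of $\tA$), and set $r_i:=\rank(\tA_{(i)})$, the dimension of the span of the mode-$i$ slices; put $l'=r_1$, $m'=r_2$, $n'=r_3$. Writing $\tA_{(i)}=U_i\Sigma_i V_i^*$ with $U_i,V_i$ unitary and $\Sigma_i$ a nonnegative rectangular diagonal matrix with exactly $r_i$ nonzero entries, we get that $U_i^*\tA_{(i)}=\Sigma_i V_i^*$ has all rows beyond the first $r_i$ equal to zero. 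Matching the conventions for $P\tA Q$ and $\tA^R$ from Section~\ref{sec:prel}, one checks that there are unitary matrices $T_1\in\U(l,\C)$, $T_2\in\U(m,\C)$, $T_3\in\U(n,\C)$ (explicitly $T_1=U_1^*$, $T_2=\overline{U_2}$, $T_3=\overline{U_3}$) whose action along mode $i$ amounts to left-multiplying $\tA_{(i)}$ by $U_i^*$; set $\tA':=(T_1\tA T_2)^{T_3}$.

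Next I would show that $\tA'$ is supported on the sub-block $[l']\times[m']\times[n']$ with non-degenerate core. The three mode-actions act on disjoint index sets, hence commute; moreover, forming a linear combination of the slices along one mode sends a slice that is identically zero along another mode to a slice still identically zero, and it preserves linear independence of slices along that other mode (each $T_i$ being invertible). Consequently, for each $i$ the mode-$i$ flattening of $\tA'$ equals $U_i^*\tA_{(i)}M_i=\Sigma_i V_i^* M_i$ for some invertible matrix $M_i$ (the matrix of the invertible action of the remaining two modes on the complementary indices, essentially a Kronecker product of the other two $T_j$'s). Thus $(\tA')_{(i)}$ has all rows beyond the first $r_i$ zero, and since its rank is still $r_i$ while it has only $r_i$ possibly-nonzero rows, those first $r_i$ rows are linearly independent. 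Translating back: for every mode $i$, the mode-$i$ slices of $\tA'$ beyond the first $r_i$ vanish, and the first $r_i$ of them are linearly independent.

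It follows that all nonzero entries of $\tA'$ lie in the block $[l']\times[m']\times[n']$; letting $\tilde{\tA}\in\T(l'\times m'\times n',\C)$ be that block we obtain $(T_1\tA T_2)^{T_3}=\begin{bmatrix}\tilde{\tA} & \mathtt{0}\\ \mathtt{0} & \mathtt{0}\end{bmatrix}$, and along each direction the slices of $\tilde{\tA}$ are exactly the first $r_i$ mode-$i$ slices of $\tA'$, which were shown to be linearly independent; hence $\tilde{\tA}$ is non-degenerate, as required.

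I do not anticipate a genuine obstacle here: this is essentially the truncated higher-order SVD, of which we use only the subspace-compression part (orthonormality of the core slices is not needed). The only delicate points are bookkeeping: keeping the transpose/conjugate conventions consistent with the definitions of $P\tA Q$ and $\tA^R$, and verifying that the three successive compressions do not interfere --- that compressing one direction neither destroys the zero-blocks produced in the other directions nor spoils the non-degeneracy already achieved along them --- both of which reduce to commutativity of the mode-actions and invertibility of the $T_i$.
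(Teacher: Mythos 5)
Your proof is correct and takes essentially the same approach as the paper: apply an SVD-based compression to each mode flattening, yielding unitary matrices that zero out the rows beyond the first $r_i$ along each mode. The paper's proof does the mode-1 compression in detail and dismisses the other two directions with ``we can similarly find unitary matrices $T_2, T_3$,'' whereas you carefully spell out why the three compressions do not interfere (each acts by an invertible matrix on a disjoint mode, hence preserves both the zero blocks and the rank structure already produced in the other modes) --- a genuine, if minor, gap in the paper that you fill.
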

\begin{proof}
	First, we consider the horizontal slices of $\tA$. Let $(A_1,\ldots,A_n)$ be the corresponding matrix tuple of frontal slices of $\tA$. Then we construct the $l\times mn$ matrix
	\begin{align*}
	A'=\begin{bmatrix}
	A_1 & \ldots & A_n
	\end{bmatrix}.
	\end{align*}
	We denote the maximum number of linearly independent horizontal slices of $\tA$ by $l'$; it follows that the rank of $A'$ is $l'$. Applying a singular value decomposition on $A'$, we have
	\begin{align*}
	A' = U\Sigma V^*,
	\end{align*}
	where $U$ and $V$ are unitary matrices of size $l \times l$ and $mn \times mn$, respectively, and $\Sigma=\begin{bmatrix}
	\hat{\Sigma} \\
	\mathbf{0}
	\end{bmatrix}$ for a full-rank rectangular diagonal matrix $\hat{\Sigma}$ of size $l'\times mn$. Multiplying $A'$ by $T_1 = U^{-1}$, we have
	\begin{align*}
	T_1A' = \Sigma V^*,
	\end{align*}
	where the first $l'$ rows of $\Sigma V^*$ are linearly independent and the last $l-l'$ rows are zero. It follows that acting $T_1$ on the horizontal slices of $\tA$ sends $\tA$ to
	\begin{align*}
	T_1\tA = \begin{bmatrix}
	\hat{\tA} \\
	\mathtt{0}
	\end{bmatrix},
	\end{align*}
	where the horizontal slices of $\hat{\tA}\in\T(l'\times m\times n, \C)$ are linearly independent.
	
	We can similarly find unitary matrices $T_2, T_3$ for the other two directions.
\end{proof}

\begin{lemma}
	\label{obs: non-deg}
	Given two 3-tensors $\tA,\tB\in U\otimes V\otimes W$ where $l=\dim(U),m=\dim(V)$ and $n=\dim(W)$, there is a reduction $r$ such that $\tA$ and $\tB$ are unitarily isomorphic if and only if $r(\tA)$ and $r(\tB)$ are unitarily isomorphic, where $r(\tA)$ and $r(\tB)$ are non-degenerate.
\end{lemma}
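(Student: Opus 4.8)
The plan is to take $r$ to be the normalisation provided by Lemma~\ref{lem: svd}, with the zero padding stripped off; this runs in polynomial time (in the appropriate model) because the construction in Lemma~\ref{lem: svd} does. Concretely, given $\tA\in U\otimes V\otimes W$ with $(\dim U,\dim V,\dim W)=(l,m,n)$, I would run the algorithm of Lemma~\ref{lem: svd} to get unitaries $T_1,T_2,T_3$ with $(T_1\tA T_2)^{T_3}=\begin{bmatrix}\tilde{\tA} & \mathtt{0}\\ \mathtt{0} & \mathtt{0}\end{bmatrix}$, where $\tilde{\tA}$ is non-degenerate of some size $l'\times m'\times n'$, and set $r(\tA):=\tilde{\tA}$; likewise $r(\tB):=\tilde{\tB}$, of size $l''\times m''\times n''$. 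If $(l',m',n')\ne(l'',m'',n'')$, I would instead output a fixed pair of non-degenerate, non-isomorphic $1\times1\times1$ tensors, say $[1]$ and $[2]$ (each is non-degenerate, and under $\U(1,\C)^{\times3}$, resp.\ $\O(1,\R)^{\times3}$, their orbits are the circles of radius $1$ and $2$, resp.\ $\{\pm1\}$ and $\{\pm2\}$, hence disjoint). This is harmless because the three multilinear ranks --- the ranks of the three flattenings of a $3$-way array --- are unchanged under arbitrary invertible transformations in the three directions, so a size mismatch already means $\tA\not\cong\tB$. What remains is correctness when $(l',m',n')=(l'',m'',n'')=:(\ell,p,q)$.

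By Lemma~\ref{lem: svd}, $\tA$ is unitarily isomorphic to $\tA_0:=\begin{bmatrix}\tilde{\tA}&\mathtt{0}\\\mathtt{0}&\mathtt{0}\end{bmatrix}$ and $\tB$ to $\tB_0:=\begin{bmatrix}\tilde{\tB}&\mathtt{0}\\\mathtt{0}&\mathtt{0}\end{bmatrix}$, so it suffices to show $\tA_0\cong\tB_0$ if and only if $\tilde{\tA}\cong\tilde{\tB}$. The ``if'' direction is immediate: a unitary isomorphism $(P',Q',R')$ from $\tilde{\tA}$ to $\tilde{\tB}$ extends to $(P'\oplus I_{l-\ell},\,Q'\oplus I_{m-p},\,R'\oplus I_{n-q})$ from $\tA_0$ to $\tB_0$. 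For ``only if'', I would take a unitary isomorphism $(P,Q,R)\in\U(l)\times\U(m)\times\U(n)$ from $\tA_0$ to $\tB_0$ and show it can be assumed block-diagonal along $l=\ell+(l-\ell)$, $m=p+(m-p)$, $n=q+(n-q)$. The horizontal slices of $\tA_0$ are the $\ell$ linearly independent horizontal slices of $\tilde{\tA}$ (supported in the top-left $p\times q$ corner) followed by $l-\ell$ zero slices, and likewise for $\tB_0$; since the linear map on $m\times n$ matrices induced by $Q$ and $R$ is invertible, the vanishing of the last $l-\ell$ horizontal slices of $\tB_0$ forces, for each $i>\ell$ and $P=(p_{ij})$, the combination $\sum_{j\le\ell}p_{ij}\cdot(\text{$j$-th horizontal slice of }\tilde{\tA})$ to be zero, so by linear independence $p_{ij}=0$ for all $i>\ell,\ j\le\ell$; that is, $P$ is block-upper-triangular. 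A block-upper-triangular unitary matrix is block-diagonal: if $N=\begin{bmatrix}A&B\\0&D\end{bmatrix}$ is unitary, the top-left block of $N^{*}N=I$ gives $A^{*}A=I$, so $A$ is invertible, and then the top-right block gives $A^{*}B=0$, hence $B=0$. Thus $P=P_1\oplus P_2$ with $P_1\in\U(\ell)$. The same argument with the lateral slices and with the frontal slices gives $Q=Q_1\oplus Q_2$ and $R=R_1\oplus R_2$ with $Q_1\in\U(p)$, $R_1\in\U(q)$. Restricting $(P,Q,R)\circ\tA_0=\tB_0$ to the $(\ell,p,q)$ corner --- where, by block-diagonality, only $P_1,Q_1,R_1$ act --- yields $(P_1,Q_1,R_1)\circ\tilde{\tA}=\tilde{\tB}$, so $\tilde{\tA}$ and $\tilde{\tB}$ are unitarily isomorphic.

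The orthogonal case is identical after replacing $\U$ by $\O$ and $\C$ by $\R$ (using the real SVD in Lemma~\ref{lem: svd}; a block-upper-triangular orthogonal matrix is block-diagonal by the same computation). I do not expect a genuine obstacle in this lemma; the one step that needs care is the block-diagonalisation, where the point is that non-degeneracy of $\tilde{\tA}$ and $\tilde{\tB}$ \emph{in all three directions}, together with the rigidity of unitary (resp.\ orthogonal) matrices, is exactly what confines an isomorphism of the zero-padded tensors to an isomorphism of the non-degenerate cores.
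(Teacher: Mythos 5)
Your proof is correct and follows essentially the same route as the paper: apply Lemma~\ref{lem: svd} to extract the non-degenerate cores, extend an isomorphism of the cores by identity blocks for the ``if'' direction, and for the ``only if'' direction use the linear independence of the core's slices in each direction to force the block-triangular form, then invoke the fact that a block-triangular unitary (or orthogonal) matrix is block-diagonal with unitary blocks. Your explicit treatment of the case where the core dimensions of $\tA$ and $\tB$ differ is a small point of extra care that the paper leaves implicit.
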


We note that this reduction is one of the few in the paper that is explicitly \emph{not} a $p$-projection (similar to how the reduction of a matrix to row echelon form is not a $p$-projection).

\begin{proof}
	By Lemma \ref{lem: svd}, we can find unitary matrices $S_1\in\mathrm{U}(l,\C),S_2\in\mathrm{U}(m,\C)$ and $S_3\in\mathrm{U}(n,\C)$ to extract the $l'\times m'\times n'$ non-degenerate tensor $\tilde{\tA}$ of $\tA$. There are similar unitary matrices $T_1\in\mathrm{U}(l,\C),T_2\in\mathrm{U}(m,\C)$ and $T_3\in\mathrm{U}(n,\C)$ for $\tB$ as well. Then we claim $\tA$ and $\tB$ are unitarily isomorphic if and only if $r(\tA)=\tilde{\tA}$ and $r(\tB)=\tilde{\tB}$ are unitarily isomorphic.
	
	For the if direction, assume $\tilde{P}\tilde{\tA}\tilde{Q}=\tilde{\tB}^{\tilde{R}}$ where $\tilde{P}\in\mathrm{U}(l',\C),\tilde{Q}\in\mathrm{U}(m',\C)$ and $\tilde{R}\in\mathrm{U}(n',\C)$. It yields that $P'\tA'Q'=\tB'^{R'}$ where $\tA'=\begin{bmatrix}
	\tilde{\tA} & \mathtt{0} \\
	\mathtt{0}           & \mathtt{0}
	\end{bmatrix}$ and
	$\tB'=\begin{bmatrix}
	\tilde{\tB} & \mathtt{0} \\
	\mathtt{0}           & \mathtt{0}
	\end{bmatrix}$, and $P' = \diag(\tilde{P},I_{l-l'}), Q' = \diag(\tilde{Q},I_{m-m'})$ and $R' = \diag(\tilde{R},I_{n-n'})$. Then we set $P$ to be $T_1^{-1}P'S_1$, $Q$ to be $S_2Q'T_2^{-1}$ and $R$ to be $T_3R'S_3^{-1}$, where $P,Q$ and $R$ are unitary matrices. It's easy to check that $P\tA Q=\tB^{R}$.
	
	For the only if direction, suppose $P\tA Q=\tB^R$ for $P\in\mathrm{U}(l,\C),Q\in\mathrm{U}(m,\C)$ and $R\in\mathrm{U}(n,\C)$, which follows that $P'\tA'Q'=\tB'^{R'}$ for $\tA'=\begin{bmatrix}
	\tilde{\tA} & \mathtt{0} \\
	\mathtt{0}           & \mathtt{0}
	\end{bmatrix}$ and
	$\tB'=\begin{bmatrix}
	\tilde{\tB} & \mathtt{0} \\
	\mathtt{0}           & \mathtt{0}
	\end{bmatrix}$, and $P' = T_1PS_1^{-1}, Q' = S_2^{-1}QT_2$, and $R' = T_3^{-1}RS_3$. Write $P'$ as $\begin{bmatrix}
	P_{1,1} & P_{1,2} \\
	P_{2,1} & P_{2,2}
	\end{bmatrix}$ where $P_{1,1}$ is of size $l'\times l'$. Observe that the last $l-l'$ horizontal slices of $\tA'Q'$ and $\tB'^{R'}$ are $\mathbf{0}$ and the first $l'$ slices of $\tA'Q'$ are linearly independent, so we derive that $P_{2,1} = \mathbf{0}$. We can conclude that $Q'$ and $R'$ are block-lower-trianglular matrices in the same way. Therefore, $\tilde{P},\tilde{Q}$ and $\tilde{R}$ are unitary, where $\tilde{P}$ is the first $l'\times l'$ submatrix of $P'$, $\tilde{Q}$ is the first $m'\times m'$ submatrix of $Q'$ and $\tilde{R}$ is the first $n'\times n'$ submatrix of $R'$. Thus, $\tilde{P},\tilde{Q}$ and $\tilde{R}$ form a unitary isomorphism between $\tilde{\tA}$ and $\tilde{\tB}$ by $\tilde{P} \tilde{\tA} \tilde{Q} = \tilde{\tB}^{\tilde{R}}$.
\end{proof}

\begin{corollary}
	\label{obs: tensor to mspace}
	Given two 3-tensors $\tA,\tB\in V\otimes V\otimes W$, there is a reduction $r$ such that $\tA,\tB$ are unitarily isomorphic if and only if $r(\tA),r(\tB)\in V\otimes V\otimes W'$ are unitarily pseudo-isometric bilinear forms, and such that the frontal slices of $r(\tA)$ and $r(\tB)$ are linearly independent.
\end{corollary}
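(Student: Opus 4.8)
The plan is to observe that, in contrast to Lemma~\ref{obs: non-deg}, we must leave the $V\otimes V$ part of the tensor completely untouched (the two $V$-factors carry a single common group element, so we cannot shrink them independently), and that this is harmless: making the frontal slices linearly independent is a change of basis on the $W$-factor alone, and over $\C$ (resp.\ $\R$) such a change of basis can be chosen unitary (resp.\ orthogonal), coming from a singular value decomposition of the flattening of $\tA$ along $W$. So the reduction $r$ only modifies the third direction, which is exactly what the statement asks.

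Concretely, write $l=\dim V$, $n=\dim W$, and let $A_1,\dots,A_n\in\M(l,\C)$ be the frontal slices of $\tA$. Form $M_\tA:=[\,\mathrm{vec}(A_1)\ \cdots\ \mathrm{vec}(A_n)\,]\in\M(l^2\times n,\C)$, set $n'_\tA:=\rank(M_\tA)=\dim\linspan(A_1,\dots,A_n)$, and take a singular value decomposition $M_\tA=U\Sigma X^*$ with $X\in\U(n,\C)$. Then $M_\tA X=U\Sigma$ has its first $n'_\tA$ columns linearly independent and its remaining columns zero, so applying $X$ in the $W$-direction sends $\tA$ to a tensor $\tA^{(0)}$ whose first $n'_\tA$ frontal slices are linearly independent and whose remaining slices vanish. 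Define $r(\tA)\in V\otimes V\otimes W'_\tA$, $\dim W'_\tA=n'_\tA$, to be the restriction of $\tA^{(0)}$ to its first $n'_\tA$ frontal slices; do likewise for $\tB$. Since $\dim\linspan$ of the frontal slices is preserved by the $\GL(V)\times\GL(W)$-action, hence by the unitary action, if $n'_\tA\neq n'_\tB$ then $\tA,\tB$ are not unitarily isomorphic and $r(\tA),r(\tB)$ lie in incomparable ambient spaces, so in that case $r$ outputs a fixed non-isomorphic pair; otherwise $n'_\tA=n'_\tB=:n'$ and $W'_\tA=W'_\tB=:W'$. (As with Lemma~\ref{obs: non-deg}, this $r$ is not a p-projection.)

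For correctness, note that $\tA$ is unitarily isomorphic to $\tA^{(0)}$ via the unitary $X$ on the $W$-factor, and $r(\tA)$ has as frontal slices exactly the nonzero slices $A'_1,\dots,A'_{n'}$ of $\tA^{(0)}$; likewise $\tB,\tB^{(0)},B'_1,\dots,B'_{n'}$. If $(P,Q_1)\in\U(V)\times\U(W')$ is a unitary isomorphism $r(\tA)\to r(\tB)$, then $(P,\diag(Q_1,I_{n-n'}))$ is a unitary isomorphism $\tA^{(0)}\to\tB^{(0)}$ (the appended zero slices match automatically), and composing with the unitary $W$-changes of basis relating $\tA,\tB$ to $\tA^{(0)},\tB^{(0)}$ gives a unitary isomorphism $\tA\to\tB$. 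Conversely, a unitary isomorphism $\tA\to\tB$ conjugates to a unitary isomorphism $(P,Q)\in\U(V)\times\U(W)$ from $\tA^{(0)}$ to $\tB^{(0)}$; comparing frontal slices in positions $k>n'$ yields $\sum_{k'\le n'}Q_{k',k}B'_{k'}=0$, so by linear independence of $B'_1,\dots,B'_{n'}$ the top-right $n'\times(n-n')$ block of $Q$ vanishes, and hence, by the same fact used in Section~\ref{subsec:UVWtoUUW} that a unitary matrix with a vanishing off-diagonal block is block-diagonal, $Q=\diag(Q_1,Q_2)$ with $Q_1\in\U(W')$. Restricting to positions $k\le n'$ then shows $(P,Q_1)$ is a unitary isomorphism $r(\tA)\to r(\tB)$. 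The orthogonal case is verbatim with a real singular value decomposition, orthogonal matrices, and the analogous block-diagonality fact.

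There is no essential obstacle here: the whole content is the elementary observation that the required linear independence can be produced on the $W$-factor alone by a unitary (orthogonal) map, so the $V\otimes V$-structure survives. The steps needing a little care are the mismatch case $n'_\tA\neq n'_\tB$, the correct invocation of ``unitary/orthogonal with a zero off-diagonal block $\Rightarrow$ block-diagonal'', and keeping straight the conventions for which factor each group element acts on (and for the $^{Q}$ action on the $W$-direction).
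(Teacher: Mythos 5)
Your proposal is correct and follows essentially the same route as the paper, which derives this corollary from Lemma~\ref{lem: svd} and Lemma~\ref{obs: non-deg}: a singular value decomposition of the flattening along the $W$-direction gives a unitary change of basis on $W$ alone that isolates a linearly independent set of frontal slices, and the block-triangular-unitary-implies-block-diagonal fact handles the converse. Your explicit treatment of the rank-mismatch case and the remark that $r$ is not a p-projection match the paper's conventions.
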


Based on Lemma \ref{obs: non-deg}, we will show that the \algprobm{Unitary $3$-Tensor Isomorphism} ($U\otimes V\otimes W$) can be reduced to \algprobm{Unitary Matrix Space Conjugacy} ($V'\otimes V'^{*}\otimes W'$).\footnote{We note that there is some ambiguity in the name here, which where the notation helps. Namely, ``unitary conjugacy of matrix spaces'' could mean either the action of $\U(V') \times \U(W')$ on $V' \otimes V'^{*} \otimes W'$ or the action of $\U(V') \times \GL(W')$ on the same space. In this paper we do not consider such ``mixed'' actions, though they are certainly interesting for future research. As a mnemonic, if we think of the matrix space itself as ``unitary'', in the sense of having a unitary structure, this lends itself to the interpretation of $\U(V') \times \U(W')$ acting.}

\begin{proposition}
	\label{lem: act1 to act3}
	There is a reduction $r: U\otimes V\otimes W\to V'\otimes V'^{*}\otimes W$ where $\dim(U) = l, \dim(V) = m, \dim(W) = n$ and $\dim(V') = l + m$ such that two tensors $\tA,\tB \in U\otimes V\otimes W$ are unitarily isomorphic if and only if $r(\tA),r(\tB) \in V'\otimes V'^{*}\otimes W$ are unitarily conjugate matrix spaces.
\end{proposition}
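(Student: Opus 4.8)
The plan is to reduce the equivalence action~(1) to the conjugacy action~(3) by the standard ``strictly block upper triangular'' construction, after first normalising to the non-degenerate case. First I would invoke Lemma~\ref{obs: non-deg} to assume that $\tA,\tB\in U\otimes V\otimes W$ are non-degenerate; this is the step that forfeits the p-projection property (already lost in Lemma~\ref{obs: non-deg}). Writing the frontal slices of $\tA$ as $A_1,\dots,A_n\in\M(l\times m,\C)$, I would let $r(\tA)\in V'\otimes V'^{*}\otimes W$ with $\dim(V')=l+m$ be the array whose $k$th frontal slice is $\hat A_k=\begin{bmatrix}\mathbf 0 & A_k\\ \mathbf 0 & \mathbf 0\end{bmatrix}\in\M(l+m,\C)$, and similarly for $\tB$. (Strictly, the non-degenerate format $l'\times m'\times n'$ may be smaller than $l\times m\times n$; one reads $\dim(V')=l+m$ as an upper bound, or direct-sums with a fixed non-degenerate dummy of complementary format.)

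For the ``if'' direction I would take a unitary isomorphism of $\tA$ and $\tB$ --- acting on the $U$-, $V$-, and $W$-factors by unitary matrices --- and conjugate each slice $\hat A_k$ by the block-diagonal unitary matrix assembled from the $U$- and $V$-parts (adjusted by adjoints as the transpose conventions of action~(3) require); this sends the top-right block $A_k$ to the matching slice of the conjugate of $r(\tB)$ and leaves the zero blocks alone, so $r(\tA)$ and $r(\tB)$ are unitarily conjugate.

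The ``only if'' direction is the crux. Given a unitary conjugacy, i.e.\ $P\in\U(l+m,\C)$ and $Q\in\U(n,\C)$ with $P\hat A_k P^{*}=\sum_{k'}Q_{k'k}\hat B_{k'}$, I would first observe that conjugation by $P$ carries the matrix space $\cA'=\linspan\{\hat A_k\}$ onto $\cB'=\linspan\{\hat B_k\}$. The point is that non-degeneracy makes the coordinate subspace $X_1:=\F^l\oplus 0$ \emph{intrinsic} to $\cA'$: every $\hat A_k$ satisfies $\im\hat A_k\subseteq X_1\subseteq\ker\hat A_k$, hence $\sum_k\im\hat A_k\subseteq X_1\subseteq\bigcap_k\ker\hat A_k$, and both inclusions are equalities because the rows of $[A_1\,|\,\cdots\,|\,A_n]$ are precisely the flattened horizontal slices of $\tA$ (so their independence forces $\sum_k\im A_k=\F^l$) and, dually, independence of the lateral slices forces $\bigcap_k\ker A_k=0$. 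The same holds for $\cB'$, so since $P$ carries $\sum_{M\in\cA'}\im M$ and $\bigcap_{M\in\cA'}\ker M$ to the corresponding subspaces of $\cB'$, we get $P(X_1)=X_1$; being unitary, $P$ then also stabilises $X_1^{\perp}=0\oplus\F^m$, so $P=P_1\oplus P_2$ with $P_1\in\U(l,\C)$, $P_2\in\U(m,\C)$. Reading off the top-right blocks of $P\hat A_k P^{*}=\sum_{k'}Q_{k'k}\hat B_{k'}$ then gives $P_1 A_k P_2^{*}=\sum_{k'}Q_{k'k}B_{k'}$, i.e.\ a unitary isomorphism of $\tA$ and $\tB$ (with $P_1,P_2^{*},Q$ all unitary).

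I expect the two delicate points to be: (i) verifying that ``non-degenerate'' in the sense delivered by Lemma~\ref{obs: non-deg} is \emph{exactly} what is needed to pin $X_1$ down as $\sum\im M=\bigcap\ker M$ over $\cA'$ (a weaker hypothesis would leave only a pair of nested subspaces $Y\subseteq X_1\subseteq Z$ that $P$ stabilises, which is not enough), and (ii) the elementary fact that a unitary matrix stabilising a subspace is block-diagonal for the orthogonal decomposition --- this is precisely what lets us avoid the FGS gadget needed for the analogous step over $\GL$, where one only gets a block-upper-triangular $P$. The orthogonal case should go through verbatim, replacing $\U$ by $\O$, conjugate transpose by transpose, and the complex SVD underlying Lemma~\ref{lem: svd} by the real one.
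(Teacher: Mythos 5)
Your proposal is correct and follows essentially the same route as the paper: the same strictly-upper-triangular gadget $\hat A_k=\bigl[\begin{smallmatrix}\mathbf 0 & A_k\\ \mathbf 0 & \mathbf 0\end{smallmatrix}\bigr]$, the same preliminary normalisation to non-degenerate tensors via Lemma~\ref{obs: non-deg}, and the same key step that non-degeneracy kills the lower-left block of $P$ so that unitarity forces $P$ to be block-diagonal with unitary blocks. The only cosmetic difference is that you pin down the block structure via the intrinsic subspace $\sum_k\im\hat A_k=\bigcap_k\ker\hat A_k=\F^l\oplus 0$, whereas the paper reads $A_iP_{2,1}=\mathbf 0$ directly off the slice equations and invokes $\bigcap_i\ker A_i=0$; these are the same computation in different packaging.
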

\begin{proof}
	\paragraph{The reduction.} Denote the $i$th frontal slice of $\tA$ by $A_i$. We construct the reduction in the following way:
	\begin{align*}
	\hat{A}_i = \begin{bmatrix}
	\mathbf{0} & A_i        \\
	\mathbf{0} & \mathbf{0}
	\end{bmatrix},
	\end{align*}
	where $\hat{A}_i \in \mathrm{M}(l+m,\C)$ is the $i$th frontal slice of $r(\tA)$.
	
	Without loss of generality, we can always assume $\tA$ and $\tB$ are non-degenerate. Then we will show that $\tA$ and $\tB$ are isomorphic if and only if $r(\tA)$ and $r(\tB)$ are isomorphic.
	
	\paragraph{For the if direction.} We assume that $r(\tA)$ and $r(\tB)$ are unitarily isomorphic, so there are $P\in\mathrm{U}(l+m,\C)$ and $Q\in\mathrm{U}(n,\C)$ such that $P^{-1}r(\tA)P=r(\tB)^Q$. Let $P$ be a block matrix:
	\begin{align*}
	\begin{bmatrix}
	P_{1,1} & P_{1,2} \\
	P_{2,1} & P_{2,2}
	\end{bmatrix},
	\end{align*}
	where $P_{1,1}$ is of size $l\times l$. Let $r(\tB)^Q$ be $r(\tB)'$ and the $i$th frontal slice of $r(\tB)'$ be $B_i'$. Since $r(\tA)P=Pr(\tB)'$, we have that
	\begin{align*}
	\begin{bmatrix}
	A_iP_{2,1} & A_iP_{2,2} \\
	\mathbf{0} & \mathbf{0}
	\end{bmatrix}=
	\begin{bmatrix}
	\mathbf{0} & P_{1,1}B_i' \\
	\mathbf{0} & P_{2,1}B_i'
	\end{bmatrix},
	\end{align*}
	where $A_iP_{2,1}=\mathbf{0}$ and $A_iP_{2,2}=P_{1,1}B_i'$ for all $i\in\left[ n \right]$. It follows that every row of $P_{2,1}$ is in the intersection of right kernels of $A_i$. Since $\tA$ is non-degenerate, $P_{2,1}$ must be a zero matrix. Thus, $P$ is a block-upper-trianglular matrix, which results in $P_{1,1}$ and $P_{2,2}$ are unitary. Therefore, we have that $P_{1,1}^{-1}\tA P_{2,2}=\tB^Q$ for $P_{1,1}\in\mathrm{U}(l,\C),P_{2,2}\in\mathrm{U}(m,\C)$ and $Q\in\mathrm{U}(n,\C)$.
	
	\paragraph{For the only if direction.} Suppose $P\tA Q=\tB^R$ where $P\in\mathrm{U}(l,\C), Q\in\mathrm{U}(m,\C)$ and $R\in\mathrm{U}(n,\C)$. Then we define $P'$ and $Q'$ as follows
	\begin{align*}
	P'=\begin{bmatrix}
	P^{-1}     & \mathbf{0} \\
	\mathbf{0} & Q
	\end{bmatrix}\quad\text{and}\quad Q'=R,
	\end{align*}
	where $P'$ and $R'$ are unitary. We can straightforwardly check that ${P'}^{-1}r(\tA)P'=r(\tB)^{Q'}$.
\end{proof}

We can similarly apply the strategy in this section to construct the reduction from \algprobm{Unitary $3$-Tensor Isomorphism} ($U\otimes V\otimes W$) to \algprobm{Bilinear Form Unitary Pseudo-isometry} ($V\otimes V\otimes W$). We record this as the following result.

\begin{proposition}
	There is a reduction $r: U\otimes V\otimes W\to V'\otimes V'\otimes W$ where $\dim(U) = l, \dim(V) = m, \dim(W) = n$ and $\dim(V') = l + m$ such that two tensors $\tA,\tB \in U\otimes V\otimes W$ are unitarily isomorphic if and only if $r(\tA),r(\tB) \in V'\otimes V'\otimes W$ are unitarily pseudo-isometric bilinear forms.
\end{proposition}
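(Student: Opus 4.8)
The plan is to reuse the small gadget from Proposition~\ref{lem: act1 to act3}, but to read its output as a tuple of bilinear forms (an element of $V'\otimes V'\otimes W$) rather than as a matrix space up to conjugacy. First I would invoke Lemma~\ref{obs: non-deg}, so that it suffices to handle non-degenerate $\tA,\tB\in U\otimes V\otimes W$ with frontal slices $A_i\in\M(l\times m,\C)$, $i\in[n]$. Define $r(\tA)\in V'\otimes V'\otimes W$ with $\dim(V')=l+m$ and $\dim(W)=n$ by letting its $i$-th frontal slice be
\[
\hat A_i=\begin{bmatrix}\mathbf 0 & A_i\\ \mathbf 0 & \mathbf 0\end{bmatrix}\in\M(l+m,\C),
\]
where the top-left block is $l\times l$, and define $r(\tB)$ analogously. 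The point of non-degeneracy is that $\bigcap_i\ker\hat A_i$ equals the canonical subspace $\mathcal I:=\F^{l}\times\{\mathbf 0\}$, since $\bigcap_i\ker A_i=0$ (the lateral slices of $\tA$ being independent); likewise for $\tB$.

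For the ``only if'' direction, assume $P\tA Q=\tB^{R}$ with $P\in\U(l,\C)$, $Q\in\U(m,\C)$, $R\in\U(n,\C)$, i.e. $PA_iQ=\sum_{i'}R_{i',i}B_{i'}$ for all $i$. Put $\hat P:=\diag(P^{t},Q)$ and $\hat R:=R$; since the transpose of a unitary matrix is unitary, $\hat P\in\U(l+m,\C)$ and $\hat R\in\U(n,\C)$. A one-line block computation gives $\hat P^{t}\hat A_i\hat P=\begin{bmatrix}\mathbf 0 & PA_iQ\\ \mathbf 0 & \mathbf 0\end{bmatrix}$, which is exactly the $i$-th frontal slice of $r(\tB)^{\hat R}$, so $r(\tA)$ and $r(\tB)$ are unitarily pseudo-isometric.

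For the ``if'' direction, assume $\hat P^{t}r(\tA)\hat P=r(\tB)^{\hat R}$ with $\hat P\in\U(l+m,\C)$, $\hat R\in\U(n,\C)$, and write $\hat P=\begin{bmatrix}P_{11}&P_{12}\\ P_{21}&P_{22}\end{bmatrix}$ in $l,m$ blocks. Expanding $\hat P^{t}\hat A_i\hat P$ and matching with the $i$-th slice $\begin{bmatrix}\mathbf 0 & B_i'\\ \mathbf 0 & \mathbf 0\end{bmatrix}$ of $r(\tB)^{\hat R}$ (here $B_i'=\sum_{i'}\hat R_{i',i}B_{i'}$) yields $P_{11}^{t}A_iP_{21}=\mathbf 0$ and $P_{12}^{t}A_iP_{21}=\mathbf 0$ for all $i$; stacking these and using that $\hat P^{t}$ is invertible gives $A_iP_{21}=\mathbf 0$ for all $i$, and non-degeneracy of $\tA$ then forces $P_{21}=\mathbf 0$. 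Hence $\hat P$ is block-upper-triangular, and — because a unitary block-upper-triangular matrix is block-diagonal with unitary diagonal blocks — $\hat P=\diag(P_{11},P_{22})$ with $P_{11}\in\U(l,\C)$, $P_{22}\in\U(m,\C)$. The surviving equation is then $P_{11}^{t}A_iP_{22}=\sum_{i'}\hat R_{i',i}B_{i'}$, i.e. $(P_{11}^{t},P_{22},\hat R)$ is a unitary isomorphism from $\tA$ to $\tB$.

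The orthogonal case over $\R$ runs identically: transposes become inverses, and an orthogonal block-upper-triangular matrix is again block-diagonal with orthogonal blocks. The step I expect to need the most care — and the only genuine departure from the conjugacy reduction of Proposition~\ref{lem: act1 to act3} — is deducing that $\hat P$ is block-upper-triangular: since the pseudo-isometry action $M\mapsto\hat P^{t}M\hat P$ is a congruence rather than a conjugation, it does not preserve kernels directly, so one must route through the invertibility of $\hat P$ (equivalently, note that $\ker(\hat P^{t}M\hat P)=\hat P^{-1}\ker M$) together with non-degeneracy of $\tA$ to conclude $P_{21}=\mathbf 0$; everything else is a routine block-matrix check, and the dimension count $\dim(V')=l+m$ is immediate.
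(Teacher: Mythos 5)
Your proof is correct and is exactly the argument the paper has in mind: the paper states this proposition without proof, saying only that one ``similarly'' applies the strategy of Section~\ref{subsec:UVWtoUUstarW}, and your write-up is a faithful instantiation of that strategy (non-degeneracy via Lemma~\ref{obs: non-deg}, the zero-padded slices $\bigl[\begin{smallmatrix}\mathbf 0 & A_i\\ \mathbf 0 & \mathbf 0\end{smallmatrix}\bigr]$, and the observation that a block-triangular unitary matrix is block-diagonal). Your extra care in deducing $P_{21}=\mathbf 0$ under congruence rather than conjugation --- stacking the two block equations and using that the first block-row of $\hat P$ has full rank --- is precisely the point that needs adjusting relative to Proposition~\ref{lem: act1 to act3}, and you handle it correctly.
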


\subsection{Reduction from \algprobm{Unitary $3$-Tensor Isomorphism} to \algprobm{Unitary Algebra Iso.} ($V\otimes V\otimes V^{*}$) and \algprobm{Unitary Equivalence of Noncommutative Cubic Forms}  ($V\otimes V\otimes V$)}\label{subsec:UUWtoUUU}

\begin{proposition}
	There is a reduction from \algprobm{Bilinear Form Unitary Pseudo-isometry} to \algprobm{Unitary Algebra Isomorphism} and to \algprobm{Unitary Equivalence of Noncommutative Cubic Forms}.
	
	In symbols, there are reductions
	$$
	r \colon V\otimes V\otimes W\to V'\otimes V'\otimes V'^{*} \quad \text{and} \quad r' \colon V\otimes V\otimes W\to V'\otimes V'\otimes V'
	$$
	where $\dim(V')=\dim(V)+\dim(W)$ such that two bilinear forms $\tA,\tB \in V \otimes V \otimes W$ are unitarily pseudo-isometric if and only if $r(\tA)$ and $r(\tB)$ are unitarily isomorphic algebras, if and only if $r'(\tA)$ and $r'(\tB)$ are unitarly equivalent noncommutative cubic forms.
\end{proposition}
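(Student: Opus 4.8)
The plan is to adapt the $\GL$-reductions of \cite{GQ21} (actions $(2)\to(4)$ and $(2)\to(5)$), using the extra leverage that over a form-preserving group a transformation fixing a subspace automatically fixes its orthogonal complement. Write $m=\dim(V)$, $n=\dim(W)$. First I would apply Corollary~\ref{obs: tensor to mspace} to assume without loss of generality that the $W$-direction (frontal) slices $A_1,\dots,A_n$ of $\tA$ (and of $\tB$) are linearly independent; this is exactly the non-degeneracy the invariants below need. Then set $V'=V\oplus W$, so $\dim(V')=m+n$, and equip $V'$ with the orthogonal direct sum of the standard forms on $V$ and $W$, so that $V=W^{\perp}$ and this is the standard form defining $\cG(V')$. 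Let $\beta_{\tA}\colon V\times V\to W$ be the bilinear map encoded by $\tA$, i.e.\ $\beta_{\tA}(x,y)=\sum_k (x^t A_k y)\,e_k$. Define $r(\tA)$ to be the nilpotent algebra with underlying space $V'$ and product $(x_1+w_1)(x_2+w_2):=\beta_{\tA}(x_1,x_2)\in W$, all other products zero; concretely its structure-constant array places $\tA$ in the $V\otimes V\otimes W$ block and zeros elsewhere, viewed in $V'\otimes V'\otimes V'^{*}$. For $r'$ I would take the \emph{same} block-embedded array, now read as an element of $V'\otimes V'\otimes V'$, i.e.\ the trilinear form $r'(\tA)(v_1,v_2,v_3):=\langle\beta_{\tA}((v_1)_V,(v_2)_V),(v_3)_W\rangle_W$ with $(\cdot)_V,(\cdot)_W$ the projections; equivalently $r'$ is $r$ followed by the $\cG$-equivariant identification $V'^{*}\cong V'$ induced by the form — for $\O$ this changes nothing, since $P^{-t}=P$ makes actions $(4)$ and $(5)$ literally the same problem, and for $\U$ the only effect is a complex conjugation in the third slot, immaterial for what follows.

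The forward direction (pseudo-isometry $\Rightarrow$ isomorphism) I expect to be routine: given a unitary (resp.\ orthogonal) pseudo-isometry $(P,Q)\in\cG(V)\times\cG(W)$ between $\tA$ and $\tB$, the block-diagonal $\phi:=P\oplus Q$ lies in $\cG(V')$ because $V'=V\perp W$ is an orthogonal direct sum, and one checks directly from the definitions that $\phi$ is an algebra isomorphism $r(\tA)\to r(\tB)$ (resp.\ an equivalence $r'(\tA)\to r'(\tB)$): products of $V$-vectors are carried by $Q$ into $W$ exactly as the pseudo-isometry prescribes, and all other products/entries stay zero. For the backward direction (isomorphism $\Rightarrow$ pseudo-isometry), the key step is to recover $W$ canonically from the constructed object. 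For the algebra, $W$ equals the algebra square $r(\tA)\cdot r(\tA)=\linspan(\im\,\beta_{\tA})$, and this is \emph{all} of $W$ precisely because the $A_k$ are linearly independent; any algebra isomorphism preserves the square, so $\phi(W)=W$. For the trilinear form, the coordinate subspace $V$ equals $\{v\in V' : r'(\tA)(u,u',v)=0\ \forall u,u'\in V'\}$, again by linear independence of the $A_k$, and since this subspace is preserved by equivalences, $\phi(V)=V$, hence $\phi(W)=\phi(V^{\perp})=W$. In either case $\phi(W)=W$, and because $\phi$ is unitary (resp.\ orthogonal) with $V=W^{\perp}$, it also satisfies $\phi(V)=V$; thus $\phi=P\oplus Q$ is block-diagonal with $P\in\cG(V)$, $Q\in\cG(W)$. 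Restricting the algebra-isomorphism identity (resp.\ the form-equivalence identity) to the $V\otimes V\otimes W$ block then exhibits $(P,Q)$, up to the standard transpose/inverse bookkeeping (harmless here since $P^{-t}=\overline{P}$ for unitary and $P^{-t}=P$ for orthogonal), as a unitary (resp.\ orthogonal) pseudo-isometry of $\tA$ and $\tB$.

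The one point requiring real care is the canonical recovery of $W$ in the backward direction: this is exactly where the non-degeneracy from Corollary~\ref{obs: tensor to mspace} is used, since without it the algebra square (resp.\ the kernel of the form in its third slot) is only a proper subspace of $W$ (resp.\ a proper superspace of $V$) and $\phi$ need not preserve $W$. Everything else is bookkeeping. I would stress that, unlike the corresponding $\GL$-reductions in \cite{GQ21} and unlike the path-algebra arguments needed for Theorem~\ref{thm:dto3}, here fixing $W$ yields a block-\emph{diagonal} rather than merely block-triangular transformation, because it also fixes $W^{\perp}=V$; so no Wedderburn/Mal'cev-type structure theory is needed for this step.
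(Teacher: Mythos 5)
Your proposal is correct and follows essentially the same route as the paper's proof: the identical block-embedding of $\tA$ into $V'\otimes V'\otimes V'^{*}$ (resp.\ $V'\otimes V'\otimes V'$) with $V'=V\oplus W$, the same appeal to Corollary~\ref{obs: tensor to mspace} for linear independence of the frontal slices, and the same use of unitarity to upgrade block-triangularity to block-diagonality. Your invariant-theoretic phrasing of the key backward step (recovering $W$ as the algebra square, resp.\ $V$ as the radical of the form in its third slot) is just a coordinate-free restatement of the paper's slice computation forcing $P_{2,1}=\mathbf{0}$.
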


\begin{proof}
	\paragraph{The construction.} Given a tensor $\tA\in V\otimes V\otimes W$ whose frontal slices are $A_i$, construct an array $\tA'\in\T((l+m)\times (l+m)\times (l+m),\C)$ of which the frontal slices are
	\begin{align*}
	A'_i = \mathbf{0}~\text{for}~i\in\left[ l \right]\quad \text{and} \quad A'_i = \begin{bmatrix}
	A_{i-l}    & \mathbf{0} \\
	\mathbf{0} & \mathbf{0}
	\end{bmatrix}~\text{for}~i\in\left[ l+1,l+m \right].
	\end{align*}
	Let $\hat{\tA}$ represent the tensor in $V'\otimes V'\otimes V'^{*}$ corresponding to entries defined by $\tA'$, and denote $\tilde{\tA}$ by the tensor in $V'\otimes V'\otimes V'$ corresponding to entries defined by $\tA'$. Note that by Corollary \ref{obs: tensor to mspace}, we can always assume that the frontal slices of $\tA$ are linearly independent, so the last $m$ slices of $\tA'$ are linearly independent as well. We will show that $\tA,\tB\in V\otimes V\otimes W$ are isomorphic if and only if $\hat{\tA},\hat{\tB}\in V'\otimes V'\otimes V'^*$ are isomorphic, and $\tA,\tB$ are isomorphic if and only if $\tilde{\tA},\tilde{\tB}\in V'\otimes V'\otimes V'$ are isomorphic.
	
	\paragraph{The only if direction.} Given $P\in\mathrm{U}(l,\C)$ and $Q\in\mathrm{U}(m,\C)$ such that $P^t\tA P=\tB^Q$, set $\hat{P}$ and $\tilde{P}$ to be $\diag(P,Q^t)$ and $\diag(P,Q^{-1})$ respectively, where $\hat{P}$ and $\tilde{P}$ are unitary. Then we can straightforwardly derive that $\hat{P}^t\hat{\tA}\hat{P}=\hat{\tB}^{\hat{P}^t}$ and $(\tilde{P}^t\tilde{\tA}\tilde{P})^{\tilde{P}}=\tilde{\tB}$.
	
	\paragraph{The if direction.} We first consider the $V'\otimes V'\otimes V'^{*}$ case. Assume there is a matrix $P\in\mathrm{U}(l+m,\C)$ such that $P^t\hat{\tA}P=\hat{\tB}^{P^{t}}$. Then we write $P$ as $\begin{bmatrix}
	P_{1,1} & P_{1,2} \\
	P_{2,1} & P_{2,2}
	\end{bmatrix}$, where $P_{1,1} \in \mathrm{M}(l,\C)$. Consider the first $l$ slices $B''_i$ of $\hat{\tB}^{P^{t}}$,
	\begin{align*}
	B''_i=P^t\hat{\tA}_iP=\mathbf{0}.
	\end{align*}
	Since the last $m$ slices of $\hat{\tA}$ are linearly independent, we will have that $P_{2,1} = \mathbf{0}$. It follows that $P_{1,1}$ and $P_{2,2}$ are unitary. The equivalence of the last $m$ slices of $P^t\hat{\tA}P$ and $\hat{\tB}^{P^{t}}$ yields that $P_{1,1}^t\tA P_{1,1}=\tB^{P_{2,2}^t}$, which completes the proof of the if direction for $V'\otimes V'\otimes V'^*$.
	
	The proof for the if direction of $V'\otimes V'\otimes V'$ case is similar to the above.
\end{proof}

\section{Proof of Theorem~\ref{thm:dto3}}\label{sec:dto3}

We present the proof for unitary groups, and the argument is essentially the same for orthogonal groups.

Let $\tA, \tB$ be two $d$-way arrays in $\T(n_1\times\dots\times n_d, \F)$. We will exhibit an algorithm $T$ such that $T(\tA)$ is an algebra on $\F^m$ where $m=\poly(n_1, \dots, n_d)$, and such that $\tA$ and $\tB$ are unitarily isomorphic as $d$-tensors if and only if $T(\tA)$ and $T(\tB)$ are unitarily isomorphic as algebras. We can then apply Theorem~\ref{thm:5actions} to reduce to \algprobm{Unitary $3$-Tensor Isomorphism}. Therefore, in the following we focus on the step of reducing \algprobm{Unitary $d$-Tensor Isomorphism} to \algprobm{Unitary Algebra Isomorphism}.

\paragraph{Background on quivers and path algebras.} A \emph{quiver} is a directed multigraph $G=(V, E, s, t)$, where $V$ is the vertex set, $E$ is the arrow
set, and
$s, t:E\to V$ are two maps indicating the source and target of an arrow. %$a\in E$.

A path in $G$ is 
the concatenation of edges $p=e_1,e_2,\ldots, e_n$, where $e_i\in E$ for $i\in
[n]$, such that $s(e_{i+1})=t(e_i)$ for $i\in[n-1]$. $s(p)=s(e_1)$ is the source
of $p$, $t(p)=t(e_n)$ is the target of $p$ and $l(p)=n$ is
the length of $p$. For
a consistent notation including the vertex, we define the source $s(v)$ and target
$t(v)$ for each vertex $v\in V$ by $s(v)=t(v)=v$, and we regard the length $l(v)$
of every vertex $v$ as $0$.
Note that $V$ consists of
paths of length $0$, and $E$ consists of paths of length $1$.

Let $\F$ be a field. The \emph{path algebra} of $G$, denoted as $\Path_\F(G)$, is
the free algebra generated by $V\cup E$ modulo the relations generated by:
\begin{enumerate}
	\item For $v, v'\in V$, $vv'=v$ if $v=v'$, and $0$ otherwise.
	\item For $v\in V$ and $e\in E$, $ve=e$ if $v=s(e)$, and $0$ otherwise. And $ev=e$
	if $v=t(e)$, and $0$ otherwise.
	\item For $e, e'\in E$, $ee'=0$ if $t(e)\neq s(e')$.
\end{enumerate}

In this paper we make use of the following quiver. Note that this is different from the quiver used in \cite{GQ21}; this difference leads to some significant simplifications in the argument, and allows the argument to go through for unitary and orthogonal groups (it is unclear to us whether the original argument in \cite{GQ21} does so).
\begin{figure}[!htbp]
	\[
	\xymatrix{
		v_1 \ar[r] \ar@/^2pc/[r]^{x_{1,1}} \ar@/^1pc/[r]^{x_{1,2}} \ar[r]^{...}
		\ar@/_/[r]_{x_{1,n_1}}
		%\ar@/_5pc/[rrrr]^{x_{d,1}} \ar@/_5.5pc/[rrrr]^{...} 
		%\ar@/_6pc/[rrrr]_{x_{d,n_d}} 
		&
		v_2 \ar[r] \ar@/^2pc/[r]^{x_{2,1}} \ar@/^1pc/[r]^{x_{2,2}} \ar[r]^{...}
		\ar@/_/[r]_{x_{2,n_2}} &
		v_3 \ar[r] \ar@/^2pc/[r]^{x_{3,1}} \ar@/^1pc/[r]^{x_{3,2}} \ar[r]^{...}
		\ar@/_/[r]_{x_{3,n_3}} &
		\dotsb \ar[r] \ar@/^2pc/[r]^{x_{d-1,1}} \ar@/^1pc/[r]^{x_{d-1,2}} \ar[r]^{...}
		\ar@/_/[r]_{x_{d-1,n_{d-1}}} &
		v_d \ar[r] \ar@/^2pc/[r]^{x_{d,1}} \ar@/^1pc/[r]^{x_{d,2}} \ar[r]^{...}
		\ar@/_/[r]_{x_{d,n_{d-1}}} &
		v_{d+1}
		\\
	}
	\]
	\caption{ \label{fig:graph} The quiver $G$ we use in this paper.}
\end{figure}
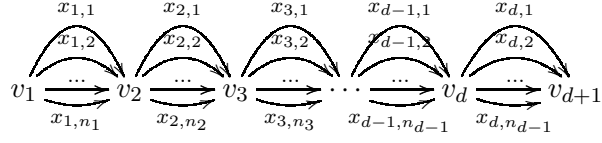
Note that $G=(V, E, s, t)$ where $V=\{v_1, \dots, v_{d+1}\}$, $E=\{x_{i, j}\mid
i\in[d], j\in[n_i]\}$, $s(x_{i, j})=v_i$ and $t(x_{i, j})=v_{i+1}$.

\begin{proof}[Proof of Theorem~\ref{thm:dto3}]
	Let $f, g\in U_1\otimes U_2\otimes\dots\otimes U_d$ be two tensors, where
	$U_i=\F^{n_i}$ for $i\in[d]$. We can encode $f$ in
	$\Path_\F(G)$ as follows. Recall that $e_i$ denotes the $i$th standard basis
	vector. Suppose $f=\sum_{(i_1, \dots, i_d)}\alpha_{i_1, \dots, i_d}
	e_{i_1}\otimes \dots \otimes e_{i_d}$, where the summation is over $(i_1, \dots,
	i_d)\in [n_1]\times
	\dots \times [n_d]$ and $\alpha_{i_1, \dots, i_d}\in \F$. Then let $\hat{f}\in
	\Path_\F(G)$ be
	defined as $\hat{f}=\sum_{(i_1, \dots, i_d)}\alpha_{i_1, \dots, i_d}x_{1, i_1}
	x_{2, i_d} \dots x_{d, i_d}$, where $(i_1, \dots, i_d)\in [n_1]\times \dots \times
	[n_d]$.
	
	Let $R_f:=\Path_\F(G)/(\hat{f})$ and $R_g:=\Path_\F(G)/(\hat{g})$. We will show
	that $f$
	and $g$ are unitarily isomorphic as tensors if and only if $R_f$ and $R_g$ are unitarily isomorphic as
	algebras.
	
	\paragraph{Tensor isomorphism implies algebra isomorphism.} Let $(P_1, \dots,
	P_d)\in \mathrm{U}(n_1, \C)\times \dots\times\mathrm{U}(n_d, \C)$ be a tensor isomorphism from
	$f$ to $g$. Then
	$P_i$ naturally acts on the linear space $\langle x_{i, 1}, \dots, x_{i,
		n_i}\rangle$. Together with the identity matrix $I_{d+1}$ acting on $\langle v_1,
	\dots, v_{d+1}\rangle$, we claim that they form an algebra isomorphism from $R_f$
	to $R_g$.
	
	We will show that this is a homomorphism, and then verify that it is indeed an isomorphism. This part is essentially the same as \cite{GQ21}. 
	
	To show that it is a homomorphism, we first examine the quiver relations. This homomorphism $R_f\to R_g$ is induced by a linear map, this map $P$ is defined by $P(v_i)=v_i$ and
	\begin{align*}
	P(x_{i,j})=\sum_{k=1}^{n_i} (P_i)_{jk}y_{i,k}\quad\text{for}~i=1,\ldots,d-1,
	\end{align*}
	where $y_{1,1},\ldots,y_{d,n_d},v_1,\ldots,v_d$ denote generators of $R_g$. Let $x_{1,1},\ldots,x_{d,n_d},v_1,\ldots,v_d$ be generators of $R_f$, and then the following quiver relations need to be checked:
	\begin{align*}
	v_iv_{i'}        & = \delta_{i,i'}v_i            \\
	v_ix_{i',j}      & = \delta_{i,i'}x_{i',j}       \\
	x_{i,j}v_{i'}    & = \delta_{i+1,i'}x_{i,j}      \\
	x_{i,j}x_{i',j'} & = 0\quad\text{if}~i+1\neq i'.
	\end{align*}
	
	It's not hard to examine the first three which involve the $v_i$, as 
	\begin{align*}
	P(v_iv_{i'})     & = P(v_i)P(v_{i'})  = v_iv_{i'}   = \delta_{i,i'}P(v_i),                        \\
	P(v_ix_{i',j})   & = P(v_i)P(x_{i',j})   = v_i \sum_{k=1}^{n_{i'}} (P_i')_{jk}y_{i',k}  = \delta_{i,i'}P(x_{i',j}),                   \\
	P(x_{i,j}v_{i'}) & = P(x_{i,j})P(v_{i'})   = \sum_{k=1}^{n_i} (P_i)_{jk}y_{i,k}v_{i'}   = \delta_{i+1,i'}P(x_{i,j}).
	\end{align*}
	For the last relation,
	\begin{align*}
	P(x_{i,j}x_{i',j'}) & = P(x_{i,j})P(x_{i',j'})                                                       \\
	& = \sum_{k=1}^{n_i} (P_i)_{jk}y_{i,k}\sum_{k=1}^{n_{i'}} (P_{i'})_{j'k}y_{i',k} \\
	& = \sum_{k=1}^{n_i}\sum_{k=1}^{n_{i'}}(P_i)_{jk}(P_{i'})_{j'k}y_{i,k}y_{i',k}   \\
	& = 0 \quad \text{if}~i+1\neq i'.
	\end{align*}
	Therefore, the map $R_f\to R_g$ induced by $P$ is an algebra homomorphism.
	
	Let $n=\max_i\{n_i\}$. To prove the map $R_f\to R_g$ is an algebra isomorphism, it requires to check the dimension of $R_f$ first:
	\begin{align*}
	\dim(R_f) & = \#\{v_i\} + \sum_{i=1}^{d}\sum_{j=0}^{d-i}\#\{\text{paths from}~v_i~\text{to}~v_{i+j}\} \\
	& = d + \sum_{i=1}^{d}\sum_{j=0}^{d-i}\prod_{k=i}^{i+j}n_j                                  \\
	& \leq d + \sum_{i=1}^{d}\sum_{j=0}^{d-i}n^d                                                \\
	& \leq O(d^2n^d).
	\end{align*}
	If $d$ is fixed, the dimension of $R_f$ is polynomial with $n$. Next, as $P$ is an isomorphism of ($\sum_{i=1}^{d}n_i+d$)-vector spaces, it follows that $R_f\to R_g$ induced by $P$ is surjective on all the generators of $R_g$ and hence it's surjective on the whole $R_g$. Finally, since $\dim(R_f)=\dim(R_g)<\infty$, the map $R_f\to R_g$ is a bijection and it's naturally an algebra isomorphism from $R_f$ to $R_g$.
	
	\paragraph{Algebra isomorphism implies tensor isomorphism.} This part of the proof is new, compared to the corresponding part in \cite{GQ21}.
	
	Let
	$\phi:\Path_\F(G)/(\hat{f})\to \Path_\F(G)/(\hat{g})$ be an algebra isomorphism,
	which is determined by the images of $v_i$, $x_{j, k}$ under $\phi$.
	
	Note that $\Path_\F(G)$ is linearly spanned by paths in $G$, so it is naturally
	graded, and
	we use $\Path_\F(G)_\ell$ denotes the linear space of $\Path_\F(G)$ spanned by
	paths of length exactly $\ell$.
	
	First, note that $\phi(\hat{f})=\alpha\cdot \hat{g}+\text{a linear combination of
		quiver relations}$, where $\alpha\in \F$.
	
	Second, we claim that the coefficient of $v_i$ in $\phi(x_{j, k})$ must be zero
	for
	any $i, j, k$. If not, suppose $\phi(x_{j, k})=\gamma\cdot v_i+M$ where
	$\gamma\neq 0$, and $M$
	denotes other terms not containing $v_i$. On the one hand, $\phi(x_{j, k}^2)=0$
	because $x_{j, k}^2=0$ by the quiver relations. On the other hand, $\phi(x_{j,
		k})^2=(\gamma\cdot
	v_i+M)^2=\gamma^2\cdot v_i^2+M'=\gamma^2\cdot v_i+M'$ where $M'$ denotes other
	terms, which cannot
	contain $v_i$. So $\phi(x_{j, k})^2$ is nonzero, contradicting $\phi(x_{j,
		k}^2)=0$ and $\phi$ being an algebra isomorphism.
	
	By the above, it follows for any path $P$ (a product of $x_{i, j}$'s) of length
	$\ell\geq 1$, $\phi(P)$ is a linear combination of paths of length $\geq \ell$.
	This implies that, if we express $\phi$ in the linear basis of
	$\Path_\F(G)/(\hat{f})$, $(v_1, \dots, v_{d+1}, x_{i, j}, \text{paths of length
	}2, \dots, \text{paths of length }d)$, then $\phi$ is a block-lower-triangular
	matrix, where the each block is determined by the path lengths. That is, the first block is indexed by $(v_1, \dots, v_{d+1})$, the second block is indexed by $(x_{i, j})$, the third block is indexed by paths of length $2$, and so on. 
	
	Third, we claim that for $1\leq i<j\leq d+1$, the coefficient of $x_{i, k}$ in
	$\phi(x_{j, k'})$ must be zero. If not, then let $P$ be a path of length $d-i$
	starting from $v_{i+1}$. Because of the block-lower-triangular matrix structure and that $\phi$ is an isomorphism,
	we know that there exists a path $P'$ of
	length $d-i$, such that the coefficient of $P$ in $\phi(P')$ is nonzero. Then
	$\phi(x_{j, k'}\cdot P')=\phi(x_{j, k'})\cdot \phi(P')=(\beta\cdot x_{i,
		k}+M)\cdot
	(\gamma\cdot P+N)=\beta\cdot \gamma\cdot x_{i, k}\cdot P+ L$, where $M$, $N$ and
	$L$ denote appropriate
	other terms, and $\beta, \gamma\in \F$ are non-zero. Note that $x_{i, k}\cdot P$
	cannot be cancelled from other terms. This implies that $\phi(x_{j, k'}\cdot P')$
	is non-zero. However, $x_{j, k'}\cdot P'$ has to be zero because $P'$ is of length
	$d-i$, so it starts from some variable $x_{i+1, k''}$. This leads to the desired
	contradiction.
	
	By the above, if we restrict $\phi$ to the linear subspace $\langle x_{i,
		j}\rangle$ in the linear basis
	$$(x_{1, 1}, \dots, x_{1, n_1}, \dots, x_{d, 1},
	\dots, x_{n_d}), $$
	then $\phi$ is again in the block-lower-triangular form,
	where the blocks are determined by the first index of $x_{i, j}$. That is, the first block is indexed by $x_{1, j}$ for all $j$, the second block is indexed by $x_{2, j}$ for all $j$, and so on.
	
	We now can take the diagonal block of $\phi$ on $(x_{i, 1}, \dots, x_{i, n_i})$,
	and let the resulting (invertible) matrix be $P_i$. These matrices $P_1, \dots,
	P_d$ together determine a linear map $\psi$ on $\langle x_{i, j}\rangle$. By
	comparing degrees, we see that $\psi(\hat{f})=\alpha\cdot \hat{g}$. Now suppose
	$\F$ contains $d$th roots. We can then obtain $(1/\alpha^{1/d}\cdot P_1,
	1/\alpha^{1/d}\cdot P_2, \dots, 1/\alpha^{1/d}\cdot P_d)\cdot f=g$.
	
	Getting back to our original goal, we see that if $\psi$ is unitary, then the
	block-lower-triangular form of $\psi$ implies that it is actually block-diagonal,
	and the diagonal blocks are all unitary as well. This shows that $P_i$'s are
	unitary, and $f$ and $g$ are unitarily isomorphic.
\end{proof}

\appendix

\section{Polynomial systems for \algprobm{Tensor Isomorphism} and related problems}\label{app:groebner}

We provide more details for the Gr\"obner basis experiments described in Section~\ref{subsec:motivation}. 

Let us first examine how to formulate \algprobm{Tensor Isomorphism} as solving a system of polynomial equations. Let $\tA, \tB\in \T(n\times n\times n, \F)$ be two 3-way arrays. Let $X, Y, Z$ be three $n\times n$ variable matrices. Then $X\tA^Z Y=\tB$ can be seen as encoding $n^3$ many cubic polynomials in $3n^2$ many variables, whose coefficients are determined by the entries of $\tA$ and $\tB$. To ensure that $X, Y, Z$ are invertible, we introduce new variables $x, y, z$, and include polynomials $\det(X)\cdot x=1$, $\det(Y)\cdot y=1$, and $\det(Z)\cdot z=1$.  (This is similar to \cite{GGPS}, although there instead of using the determinant they introduce twice as many new variables, and equations $XX'=X'X=I$ and similarly for $Y$ and $Z$.  This reduces degree compared to our equations here, but at the expense of many more variables). 
This gives a system of polynomial equations, which has a solution over $\F$ if and only if $\tA$ and $\tB$ are isomorphic as tensors over $\F$. Then this problem can be solved by e.g. the Gr\"obner basis algorithm.

Of course the above is just one approach. Indeed, from the Gr\"obner basis viewpoint, it is more desirable to consider $X\tA Y=\tB^Z$ so we get quadratic equations rather than cubic ones. Interested readers may refer \cite{TangDJPQS22} for more optimisations as such.

The instances we do experiments on are drawn as follows. Note that if $\tA$ and $\tB$ are both random instances, then with high probability they are not isomorphic. To get isomorphic pairs instead, we can sample a random $\tA$ and random invertible matrices $R, S, T$, and set $\tB=(R, S, T)\circ \tA$. (This is the setting used in cryptographic schemes based on $\cc{TI}$-hardness, e.\,g., \cite{JQSY19}.) The pair $(\tA, \tB)$ is then set as the input.

If orthogonal isomorphism is needed, we can set $X^t X=I$ which is a system of quadratic equations. We can also sample a random orthogonal matrix over $\F_q$ by existing functionality of Magma. 

We now introduce the exact problem to be tackled by our actual experiments. Let $\phi, \psi:\F_q^n\times\F_q^n\times\F_q^n\to\F_q$ be two alternating trilinear forms. We say that $\phi, \psi$ are isomorphic, if there exists $A\in\GL(n, q)$, such that $\phi(Au, Av, Aw)=\psi(u,v,w)$ for any $u, v, w\in \F_q^n$. To decide whether two alternating trilinear forms are isomorphic is known to be $\cc{TI}$-complete \cite{GQT22}. This problem can be similarly formulated as solving systems of polynomial equations, and some technical issues also follow the ideas as described above.

\section{The proof outline of Theorem~\ref{thm:FGS_parameter}}\label{app:proof}

In this subsection we give an outline for the proof of Theorem~\ref{thm:FGS_parameter}. The goal here is to give a guided exposition of some main technical steps in the proof of \cite[Theorem 1.1]{FGS19}, so the reader may verify the parameters in conjunction with \cite{FGS19} more easily. This requires us to examine the constructions in \cite{FGS19} to
compute the parameters explicitly. 

\subsection{Step 1: Block isomorphism and plain
	isomorphism}\label{subsubsec:block-iso}

The first notion is the block isomorphism of $3$-tensors. Let $\tA$ and $\tB$
be two plain $3$-tensors in $U\otimes V\otimes W$. Let $U=U_1\oplus \dots \oplus
U_e$, $V=V_1\oplus \dots \oplus V_f$, and $W=W_1\oplus \dots \oplus W_g$ be direct
sum decompositions. Let $\cE\leq\GL(U)$ be the subgroup of $\GL(U)$ that preserves
this direct sum decomposition, that is, $E$ consists of those invertible linear
maps that sends $U_i$ to $U_i$ for every $i\in[e]$. Similarly let $\cF$ (resp. $\cG$)
be the subgroup of
$\GL(V)$ (resp. $\GL(W)$) preserving the direct sum decomposition. We say that $\tA$
and $\tB$ are \emph{block-isomorphic} with respect to these direct sum
decompositions if $\tA$ and $\tB$ are in the same orbit under $\cE\times \cF\times \cG$.

\begin{proposition}[{Rephrase of \cite[Theorem 2.1]{FGS19}}]\label{prop:block}
	Let $U=U_1\oplus \dots \oplus
	U_e$, $V=V_1\oplus \dots \oplus V_f$, and $W=W_1\oplus \dots \oplus W_g$ be direct
	sum decompositions of vector spaces $U$, $V$, and $W$. Then there exists an
	algorithm $B$ that takes $\tA, \tB\in U\otimes V\otimes W$ and outputs vector spaces
	$U', V', W'$ and $B(\tA), B(\tB)\in U'\otimes V'\otimes W'$ such that $\tA$ and $\tB$ are
	block-isomorphic if and only if $B(\tA)$ and $B(\tB)$ are isomorphic. The algorithm
	runs in time polynomial in the maximum dimension over $U, V, W$, and this maximum
	dimension is upper bounded
	by $\poly(\dim(U), \dim(V), \dim(W), 2^e, 2^f, 2^g)$.
\end{proposition}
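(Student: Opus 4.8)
The plan is to prove Proposition~\ref{prop:block} exactly as in \cite{FGS19}, by a \emph{labeling‑gadget} construction, while keeping track of the parameters. The idea is to attach to each summand in each of the three given decompositions a small block of extra slices that are (rectangular) identity matrices of carefully chosen ranks, placed in fresh coordinate regions, so that rank invariants of the enlarged tensor force any plain isomorphism of $B(\tA)$ and $B(\tB)$ to preserve all three decompositions, i.e.\ to be a block isomorphism.

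First I would fix the coordinate bookkeeping. View $\tA \in U\otimes V\otimes W$ through its frontal slices, so that an isomorphism $(P,Q,R)\in \GL(U)\times\GL(V)\times\GL(W)$ acts by $P,Q$ on the rows/columns of the slices and by $R$ as a change of basis on the slice index. For the decomposition $U = U_1\oplus\dots\oplus U_e$ choose positive integers $r_1,\dots,r_e$ such that the subset sums $\sum_{i\in S} r_i$ are pairwise distinct over all $S\subseteq[e]$; the cheapest choice has $r_i$ growing geometrically, which is precisely where the factor $2^e$ in the dimension bound enters (and symmetrically $2^f$, $2^g$ for $V$ and $W$). Enlarge the three spaces so that the $U$‑decomposition gets fresh regions $\hat U_1,\dots,\hat U_e$ (and matching fresh ``column'' regions) of dimensions $r_1,\dots,r_e$, and likewise give the $V$‑decomposition and the $W$‑decomposition their own, mutually disjoint families of fresh regions; because the three gadget families occupy disjoint coordinates they do not interfere. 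Embed the original $\tA$ in the original block, and for each $i$ add a single new slice that is a rank‑$r_i$ identity matrix supported exactly on the fresh region attached to $U_i$ (and analogously for the $V$‑ and $W$‑summands). Call the result $B(\tA)$, define $B(\tB)$ the same way; the maximum dimension is $\poly(\dim U,\dim V,\dim W,2^e,2^f,2^g)$ and $B$ is computable in time polynomial in it.

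The easy (``only if'') direction: given a block isomorphism $(P,Q,R)$ of $\tA,\tB$ with $P=\bigoplus_i P_i$, etc., extend it to the gadget regions by transporting the block maps through the identity markers (so each marker slice is carried to the corresponding marker slice); by construction this is an isomorphism $B(\tA)\to B(\tB)$. The ``if'' direction is the crux: given a plain isomorphism $(P',Q',R')$ of $B(\tA),B(\tB)$, one computes ranks of the images of individual marker slices and of arbitrary linear combinations of slices. The disjointness of the fresh regions together with the distinctness of the subset sums $\sum_{i\in S} r_i$ forbids the marker slice for block $i$ from mixing into any combination touching another region; this forces $R'$ to preserve the span of each marker slice, which in turn forces $P'$ and $Q'$ to preserve each fresh region $\hat U_i$ (and its column counterpart), hence — since each marker is an identity — to preserve the corresponding $U_i$, and symmetrically for the $V$‑ and $W$‑decompositions. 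Restricting $(P',Q',R')$ to the original blocks then yields a block isomorphism of $\tA,\tB$.

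I expect the main obstacle to be this ``if'' direction: one must verify that the three gadget families genuinely decouple, that no unintended isomorphism arises from a transformation that permutes fresh regions among themselves or blends them with the original block, and that the rank estimates survive for \emph{all} linear combinations of slices rather than just single ones. Controlling this is exactly why the ranks $r_i$ must be chosen so that all $2^e$ subset sums are distinct, and is the source of the exponential dependence on $e,f,g$; everything else is routine linear algebra, and the construction uses only $\{0,1\}$ entries beyond those already present, so it is a constant‑free p‑projection as claimed elsewhere in the paper.
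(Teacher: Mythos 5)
Your overall strategy --- tag each summand with identity-matrix gadgets in fresh coordinate regions, with ranks chosen so that rank computations on linear combinations of slices force any isomorphism of $B(\tA),B(\tB)$ to respect the decompositions --- is indeed the approach of \cite[Theorem 2.1]{FGS19}, which this proposition simply imports (the paper gives no proof of its own, and the same gadget reappears in Figure~\ref{fig:3-tensor_isometry} and Proposition~\ref{lem: act1 to act2}). However, as written your construction has a concrete gap: you place the marker for $U_i$ as an identity matrix ``supported exactly on the fresh region attached to $U_i$,'' i.e.\ on $\hat U_i\times(\text{fresh columns})$, with \emph{both} sides fresh. Such a slice constrains only how $P'$ acts on $\hat U_i$ and how $Q'$ acts on the fresh columns; it carries no information linking $\hat U_i$ to $U_i$ itself. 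An isomorphism could therefore mix $U_1,\dots,U_e$ arbitrarily while fixing every gadget region, so your $B(\tA)\cong B(\tB)$ would hold whenever $\tA\cong\tB$ plainly, not only when they are block-isomorphic. The working gadget must have one leg in the \emph{original} coordinates: for the summand $U_i$ one attaches slices whose union, viewed from the lateral direction, contains an identity block occupying the rows of $U_i$ and a fresh column region of dimension $r_i$ (this is exactly what the blocks $I_{m+1}$, $I_{3m+2}$ in Figure~\ref{fig:3-tensor_isometry} do). Then a rank count on lateral/horizontal slices, with the $r_i$ chosen both pairwise ``separated'' and large relative to $\dim V$, $\dim W$, forces the transformation to be block-\emph{triangular} with respect to a fixed ordering of the summands.

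Two further points need attention even after fixing the gadget. First, triangularity is not the same as preservation of each $U_i$: you must then observe that, because the original tensor sits in a corner whose slices are unaffected by the off-diagonal blocks, the diagonal blocks of the triangular isomorphism already constitute a block isomorphism of $\tA,\tB$ (this is the step carried out explicitly in the ``if'' direction of Proposition~\ref{lem: act1 to act2}). Second, in the ``only if'' direction, ``transporting the block maps through the identity markers'' is not automatic: the gadget-region components must be chosen to compensate for $P_i$ acting on the $U_i$-side of each marker (e.g.\ by $P_i^{-t}$ on the matching fresh region, or a conjugated block-permutation as in the proof of Proposition~\ref{lem: act1 to act2}); over $\GL$ this is routine, but it is worth stating. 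Your identification of where the $2^e,2^f,2^g$ factors come from is correct in spirit, though in \cite{FGS19} the exponential arises from the cumulative growth of the attached ranks across summands rather than from requiring all subset sums to be distinct.
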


\subsection{Step 2: Linked-block isomorphism and block isomorphism}

The second notion is the linked-block isomorphism of $3$-tensors. Again, let $\tA$
and $\tB$
be two plain $3$-tensors in $U\otimes V\otimes W$. Let $U=U_1\oplus \dots \oplus
U_e$, $V=V_1\oplus \dots \oplus V_f$, and $W=W_1\oplus \dots \oplus W_g$ be direct
sum decompositions.
Let $\cE\leq\GL(U)$, $\cF\leq\GL(V)$ and $\cG\leq\GL(W)$ be defined as in
Section~\ref{subsubsec:block-iso}.

Let $I_U=[e]$, $I_V=[f]$, and $I_W=[g]$. Suppose two binary
relations $\sim$ and $\Join$ on $I_U\cup I_V\cup I_W$ satisfy the
following: (1) $\sim$ is an equivalence relation; (2) if $a\Join b$ then $a\not\sim
b$; and (3) if $a\Join b$, then $b\Join c\iff a\sim c$.

For convenience, we shall use $X_a$ to denote $U_a$, $V_a$, or $W_a$ depending on
whether $a\in I_U$, $a\in I_V$, or $a\in I_W$. Briefly speaking, $a \sim
b$ denotes that the corresponding two blocks are acted covariantly, and $a\Join b$
denotes that the corresponding two blocks are acted contravariantly. So if $a\sim
b$ or $a\Join b$, then $\dim(X_a)=\dim(X_b)$.

Given such binary relations $\sim$ and $\Join$, we define a block-isomorphism $X$ between $\tA$ and $\tB$ to be a \textit{linked-block-isomorphism} if for any $a,b\in I_U\cup I_V\cup I_W$, the following conditions for decompositions of $U,V$ and $W$ holds:
\begin{align*}
X_a=X_b~\text{if}~a\sim b,\qquad X_a=X_b^{-t}~\text{if}~a\Join b.
\end{align*}
\begin{proposition}[{Rephrase of \cite[Theorem 4.1]{FGS19}}]\label{prop: link-block}
	Let $U=U_1\oplus \dots \oplus
	U_e$, $V=V_1\oplus \dots \oplus V_f$, and $W=W_1\oplus \dots \oplus W_g$ be direct
	sum decompositions of vector spaces $U$, $V$, and $W$ and these decompositions satisfy conditions with respect to some binary relations $\sim$ and $\Join$ for $I_U\cup I_V\cup I_W$. Then there exists an
	algorithm $B$ that takes $\tA, \tB\in U\otimes V\otimes W$ and outputs vector spaces
	$U', V', W'$ and $B(\tA), B(\tB)\in U'\otimes V'\otimes W'$, where $U'=U_1'\oplus \dots \oplus U_{\poly(e)}'$, $V'=V_1'\oplus \dots \oplus V_{\poly(f)}'$, and $W'=W_1'\oplus \dots \oplus W_{\poly(g)}'$ such that $\tA$ and $\tB$ are
	linked-block-isomorphic if and only if $B(\tA)$ and $B(\tB)$ are block-isomorphic. The algorithm
	runs in time polynomial in the maximum dimension over $U, V, W$, and this maximum
	dimension is upper bounded
	by $\poly(e,f,g)\cdot\max(\dim(U),\dim(V),\dim(W))$.
\end{proposition}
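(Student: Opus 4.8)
The plan is to follow the construction behind \cite[Theorem~4.1]{FGS19}, making the linking gadget and its parameters explicit. The idea is a gadget in the spirit of the FGS gadget: starting from $\tA\in U\otimes V\otimes W$ together with the three decompositions and the relations $\sim,\Join$ on $I_U\cup I_V\cup I_W$, we enlarge $U,V,W$ by auxiliary blocks and adjoin to $\tA$ a family of ``identity'' slices, each supported on a pair of blocks, designed so that any block-isomorphism of the enlarged tensors must act by the \emph{same} matrix on blocks that are $\sim$-linked and by \emph{inverse-transposes} on blocks that are $\Join$-linked. Since $\sim$ is an equivalence relation and $\Join$ satisfies conditions~(2)--(3) (so it induces a partial involution on $\sim$-classes), it suffices to adjoin gadgets only for a spanning set of links: one representative per $\sim$-class, a gadget between that representative and each other member of its class, and one $\Join$-gadget between the representatives of each $\Join$-paired pair of classes; transitivity of $\sim$ then propagates the constraints to all linked pairs.

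Concretely, the basic building block is a single ``identity'' slice supported on a pair $(U_a,V_c)$, where $V_c$ is a fresh $V$-block of dimension $k=\dim X_a$ together with one fresh one-dimensional coordinate in $W$ to carry the slice: the induced action sends the slice $M$ to $E_a^{\mathrm t}MF_c$ (up to the scalar coming from the fresh $W$-coordinate), so preserving $M=I_k$ forces $F_c=(\text{scalar})\,E_a^{-t}$. By chaining a few such blocks through fresh auxiliary blocks --- and exploiting conditions~(2)--(3), which make $\Join$ compose with $\sim$ in a controlled way --- one can realize any prescribed $\sim$- or $\Join$-relation between any two of the original blocks, regardless of which of $U,V,W$ they sit in. Because $\cE\times\cF\times\cG$ is block-diagonal with respect to \emph{all} the decompositions (old and new), a gadget slice can be mixed neither with the slices carrying $\tA$ nor with the slices of any other gadget, so a block-isomorphism of the enlarged tensors restricts to a block-isomorphism of each gadget in isolation and one reads the constraint off directly. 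The one wrinkle is that a single identity slice pins the blocks only up to a scalar; this is removed by enriching each gadget with one extra slice designed to force that scalar to be $1$.

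For correctness, the forward direction is routine: a linked-block-isomorphism of $\tA$ and $\tB$ already prescribes the action on each fresh block (by the same matrix, respectively its inverse-transpose) and on the fresh bookkeeping coordinates (by the identity), and one checks that the gadget slices are then preserved, yielding a block-isomorphism of $B(\tA)$ and $B(\tB)$. The backward direction is the substantive one: from a block-isomorphism of $B(\tA)$ and $B(\tB)$ one reads off, gadget by gadget, using invertibility of the relevant diagonal blocks and the block-diagonality of $\cE\times\cF\times\cG$, that the diagonal blocks on $X_a$ and $X_b$ satisfy the prescribed relation exactly (once the scalar is pinned), so the restriction to the original blocks is a linked-block-isomorphism of $\tA$ and $\tB$.

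Finally, the parameters: the number of link gadgets is linear in $e+f+g$, and each gadget contributes at most a constant number of fresh blocks of dimension at most $\max(\dim U,\dim V,\dim W)$ together with $O(1)$ fresh coordinates, so $U',V',W'$ have dimension $\poly(e,f,g)\cdot\max(\dim U,\dim V,\dim W)$ and polynomially many blocks, and the construction is evidently polynomial-time. I expect the main obstacle to be the gadget design itself: building a gadget that forces \emph{exactly} the prescribed relation --- neither weaker nor stronger --- in every sub-case ($\sim$ versus $\Join$; linked indices in the same space versus in different spaces), cleanly killing the scalar ambiguity, and verifying that gadgets attached for different links are mutually compatible (which is precisely where conditions~(1)--(3) on $\sim$ and $\Join$ are needed). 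The stated linear-in-$\max(\dim)$ bound then reduces to careful bookkeeping.
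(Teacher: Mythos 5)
First, a point of comparison: the paper does not actually prove this proposition. It is stated in Appendix~B purely as a rephrasing of \cite[Theorem~4.1]{FGS19}, with the appendix serving only to make the parameters explicit; the construction itself is delegated entirely to \cite{FGS19}. So your proposal is an attempted reconstruction of the FGS gadget rather than an alternative to anything in this paper, and at the level of strategy it is the right reconstruction: adjoin identity slices on fresh coordinates, use the block structure to prevent mixing, and propagate constraints along a spanning set of links using conditions (1)--(3).

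The genuine gap is the scalar. You correctly identify that a single identity slice supported on $(U_a,V_c)$ and carried by a fresh one-dimensional $W$-block only forces $F_c=\lambda^{-1}E_a^{-t}$, because the group element acts on that fresh $W$-coordinate by an arbitrary nonzero scalar $\lambda$. But your proposed fix --- ``enriching each gadget with one extra slice designed to force that scalar to be $1$'' --- does not work in the naive form, and you never construct it: any additional slice carried by its own fresh $W$-coordinate introduces its own free scalar, so chaining such slices only ever constrains ratios and products of scalars, never an individual one. The linked-block-isomorphism you must produce requires $X_a=X_b$ and $X_a=X_b^{-t}$ \emph{exactly}, and the whole reduction of Theorem~\ref{thm:FGS_unitary} breaks if the recovered tensor-system isomorphism is only projective. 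The actual mechanism (as in \cite{FGS19}) is to place several identity sub-blocks inside \emph{one and the same} slice, so that they share a single scalar, and to route the chain of auxiliary blocks so that the shared scalars cancel; even then, forcing an exact $\Join$-link between two \emph{original} blocks requires a more careful arrangement than anything you describe, since the scaling symmetry $E_a\mapsto tE_a$, $F_c\mapsto tF_c$, $\lambda\mapsto\lambda/t$ preserves every single-slice constraint. Until you exhibit a concrete gadget and verify that it forces exactly the prescribed relation (and that distinct gadgets, which may share auxiliary blocks along a chain, remain compatible), the backward direction of your argument is not established. The parameter bookkeeping at the end is fine and matches the stated bound, but it is contingent on a gadget you have not built.
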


\paragraph{Acknowledgement.} 
J.A.G. was supported by NSF CAREER grant CISE-2047756. Y.Q. was partially supported by the Australian Research Council DP200100950 and LP220100332. G. T. was partially supported by the Australian Research Council LP220100332 and the Sydney Quantum Academy, Sydney, NSW, Australia. C. Z. was partially supported by the Australian Research Council DP200100950 and the Sydney Quantum Academy, Sydney, NSW, Australia. Z. C. was partially supported by the National Research Foundation, Singapore under its CQT (Centre for Quantum Technologies) Bridging Grant.
% We thank the anonymous reviewers for ITCS 2024 for their careful reading and helpful suggestions.

\bibliographystyle{alphaurl}
\bibliography{references}

\appendix

\end{document}